\documentclass{fundam}

%%%%%%%  parameters to be filled in by copy-editor  %%%%%%%%%%

\setcounter{page}{1}
\publyear{25}
\papernumber{4}
\volume{195}
\issue{1-4}
\theDOI{10.46298/fi.12451}

\versionForARXIV

%%%%%%%%%%%%%%%%%%%%%%%%%%%%%%%%%%%%%%

%20230822 

%20231019ac

%%% Packages %%%%%%%%%%%%%%%%%%%%%%%%
\usepackage[all]{xy}
\usepackage{tikz}
\usetikzlibrary{matrix,arrows}
\usepackage[mathscr]{eucal}
\usepackage{amssymb}
\usepackage{stmaryrd}
\usepackage{enumerate,graphicx}
\usepackage{url} 
\usepackage{multicol}
\usepackage{marginnote}
\numberwithin{equation}{section}

% 
% %%% Theorems %%%%%%%%%%%%%%%%%%%%%%%%%%
% \theoremstyle{plain}
% 
%  \newtheorem{thm}{Theorem}[section]
%  \newtheorem{lem}[thm]{Lemma}
%  \newtheorem{cor}[thm]{Corollary}
%  \newtheorem{prop}[thm]{Proposition}
%  %\newtheorem{claim}{Claim}
%  %\theoremstyle{definition}
%  \newtheorem{df}[thm]{Definition}
%  \newtheorem{rem}[thm]{Remark}
%  \newtheorem{ex}[thm]{Example}
% % \newtheorem*{prob*}{Problem}
%  \newtheorem{prob}{Problem}

% TikZ
%%%%%%%%%%%%%%%%%%%%%%%%%%%%%%%%%%%%%%%%%%%%%%%%%%%%%%%%
\usepackage{pgf,tikz}

\usetikzlibrary{calc,positioning,shapes,arrows.meta}

\tikzset{%
	% elements
	shaded/.style={draw, shape=circle, fill=black!35, inner sep=1.4pt},
	unshaded/.style={draw, shape=circle, fill=white, inner sep=1.4pt},
	quasi/.style={draw, shape=rectangle, rounded corners=3pt, fill=white, inner sep=2.5pt, minimum height=14.5pt},
	% blobs
	blob/.style={draw, shape=rectangle, rounded corners=12pt, thin, densely dotted},
	% arrows
	arrow/.style={->, thin, >=latex, shorten >=2.5pt, shorten <=2.5pt},
	% lines
	order/.style={thin},
	map/.style={->, densely dashed, shorten >=5pt, shorten <=5pt, >=stealth', looseness=1.1},
	% curvy/.style={->, densely dashed, shorten >=5pt, shorten <=12pt, >=stealth', looseness=1.1},
	curvy/.style={thin, looseness=1.2, bend angle=70},
	fatcurvy/.style={thin, looseness=1.7, bend angle=75},
	% labels
	label/.style={shape=rectangle, inner sep=6pt},
	auto}
%%% Tikz abbreviations %%%%%%%%%%%%%%%%%%%%

%%% Definition font   %%%%%%%%%%%%%%%%%%%%%%%%%%%%%%%%%

%%% Categories %%%%%%%%%%%%%%%%%%%%%%%%%%%%

%%%%%Scaled symbols %%%%%%%%%%%%%%%%%%%%%%
\font\bmi=cmmi8 scaled 1440
\newcommand{\powerset}{\raise.6ex\hbox{\bmi\char'175 }}

%%% Abbreviations %%%%%%%%%%%%%%%%%%%%%%%%%%%%%

\newcommand{\bigand}{\mathop{\bigwedge\kern -8.5truept \bigwedge}}
\newcommand{\Bigand}{\mathop{\bigwedge\kern -10truept \bigwedge}}
\newcommand{\littleand}{\mathbin{\wedge\kern -8truept \wedge}}
\newcommand{\bigor}{\mathop{\bigvee\kern -8.5truept \bigvee}}
\newcommand{\Bigor}{\mathop{\bigvee\kern -10truept \bigvee}}
\newcommand{\littleor}{\mathbin{\vee\kern -8truept \vee}}
\renewcommand{\le}{\leqslant}
\renewcommand{\leq}{\leqslant}

\renewcommand{\preceq}{\preccurlyeq}

%%% Side comments %%%%%%%%%%%%%%%%%%%%%%%%%%%%%%
%\newcommand{\red}[1]{{\color{red} #1}}
   %  For comments

%%% Hyphenations %%%%%%%%%%%%%%%%%%%%%%%%%%%%%%
\hyphenation{dual-ise}
\hyphenation{dual-ises} \hyphenation{dual-isable}
\hyphenation{endo-morph-ism} \hyphenation{endo-morph-isms}
\hyphenation{homo-morph-ism} \hyphenation{homo-morph-isms}
\hyphenation{iso-morph-ism} \hyphenation{iso-morph-isms}
\hyphenation{Bool-ean} 
%%%%%%%%%%%%%%%%%%%%%%%%%%%%%%%%%%%%%%%%%%%%%%%

\begin{document}
	
	%%%%%%%%%%%%%%%%%%%%%%%%%%%%%%%%%%%%%%%%%%%%%%%%%%%%%%%%%%%%%%%%%%%%%%
	%% FRONT MATTER
	%%%%%%%%%%%%%%%%%%%%%%%%%%%%%%%%%%%%%%%%%%%%%%%%%%%%%%%%%%%%%%%%%%%%%%

\runninghead{A. Craig and C. Robinson}{Representing Sugihara monoids}
 
\title{Representing Sugihara monoids via weakening relations}
	\address{Department of Mathematics and Applied Mathematics, University of Johannesburg, PO Box 524, Auckland Park 2006, South Africa}
	
	\author{Andrew Craig \\
 Department of Mathematics and Applied Mathematics\\
 University of Johannesburg, South Africa \\and \\
 National Institute for Theoretical and Computational Sciences (NITheCS) \\ Johannesburg, South Africa \\ acraig@uj.ac.za\thanks{Funding from the National Research Foundation grant 127266 is gratefully acknowledged.}
		\and Claudette Robinson \\
		Department of Mathematics and Applied Mathematics\\
		University of Johannesburg, South Africa\\
		claudetter@uj.ac.za
	}
	
	\maketitle
	
\begin{abstract}
We show that all Sugihara monoids can be represented as algebras of binary relations, with the monoid operation given by relational composition. Moreover, the binary relations are weakening relations. 
The first step is to obtain an explicit relational representation of all finite odd Sugihara chains. Our construction mimics that of Maddux~\cite{Mad10}, where a relational representation of the finite even Sugihara chains is given. 
We define the class of representable Sugihara monoids as those which can be represented as so-called direct reducts of distributive involutive FL-algebras of binary relations. We then show that the class of  representable distributive involutive FL-algebras is closed under ultraproducts. 
This fact is used to demonstrate that the two infinite Sugihara monoids that generate the quasivariety are also representable. From this it follows that all Sugihara monoids are representable. 
\end{abstract}
	
%%%%%%%%%%%%%%%%%%%%%%%%%%%%%%%%%%%%%%%%%%%%%%%%%%%%%%%%%%%%%%%%%%%%%%
	%% MAIN MATTER
%%%%%%%%%%%%%%%%%%%%%%%%%%%%%%%%%%%%%%%%%%%%%%%%%%%%%%%%%%%%%%%%%%%%%%
	
\section{Introduction}\label{sec:intro}
	
Sugihara monoids and Sugihara algebras have found many applications in the study of relevance logics. The variety of Sugihara algebras is generated by 
$\mathbf{Z}=
\langle \mathbb{Z},\wedge,\vee,\cdot,\to,
%1, 
{\sim}\rangle $ 
and provides complete algebraic semantics for the logic $\mathbf{R}$-mingle (often denoted $\mathbf{RM}$)~\cite{AB75, D70}. 
Sugihara monoids can be described as   commutative distributive idempotent residuated lattices with an order-reversing involution. That is, a truth constant, which is the identity element of the monoid, is included in the signature.  
They are known to give algebraic semantics for $\mathbf{RM^t}$, the logic $\mathbf{R}$-mingle with added Ackermann constants~\cite{AB75}. 
	
Odd Sugihara monoids are those for which the monoid identity 
(truth constant)
is fixed by the involution. They were studied by Galatos and Raftery~\cite{GR12} where they showed a categorical equivalence with relative Stone algebras. An application provided results about Beth definability and interpolation for the uninorm-based logic $\mathbf{IUML}$. 
Dual topological representations of Sugihara monoids were studied extensively by Fussner and Galatos~\cite{FG19} (see also Fussner's PhD thesis~\cite{Fussner-PhD}). Dual spaces were constructed in two ways---making use of of both Priestley spaces and Esakia spaces. 

Our goal in this paper is to show that it is possible to find a representation of any Sugihara monoid, with the monoid operation given by relational composition. We do this by first giving an
explicit relational representation for each finite odd Sugihara chain. Given a poset $\langle X,\leqslant\rangle $, a binary relation $R$ on $X$ is said to be a
\emph{weakening relation} if ${\leqslant} \circ R \circ {\leqslant} =R$. 
The elements of our algebra will be weakening relations on a poset which 
comes from a twisted product of a poset with its dual. 
Our construction essentially combines the method of Maddux~\cite{Mad10} with recent work by the current authors~\cite{CR23}. There we were able to represent so-called distributive quasi relation algebras (DqRAs) where the elements of the algebra are binary relations and the monoid operation is given by relational composition. Quasi relation algebras (qRAs) were introduced by Galatos and Jipsen~\cite{GJ13} as a generalisation of relation algebras. 
A major 
feature of these algebras 
is that the variety of qRAs has a decidable equational theory (whereas the variety of relation algebras does not). 

In 2010, Maddux~\cite[Theorem 6.2]{Mad10} gave a relational  representation of the finite even Sugihara chains $\mathbf{S}_{2n+2}$ ($n \in \mathbb{Z}^+$). We outline that representation here in very general terms, to help us describe the modification we 
make in order to 
represent the finite odd Sugihara chains. 
For $n \in \mathbb{Z}^+$, consider the set $\mathbb{Q}^n$ with the lexicographic ordering, denoted $\leqslant_n$. Now $\langle \mathbb{Q}^n, \leqslant_n\rangle $ is a chain. With ${\mathbf{X}=\langle X,\leqslant\rangle} =
\langle\mathbb{Q}^n, \leqslant_n\rangle$, consider the usual product poset $\mathbf{X}^\partial \times \mathbf{X}$ where $\mathbf{X}^\partial=\langle X, \geqslant\rangle$. It is then possible to embed $\mathbf{S}_{2n+2}$ in the set of upsets of
$\mathbf{X}^\partial \times \mathbf{X}$ in such a way that the elements of the algebra are binary relations and the monoid operation is given by relational composition. This construction, which is just a re-casting of~\cite{Mad10}, is described in more detail in Example~\ref{ex:Maddux2010}.

For the odd Sugihara chain $\mathbf{S}_{2n+3}$ (where $n \in \mathbb{Z}^+$), we will take two disjoint copies of $\langle\mathbb{Q}^n,\leqslant_n\rangle$, with a map $\alpha$ swapping the occurrence of each element from one copy to the other. This map $\alpha$ is an order isomorphism of the poset consisting of two chains. 
The algebra $\mathbf{S}_{2n+3}$ is then embedded in the upsets of 
$\mathbf{X}^{\partial} \times \mathbf{X}$, where this time $\mathbf{X}$ consists of two disjoint copies of $\langle \mathbb{Q}^n, \leqslant_n\rangle$ with order relations between them 
(see Figure~\ref{fig:orderX}). 
The key step is that $\alpha$ is used to define the order-reversing involution  in a way that ensures  the algebra will be an \emph{odd}
Sugihara monoid. 

\eject

Indeed, it is the usage of such a map $\alpha$ 
that distinguishes our construction from other relational representations of similar algebras.
As mentioned above, it 
provides a method for having algebras with ${\sim}1=1$. 
In the case of  representable weakening relation algebras~\cite{GJ20-AU, GJ20-ramics, JS23}, the complement-converse is used as the involutive order-reversing operation on binary relations. It is not difficult to show that no binary relation can be fixed by complement-converse (cf.~\cite[p.46]{Mad10} or ~\cite[p.355]{D82}). We note that in the relational representation of quasi-Boolean algebras by Dunn~\cite{D82}, that construction solves the problem of being able to have an order-reversing involution that fixes certain relations. It does so by taking the complement inside a symmetric relation, which does not have to be $X^2$.

We will view Sugihara monoids as examples of distributive involutive FL-algebras (DInFL-al\-ge\-bras). See Section~\ref{sec:sugihara} for a precise definition of these algebras.
DInFL-algebras have two order-reversing 
operations  
$\sim$ and $-$. If, for all $a \in\mathbf{A}$,  ${\sim} a= -a$ then  the algebra $\mathbf{A}$ is said to be cyclic.   
The use of a non-identity order isomorphism $\alpha$  allows for the construction of concrete \emph{non-cyclic} DInFL-algebras (see item (ii) of Theorem~\ref{thm:InFL-algebra}).  
Our representation of Sugihara monoids is part of a more general method which can also handle non-cyclic DInFL-algebras.

Before giving an outline of the content of this paper, we comment on the recent representation of Sugihara monoids given by Kramer and Maddux~\cite{KM20}. 
That paper provides an alternative solution to the problem of defining an involutive order-reversing operation on relations that has a fixed point. The Sugihara chains are embedded into proper relations algebras (see~\cite[Definition 4]{KM20} for the definition of the binary relations that will be the elements of the Sugihara chain). For an abstract relation algebra, they define \emph{relativized subreducts}~\cite[Definition 8]{KM20} where certain operations from the original relation algebra are retained, while others are replaced by new term-definable operations. 
This notion of relativized subreduct is then implemented on algebras of relations (see~\cite[Theorem 2]{KM20}) in order to obtain a new ${\sim}$ operation that will have a fixed point, hence allowing for the representation of odd Sugihara monoids. 
A difference between our approach and the one in~\cite[Theorem 5]{KM20} is that our representations
are embedded into distributive InFL-algebra of upsets (see Section~\ref{sec:preliminaries}). In their case, the concrete odd Sugihara chain is embedded into a proper relation algebra. A feature of our construction is that the relations which are the elements of the representation are all weakening relations with respect to a partial order. 
	
Our paper is laid out as follows. 
In Section~\ref{sec:preliminaries} we give some basic definitions and background on Sugihara monoids and binary relations. We also outline how we use partially ordered equivalence relations (equipped with an order automorphism $\alpha$) to obtain relational representations of distributive involutive FL-algebras, and give some examples. In Section~\ref{sec:construction} we build the finite odd Sugihara 
chains $\mathbf{S}_{2n+1}$ ($n \in \mathbb{Z}^+$)
as algebras of binary relations. 
The proofs contained in 
 Section~\ref{sec:construction}
 are rather technical, but this is mostly due to the large number of cases that need to be considered. The feature of our construction is the relative simplicity of the relations from which the algebra is built.

In the final section,  we first give a definition of a representable Sugihara monoid. We then show that the class of representable  DInFL-algebras (see Definition~\ref{def:RDInFL}) is closed under ultraproducts (Theorem~\ref{thm:Pu(RDInFL)=RDInFL}). Then, combining  the main result of  Section~\ref{sec:construction} with a representation of finite even Sugihara chains by Maddux~\cite{Mad10}, we are able to show that all Sugihara monoids are representable using the partially ordered equivalence relations described in Section~\ref{subsec:construct}.

\section{Preliminaries}\label{sec:preliminaries}
	
Before we provide the details of our construction, we give some background on residuated lattices, Sugihara monoids, the algebra of binary relations, some necessary results from our previous work~\cite{CR23}, as well as a definition of representability of a DInFL-algebra.  
	
\subsection{Sugihara monoids and DInFL-algebras}\label{sec:sugihara}
	
A \emph{residuated lattice}  is an algebra 
$\mathbf{A}=\langle A, \wedge, \vee, \cdot, 1, \backslash,/\rangle$ such that $\langle A, \wedge, \vee\rangle$ is a lattice, $\langle A, \cdot, 1\rangle$ is a monoid and the following holds for all $a,b,c \in A$: 
$$a \cdot b \leqslant c \quad \Longleftrightarrow \quad a \leqslant c/b
\quad \Longleftrightarrow \quad b \leqslant a\backslash c. $$
If the monoid operation of a residuated lattice $\mathbf{A}$ is commutative, then both residuals $\backslash$ and $/$ (sometimes denoted $\to$ and $\leftarrow$) coincide. If $a \cdot a =a$ is satisfied for all $a \in A$, then $\mathbf{A}$ is an \emph{idempotent} residuated lattice.
A \emph{Sugihara monoid} is an algebra 
$\mathbf{A}=\langle A, \wedge, \vee, \cdot \to, 1, {\sim}\rangle$ such that
$\langle A, \wedge, \vee, \cdot \to,1\rangle$
is a commutative distributive idempotent residuated lattice, and for all $a,b \in A$ the identities ${\sim}{\sim}a=a$ and $a \to {\sim} b=b \to {\sim}a$ are satisfied. (A commutative residuated lattice satisfying those two identities is often called an \emph{involutive} commutative residuated lattice.)

It was shown by Dunn~\cite{AB75} that the variety of Sugihara monoids is generated by algebras that are linearly ordered, and thus the linear ordered algebras are of particular interest.  
For our explicit representations in Section~\ref{sec:construction}, we will only be 
 considering finite Sugihara chains. These are algebras $\mathbf{S}_n$ for $n \in \omega$ and $n \geqslant 2$. If $n=2k$ for $k>0$ then $S_n=\{a_{-k}, \dots, a_{-1}, a_1, \dots, a_k\}$, and if $n=2k+1$ for $k>0$ then 
$S_n=\{a_{-k}, \dots, a_{-1}, a_0, a_1, \dots a_k\}$. The meet and join are defined as expected: $a_i \wedge a_j = a_{\text{min}\{i,j\}}$ and $a_i \vee a_j = a_{\text{max}\{i,j\}}$. The unary operation ${\sim}$ is defined by ${\sim}a_j = a_{-j}$. The monoid operation can be defined as:
	$$a_i \cdot a_j = \begin{cases}
		a_i & \text{if }\, |j|<|i| \\ 
		a_j & \text{if }\, |i|< |j| \\ 
		a_{\text{min}\{i,j\}} & \text{if }\, |j|=|i|.
	\end{cases}$$ 
Lastly, the implication $\to$ is defined by 
$$ 
	a_i \to a_j = \begin{cases}
		{\sim}a_i \vee a_j  & \text{if } i \leqslant j \\
		{\sim}a_i \wedge a_j & \text{if } i >j. 
	\end{cases}
	$$ 
If $n$ is odd then the monoid identity is given by $1=a_0$, and if $n$ is even, then the monoid identity is $1=a_1$. This gives us algebras $\mathbf{S}_n = \langle S_n, \wedge, \vee, \cdot, \to, 1, {\sim}\rangle$.
%cr 20250310
(When we talk of a Sugihara chain, we mean a
linearly ordered Sugihara monoid, using slightly different terminology from that
 used in~\cite{Mad10} and~\cite{KM20} where a Sugihara chain is a linearly ordered Sugihara algebra.)
 
The concrete algebras we describe in Section~\ref{subsec:construct} will be special residuated lattices, which we describe briefly here. 
A \emph{Full Lambek algbebra} (cf.~\cite{GJKO}) is a residuated lattice $\mathbf{L}$ that has an additional constant $0$ (about which nothing is assumed) in the signature. We use the common abbreviation FL-algebra. Using the residuals, two unary operations ${\sim}$ and $-$ can be defined. These are known as \emph{linear negations} and are defined for $a\in L$ by ${\sim}a=a\backslash 0$ and $-a=0/a$. If ${\sim}$ and $-$ satisfy the involutive property, i.e. $-{\sim}a=a={\sim}{-}a$ for all $a \in L$, then $\langle L, \wedge, \vee, \cdot,1,0,\backslash,/\rangle$ is an \emph{InFL-algebra}. When the underlying lattice is distributive, we use the abbreviation DInFL-algebra. If ${\sim}a=-a$ for all $a \in L$ then it is a \emph{cyclic} InFL-algebra. 

We prefer to use the linear negations rather than the residuals in the signature as it is known that such an equivalent formulation is available~\cite[Lemma 2.2]{GJ13}. 
In addition, that equivalent formulation allows us to leave the $0$ out of the signature. 
That is, we will consider DInFL-algebras as algebras of the following signature:
$\mathbf{A}=\langle A, \wedge, \vee, \cdot, 1, {\sim},-\rangle$. Notice that if a Sugihara monoid uses an
alternative signature 
with $- = {\sim}$ then it will be a DInFL-alegbra. Moreover, it will be a commutative cyclic idempotent DInFL-algebra.  

\subsection{Binary relations}\label{sec:bin-rel}
	
	Below we recall some basic definitions and facts about binary relations and the various operations defined on them. 
We will use the following special binary relations on a set $X$: the \emph{empty relation} $\varnothing$, 
	the \emph{identity relation} (\emph{diagonal relation}) $\mathrm{id}_X = \left\{\left(x, x\right) \mid x \in X\right\}$,
	and the \emph{universal relation} $X^2$. The complement of $R$ will be denoted by $R^c$ 
	and the \emph{converse of $R$} is 
	$R^\smile = \left\{\left(x, y\right) \mid \left(y, x\right) \in R\right\}$.
	The \emph{composition} of $R, S \subseteq X^2$ is 
	$R \circ S = \left\{\left(x, y\right) \mid \left(\exists z \in X\right)\left(\left(x, z\right) \in R \textnormal{ and } \left(z, y\right) \in S\right)\right\}$.
	Besides composition, binary relations have the familiar operations of union and intersection that will be used to model join and meet in the relational representations.  
	
	\begin{proposition}\label{prop:properties_binary_relations}
		Let $R, S, T \subseteq X^2$. 
		Then the following hold:
		\begin{multicols}{2}
			\begin{enumerate}[\normalfont (i)]
				\item $R^{\smile\smile} = R$
				\item $R^{\smile c} = R^{c\smile}$
				\item $\left( R \cup S\right)^\smile = R^\smile \cup S^\smile$ 
\item $\left( R \cap S\right)^\smile = R^\smile \cap S^\smile$ 
							\item $\mathrm{id}_X \circ R = R \circ \mathrm{id}_X = R$
				\item $\left(R \circ S\right)\circ T = R \circ \left(S \circ T\right)$
				\item $\left(R\circ S\right)^\smile = S^\smile \circ R^\smile$
				\item $(R \cup S) \circ T = (R \circ T) 
				\cup (S \circ T)$ 
				
				\item $R \circ (S \cup T) = (R \circ S) \cup (R \circ T)$ 
				\end{enumerate}
		\end{multicols}
	\end{proposition}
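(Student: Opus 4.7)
The plan is to prove each of the nine items by direct element-chasing, unwinding the definitions of converse, complement, composition, union, and intersection that were recalled immediately above the statement. In every case one fixes an arbitrary pair $(x,y) \in X^2$ and shows that membership of $(x,y)$ in the left-hand side is equivalent, via a short chain of biconditionals, to membership in the right-hand side.

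For the converse identities (i)--(iv) and (vii) the arguments are essentially one-line equivalences. For instance, $(x,y) \in R^{\smile\smile}$ iff $(y,x) \in R^\smile$ iff $(x,y) \in R$, which gives (i); (ii) follows because negation commutes with the coordinate swap; and (iii), (iv) follow because disjunction and conjunction commute with the coordinate swap. For (vii), I would write $(x,y) \in (R\circ S)^\smile$ iff $(y,x) \in R \circ S$ iff there exists $z$ with $(y,z) \in R$ and $(z,x) \in S$, iff there exists $z$ with $(x,z) \in S^\smile$ and $(z,y) \in R^\smile$, iff $(x,y) \in S^\smile \circ R^\smile$; the required order reversal is forced by where $z$ sits in each pair.

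Item (v) is immediate: on the left the existential in the composition forces the intermediate point to equal $x$, and on the right it forces it to equal $y$, in each case collapsing to $(x,y) \in R$. For (vi) I would unfold both $(R\circ S)\circ T$ and $R\circ(S\circ T)$ into the same double-existential statement, namely that there exist $z,w \in X$ with $(x,z) \in R$, $(z,w) \in S$, $(w,y) \in T$; associativity then follows from the associativity of conjunction and the interchangeability of two existential quantifiers over $X$. The distributivity statements (viii) and (ix) drop out of the fact that existential quantification distributes over disjunction.

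No step presents a genuine obstacle; the identities are standard and appear in any textbook treatment of relation algebras. The only small care required is bookkeeping of quantifier variables in (vi) and (vii), where two levels of existential quantification are manipulated simultaneously. Given this, the author may reasonably elect to omit the proofs and simply cite the proposition as a collection of well-known facts used freely throughout the paper.
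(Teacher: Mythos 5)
Your element-chasing argument is correct in every item, and indeed the paper supplies no proof at all for this proposition: it is stated as a collection of standard facts about binary relations and used freely thereafter, exactly as you anticipated in your closing remark. Your write-up is the canonical verification one would give if a proof were demanded, so there is nothing to reconcile between the two.
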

	
A crucial part of our construction is the use of an additional bijective function $\alpha : X \to X$. We will use $\alpha$ to denote both the function acting on elements of $X$ and the binary relation on $X$. 
The lemma below applies in particular when $\gamma$ is a bijective function from $X$ to $X$.

\begin{lemma}[{\cite[Lemma 3.2]{CR23}}]\label{lem:important_eq_injective_map}
Let $X$ be a set and $R, \gamma \subseteq X^2$. If 
$\gamma$
satisfies 
$\gamma^{\smile}\circ \gamma = \mathrm{id}_X$ and $\gamma \circ \gamma^{\smile}=\mathrm{id}_X$
then the following hold:	
\begin{enumerate}[\normalfont (i)]
\item $\left(\gamma\circ R\right)^c = \gamma \circ R^c$
\item $\left(R\circ \gamma\right)^c = R^c \circ \gamma$
\end{enumerate}	
\end{lemma}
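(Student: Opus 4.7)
My plan is to first note that the hypotheses $\gamma^\smile \circ \gamma = \mathrm{id}_X$ and $\gamma \circ \gamma^\smile = \mathrm{id}_X$ force $\gamma$ to behave as the graph of a bijection: every $x \in X$ is the first coordinate of exactly one pair in $\gamma$, and the second coordinate of exactly one pair. The first condition yields uniqueness of right-partners (if $(x,z),(x,z')\in\gamma$ then $(z,z') \in \gamma^\smile \circ \gamma = \mathrm{id}_X$, so $z=z'$) and existence of left-partners, and the second condition is the symmetric statement. With this in hand, (i) and (ii) become easy element-chasing verifications.

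For part (i), I would show both inclusions directly. For $\gamma \circ R^c \subseteq (\gamma \circ R)^c$, suppose $(x,y) \in \gamma \circ R^c$, so there is $z$ with $(x,z)\in\gamma$ and $(z,y)\in R^c$. If also $(x,y)\in\gamma\circ R$, pick $z'$ with $(x,z')\in\gamma$ and $(z',y)\in R$; the uniqueness observation above gives $z=z'$, which contradicts $(z,y)\in R^c$. For the reverse inclusion, given $(x,y)\in(\gamma\circ R)^c$, use $\gamma\circ\gamma^\smile = \mathrm{id}_X$ (applied at $x$) to produce a $z$ with $(x,z)\in\gamma$; then $(z,y)\in R$ would force $(x,y)\in\gamma\circ R$, so $(z,y)\in R^c$ and hence $(x,y)\in\gamma\circ R^c$.

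For part (ii), rather than repeating the symmetric argument, I would derive it from (i) using the identities in Proposition~\ref{prop:properties_binary_relations}. Observe that $\gamma^\smile$ again satisfies the hypotheses, since $\gamma^{\smile\smile}\circ\gamma^\smile = \gamma\circ\gamma^\smile = \mathrm{id}_X$ and similarly $\gamma^\smile\circ\gamma^{\smile\smile} = \mathrm{id}_X$. Applying (i) with $\gamma^\smile$ in place of $\gamma$ and $R^\smile$ in place of $R$ gives $(\gamma^\smile\circ R^\smile)^c = \gamma^\smile\circ R^{\smile c}$. Taking converses of both sides and using $(S\circ T)^\smile = T^\smile\circ S^\smile$, $R^{\smile c} = R^{c\smile}$, and $R^{\smile\smile}=R$, the left side becomes $(R\circ\gamma)^c$ and the right side becomes $R^c\circ\gamma$, as required.

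There is no real obstacle here; the only subtle point is recognising that the two hypotheses on $\gamma$ together give both existence and uniqueness of the $z$ witnessing $(x,z)\in\gamma$, which is what allows complementation to commute with composition by $\gamma$. Without both conditions the identity can fail, so it is worth making that functionality explicit at the start.
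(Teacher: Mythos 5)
Your proof is correct. Note that the paper itself gives no proof of this lemma; it is imported verbatim from \cite{CR23} (Lemma 3.2 there), so there is nothing to compare against line by line. Your argument is a sound self-contained verification: the two hypotheses do exactly encode that $\gamma$ is the graph of a bijection (existence and uniqueness of the witnessing $z$ on each side), part (i) is a correct element chase using precisely those two facts, and the reduction of (ii) to (i) via $\gamma^{\smile}$, $R^{\smile}$ and the converse identities $\left(S\circ T\right)^{\smile}=T^{\smile}\circ S^{\smile}$, $R^{\smile c}=R^{c\smile}$, $R^{\smile\smile}=R$ from Proposition~\ref{prop:properties_binary_relations} is valid and avoids duplicating the symmetric argument.
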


\subsection{Constructing InFL-algebras of relations}\label{subsec:construct}
	
In previous work~\cite[Section 3]{CR23}, we used the construction outlined below to produce concrete algebras of binary relations which formed  distributive quasi relation algebras (DqRAs). These algebras were described by Galatos and Jipsen~\cite{GJ13} and are  algebras of the form $\mathbf{A} = \langle A, \wedge,\vee, \cdot, 1, 0,{\sim},-,'\rangle$. The unary operations ${\sim}, -$ and $'$ are dual lattice isomorphisms satisfying some additional conditions. We do not go into further detail here about the definition of quasi relation algebras as we will not make use of the full signature of these algebras. 
The construction outlined below will produce distributive involutive FL-algebras. 
	
The definitions and results below all involve an equivalence relation $E$ on a set $X$. The special case of $E=X^2$ is vastly simpler and, fortunately, this is the setting used in Section~\ref{sec:construction}. However, we need the more general setting for the representability results in Section~\ref{sec:RSM}.

Let $\mathbf X = \left\langle X, \le\right\rangle$ be a poset with an equivalence relation $E$ satisfying ${\leqslant} \subseteq E$. The set $E$ can be partially ordered for any $(u, v), (x, y) \in E$, as follows:
$$(u, v) \preceq (x, y) \qquad  \textnormal{iff} \qquad x \le u \textnormal{ and } v\le y.$$
	
Then $\mathbf{E}=\left\langle E, \preceq\right\rangle$ is a poset and the set of its upsets, denoted $\mathsf{Up}\left(\mathbf{E}
\right)$, ordered by inclusion, is a distributive lattice. Notice that when $E=X^2$ then $\langle X^2,\preceq\rangle $ is exactly the product poset $\mathbf{X}^{\partial} \times \mathbf{X}$ where the order on $\mathbf{X}^\partial$ is $\geqslant$. 

We note that 
when $E=X^2$
the lattice constructed above is identical to that used by Galatos and Jipsen~\cite{GJ20-AU, GJ20-ramics}, and more recently by Jipsen and \v{S}emrl~\cite{JS23}. In order to represent odd Sugihara monoids as algebras of relations, we will make use of an order automorphism $\alpha$ on the poset $\mathbf{X}$. It is this $\alpha$ that will allow us to define a unary  operation ${\sim}$ on $\mathsf{Up}(\langle X^2,\preceq\rangle )$ and an identity $1\in \mathsf{Up}(\langle X^2,\preceq\rangle )$ such that ${\sim}1=1$.

The lemma below is stated in the original source~\cite{CR23} 
with an additional map $\beta : X \to X$ which is a dual order automorphism. This is not required for the setting of DInFL-algebras. 
For a poset $\mathbf{Y}=\langle Y,\leqslant_Y\rangle $, we denote by $\mathsf{Down}(\mathbf{Y})$ the set of downsets of $\mathbf{Y}$. 
The lemma shows how upsets/downsets of $\mathbf{E}=\langle E,\preceq\rangle $ are produced when composition, converse and complement are applied to upsets/downsets of $\mathbf{E}=\langle E,\preceq\rangle $. It also details what happens when the graph of an order automorphism is composed with an upset. For $R \subseteq X^2$, we will abuse notation and write $R^c$ for the complement of $R$ in $E$. 

\begin{lemma}[{\cite[Lemma 3.5]{CR23}}]\label{lem:important_up-_and_down-sets}
Let $\mathbf X = \langle X, \le\rangle$ be a poset, $E$ an equivalence relation with ${\leqslant} \subseteq E$  and let  $\alpha: X \rightarrow X$ be an order automorphism of $\mathbf X$ with $\alpha \subseteq E$. 		
\begin{enumerate}[\normalfont (i)]
\item If $R, S  \in \mathsf{Up}\left(
\mathbf{E} \right)$, then $R \circ S \in \mathsf{Up}\left( \mathbf{E} \right)$.
\item $R \in \mathsf{Up}\left(
\mathbf{E} \right)$ iff $R^c \in \mathsf{Down}\left(
\mathbf{E} \right)$. 
\item $R \in \mathsf{Up}\left(
\mathbf{E} \right)$ iff $R^{\smile} \in \mathsf{Down}\left(
\mathbf{E} \right)$.
\item If $R \in \mathsf{Up}\left(
\mathbf{E}
\right)$, then $\alpha\circ R \in \mathsf{Up}\left(
\mathbf{E} \right)$ and $R \circ \alpha \in \mathsf{Up}\left( \mathbf{E} \right)$. 
\end{enumerate}	
\end{lemma}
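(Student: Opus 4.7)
The plan is to unpack each of the four items directly from the definition of $\preceq$, using only that $E$ is transitive and contains $\leq$. As a preliminary, I would record a small observation: if $(u,v) \in E$ and $x \leq u$, $v \leq y$, then $(x,y) \in E$, since $(x,u)$ and $(v,y)$ both lie in ${\leq} \subseteq E$ and transitivity of $E$ (applied twice) delivers $(x,y) \in E$. Consequently, whenever I move a pair upward along $\preceq$ using only inequalities from $\mathbf{X}$, the resulting pair automatically sits inside $E$, so I never have to check $E$-membership by hand.

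For (i), given $(u,v) \in R \circ S$ witnessed by some $z$ with $(u,z) \in R$ and $(z,v) \in S$, and given $(u,v) \preceq (x,y)$, I note that $(u,z) \preceq (x,z)$ and $(z,v) \preceq (z,y)$ inside $\mathbf{E}$ by the preliminary remark. The upset hypothesis on $R$ and $S$ then yields $(x,z) \in R$ and $(z,y) \in S$, so the same $z$ witnesses $(x,y) \in R \circ S$.

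Items (ii) and (iii) are essentially formal. For (ii), complementation is taken relative to $E$, which is an order-reversing bijection on the power set of $E$, so closure upward on $R$ corresponds to closure downward on $R^c$. For (iii), the converse map $(u,v) \mapsto (v,u)$ exchanges the two defining inequalities of $\preceq$, i.e.\ $(u,v) \preceq (x,y)$ in $\mathbf{E}$ iff $(y,x) \preceq (v,u)$ in $\mathbf{E}$; this duality sends upsets of $\mathbf{E}$ to downsets of $\mathbf{E}$ and vice versa.

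For (iv), the argument mirrors (i), with the extra input that $\alpha$ is an order automorphism, so $x \leq u$ implies $\alpha(x) \leq \alpha(u)$. Given $(u,v) \in \alpha \circ R$ via the witness $\alpha(u)$ and any $(u,v) \preceq (x,y)$, I observe that $(\alpha(u),v) \in R$ and $(\alpha(u),v) \preceq (\alpha(x),y)$; the upset property of $R$ then yields $(\alpha(x),y) \in R$, so $(x,y) \in \alpha \circ R$ with witness $\alpha(x)$. The case $R \circ \alpha$ is symmetric, using $\alpha^{-1}$ applied in the second coordinate (again an order automorphism). The only real obstacle anywhere in the argument is the bookkeeping needed to ensure that every intermediate pair remains in $E$, and the preliminary transitivity remark handles this uniformly across all four parts.
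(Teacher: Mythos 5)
Your argument is correct and is the standard direct verification; the paper itself imports this lemma from \cite[Lemma 3.5]{CR23} without reproducing a proof, and your preliminary transitivity remark is exactly the observation needed to keep every intermediate pair inside $E$ throughout parts (i), (iii) and (iv). The only step left implicit is that $R\circ S$, $\alpha\circ R$ and $R\circ\alpha$ are themselves subsets of $E$ (immediate from transitivity of $E$ together with $R,S,\alpha\subseteq E$), which is the remaining half of membership in $\mathsf{Up}(\mathbf{E})$.
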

	
The next lemma shows not only that $\leqslant$ is an upset of  $\mathbf{E}$,
but also that it is the identity of composition in $\mathsf{Up}( \mathbf{E} )$. 

\begin{lemma}[{\cite[Lemma 3.6]{CR23}}]\label{lem:order_identity}
Let $\mathbf X = \langle X, \le\rangle$ be a poset, $E$ and equivalence relation with ${\leqslant} \subseteq E$ and $\alpha$ an order automorphism on $X$ with $\alpha \subseteq E$. Then 
\begin{enumerate}[\normalfont (i)]
\item $\le\; \in \mathsf{Up}\left(\mathbf{E}
\right)$;
\item $\le \circ\, R = R\; \circ \le\; = R$ for all $R \in \mathsf{Up}\left(\mathbf{E}
\right)$. 
\end{enumerate}	
\end{lemma}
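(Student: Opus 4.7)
The plan is to verify both items directly from the definition of $\preceq$ on $E$ and of ``upset'', using only the reflexivity and transitivity of $\leqslant$ and of $E$.

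For item (i), I first note that $\leqslant \subseteq E$ is part of the hypothesis, so $\leqslant$ is a legitimate subset of $E$ and the question of membership in $\mathsf{Up}(\mathbf{E})$ is well posed. To check that $\leqslant$ is upward closed under $\preceq$, I would suppose $(u,v) \in {\leqslant}$ and $(u,v) \preceq (x,y)$. Unwinding the definitions gives $u \leqslant v$, $x \leqslant u$ and $v \leqslant y$, so transitivity of $\leqslant$ yields $x \leqslant y$, i.e.\ $(x,y) \in {\leqslant}$.

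For item (ii), I would handle ${\leqslant} \circ R = R$ by two inclusions and then give the symmetric argument for $R \circ {\leqslant} = R$. The easy inclusion $R \subseteq {\leqslant} \circ R$ is immediate from reflexivity of $\leqslant$: if $(x,y) \in R$ then $x \leqslant x$ together with $(x,y) \in R$ witnesses $(x,y) \in {\leqslant} \circ R$. For the reverse inclusion, take $(x,y) \in {\leqslant} \circ R$ with some witness $z$ satisfying $x \leqslant z$ and $(z,y) \in R$. The key observation is that $(z,y) \preceq (x,y)$, since $x \leqslant z$ and $y \leqslant y$ both hold. Since $R$ is an upset of $\mathbf{E}$, this forces $(x,y) \in R$. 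The mirror-image argument, using $(x,z) \preceq (x,y)$ when $(x,z) \in R$ and $z \leqslant y$, handles $R \circ {\leqslant} \subseteq R$.

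The only technical point to be careful about is that $\preceq$ is only defined on pairs lying in $E$, so before invoking the upset property I must check that $(x,y) \in E$. This follows from $(x,z) \in {\leqslant} \subseteq E$ and $(z,y) \in R \subseteq E$ together with transitivity of $E$. There is no genuine obstacle here; the argument is a bookkeeping exercise that amounts to ``reflexivity gives one inclusion, the upset condition gives the other.'' It is worth noting that the order automorphism $\alpha$ plays no role in this lemma; $\alpha$ is only needed later, when the involution ${\sim}$ on $\mathsf{Up}(\mathbf{E})$ is constructed.
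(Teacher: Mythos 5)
Your proof is correct, and since the paper only cites this lemma from~\cite[Lemma 3.6]{CR23} without reproducing a proof, the direct verification you give (transitivity of $\leqslant$ for item (i); reflexivity for one inclusion and the upset property of $R$ under $\preceq$ for the other in item (ii), with the check that $(x,y)\in E$ via transitivity of $E$) is exactly the intended argument. Your closing remark that $\alpha$ plays no role here is also accurate.
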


Consider a poset $\mathbf X = \langle X, \le\rangle$ with $E$ and $\alpha$ as used above. 
Relational composition is associative and, by Lemma~\ref{lem:order_identity},  $\le$ is the identity of composition. Hence the structure $\left\langle\mathsf{Up}\left(
\mathbf{E}
\right),\circ, \le\right\rangle$ is a monoid. 
	In addition, composition is residuated and the residuals $\backslash, /$ are defined by the usual expressions for residuals on binary relations: $R\backslash S = (R^{\smile}\circ S^c)^c$ and $R/S = (R^c\circ S^{\smile})^c$. The algebra $\left\langle \mathsf{Up}\left(
\mathbf{E} 
 \right), \cap, \cup, \circ, \leqslant, \backslash, /, \right\rangle$ is therefore a (distributive) residuated lattice.
	
The next lemma shows that if $\alpha$ is an order automorphism of a poset $\langle X, \le\rangle$ with an equivalence relation $E $ such that ${\leqslant} \subseteq E$, then $\alpha$ can be used to define an element of $\mathsf{Up}(\mathbf{E})$ to act as a $0$ that will give rise to an InFL-algebra.  
	
\begin{lemma}[{\cite[Lemma 3.9]{CR23}}]\label{lem:definition_of_0}
Let $\mathbf X = \langle X, \le\rangle$ be a poset with an equivalence relation $E$ on $X$ satisying ${\leqslant} \subseteq E$. Further, let $\alpha: X \rightarrow X$ be an order automorphism of $\mathbf X$ with $\alpha \subseteq E$. Then $\alpha \,\circ \le^{c\smile} =$ $ \le^{c\smile} \circ \,\alpha \in \mathsf{Up}\left(\mathbf{E}\right)$.
	\end{lemma}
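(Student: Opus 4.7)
My plan is to prove the two assertions separately: first the set equality $\alpha \circ {\leqslant^{c\smile}} = {\leqslant^{c\smile}} \circ \alpha$, and then that this common relation is a member of $\mathsf{Up}(\mathbf{E})$.

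For the membership claim, I can bootstrap from earlier results. Lemma~\ref{lem:order_identity}(i) gives ${\leqslant} \in \mathsf{Up}(\mathbf{E})$, and then Lemma~\ref{lem:important_up-_and_down-sets}(ii) yields ${\leqslant^c} \in \mathsf{Down}(\mathbf{E})$. Reading Lemma~\ref{lem:important_up-_and_down-sets}(iii) in the other direction (applying it to $R = {\leqslant^{c\smile}}$ and using $R^{\smile\smile} = R$ from Proposition~\ref{prop:properties_binary_relations}(i)) promotes this to ${\leqslant^{c\smile}} \in \mathsf{Up}(\mathbf{E})$, and finally Lemma~\ref{lem:important_up-_and_down-sets}(iv) delivers $\alpha \circ {\leqslant^{c\smile}} \in \mathsf{Up}(\mathbf{E})$.

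For the equality I expect a direct element chase to be cleanest. Unpacking definitions, a pair $(a,b)$ lies in $\alpha \circ {\leqslant^{c\smile}}$ iff $(b, \alpha(a)) \in E$ and $b \not\leqslant \alpha(a)$, and it lies in ${\leqslant^{c\smile}} \circ \alpha$ iff $(\alpha^{-1}(b), a) \in E$ and $\alpha^{-1}(b) \not\leqslant a$. The order clause transfers between the two sides by the order-automorphism property of $\alpha$, namely $b \not\leqslant \alpha(a) \iff \alpha^{-1}(b) \not\leqslant a$, while the $E$-clause transfers by using symmetry and transitivity of $E$ together with the hypothesis $\alpha \subseteq E$ (i.e. by chaining the $\alpha$-edges $(\alpha^{-1}(b),b)$ and $(a,\alpha(a))$). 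The one delicate point---and what I expect to be the main obstacle---is that the paper's convention is to take complements inside $E$ rather than inside $X^2$, so these auxiliary $E$-conditions must be tracked explicitly throughout the element chase; but once one observes that $E$ is an equivalence relation containing $\alpha$, they propagate routinely.
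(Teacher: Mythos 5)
Your proof is correct. Note that the paper does not actually prove this lemma---it is imported verbatim from \cite[Lemma 3.9]{CR23}---so there is no in-paper argument to compare against; your derivation is a valid self-contained one, with the membership claim correctly bootstrapped from Lemmas~\ref{lem:order_identity}(i) and \ref{lem:important_up-_and_down-sets}(ii)--(iv), and the equality handled by a sound element chase in which both the order clause (via the automorphism property) and the $E$-clause (via symmetry, transitivity, and $\alpha \subseteq E$) transfer as you describe.
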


We will now add $0 = \alpha\,\circ \le^{c\smile} \,= \,\le^{c\smile}\circ \,\alpha$ to the signature of the distributive residuated   lattice 
 $\left\langle \mathsf{Up}\left(
\mathbf{E}
\right), \cap, \cup, \circ, \leqslant, \backslash, /\right\rangle$  to obtain the FL-algebra
$\left\langle \mathsf{Up}\left(
\mathbf{E}
\right), \cap, \cup, \circ, \le, 0,\backslash, / \right\rangle$. 
The next lemma gives an alternative way of calculating ${\sim} R$ and $- R$ if we set $0 = \alpha\, \circ \le^{c\smile} \, = \,\le^{c\smile} \circ \,\alpha$. 
	
\begin{lemma}[{\cite[Lemma 3.10]{CR23}}]\label{lem:twiddle_minus_alternative}
Let $\mathbf X = \langle X, \le\rangle$ be a poset with $E$ an equivalence relation and $\alpha: X \rightarrow X$ an order automorphism of $\mathbf X$ where ${\leqslant} \subseteq E$ and $\alpha \subseteq E$.  
If  $0 = \alpha\, \circ \le^{c\smile} \, = \,\le^{c\smile} \circ \,\alpha$ 
then 
${\sim} R = R^{c\smile}\circ \alpha$ and $-R = \alpha \circ R^{c\smile}$ for all $R \in \mathsf{Up}\left(\langle X^2, \preceq\rangle\right)$.
\end{lemma}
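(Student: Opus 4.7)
The plan is to compute ${\sim} R = R\backslash 0$ and $-R = 0/R$ directly from the definitions of the residuals on binary relations, namely $R\backslash S = (R^\smile \circ S^c)^c$ and $R/S = (R^c \circ S^\smile)^c$, and then simplify using the structural facts already established in the excerpt. The two equivalent presentations of $0$ provided by Lemma~\ref{lem:definition_of_0}, namely $0 = \alpha\circ \leqslant^{c\smile} = \leqslant^{c\smile} \circ \alpha$, will be used selectively: the second form will be convenient for ${\sim} R$ and the first form for $-R$, because in each case it lets us ``push'' $\leqslant$ next to $R$ and absorb it.

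The key computational ingredients are: (a) the commutation of converse with complement, $R^{\smile c} = R^{c \smile}$, from Proposition~\ref{prop:properties_binary_relations}(ii); (b) the converse-of-composition rule $(R\circ S)^\smile = S^\smile \circ R^\smile$, from Proposition~\ref{prop:properties_binary_relations}(vii); (c) Lemma~\ref{lem:important_eq_injective_map}, which because $\alpha$ is a bijection gives $(\alpha\circ R)^c = \alpha\circ R^c$ and $(R\circ\alpha)^c = R^c\circ\alpha$, allowing complements to pass freely through $\alpha$; and (d) Lemma~\ref{lem:order_identity}(ii), which says $\leqslant$ is the identity for composition on $\mathsf{Up}(\mathbf{E})$, so $\leqslant\circ R = R = R\circ\leqslant$.

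The calculation for ${\sim} R$ proceeds as follows. Using $0 = \leqslant^{c\smile}\circ\alpha$ and ingredient (c), we get $0^c = \leqslant^\smile\circ\alpha$. Substituting and applying (b) converts $R^\smile\circ\leqslant^\smile$ into $(\leqslant\circ R)^\smile$, which by (d) collapses to $R^\smile$. What remains is $(R^\smile \circ \alpha)^c$, and applying (c) and (a) in succession delivers $R^{c\smile}\circ\alpha$. The argument for $-R$ is the mirror image: using $0 = \alpha\circ\leqslant^{c\smile}$ we compute $0^c = \alpha\circ\leqslant^\smile$, then $\leqslant^\smile\circ R^\smile = (R\circ\leqslant)^\smile = R^\smile$, and finally (c) together with (a) gives $\alpha\circ R^{c\smile}$.

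There is no real obstacle; the proof is a bookkeeping exercise that only works out because of the careful choice of $0$ made in Lemma~\ref{lem:definition_of_0}. The subtle point to check is that the $\leqslant$ hidden inside $0^c$ ends up adjacent to $R^\smile$ (not to $\alpha$), which is exactly why the two equivalent forms of $0$ are needed: one places $\alpha$ on the right (good for ${\sim}$, where $R^\smile$ sits on the left of $0^c$), and the other places $\alpha$ on the left (good for $-$, where $R^\smile$ sits on the right of $0^c$). After that, ingredient (c) is essential for moving the outer complement across $\alpha$ without disturbing the expression.
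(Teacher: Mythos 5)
Your proof is correct and is exactly the intended derivation: the paper itself imports this lemma from \cite[Lemma 3.10]{CR23} without reproving it, and the standard argument there is precisely your computation of $R\backslash 0$ and $0/R$ using the two forms of $0$, the identities $(\gamma\circ R)^c=\gamma\circ R^c$ and $(R\circ\gamma)^c=R^c\circ\gamma$, $R^{\smile c}=R^{c\smile}$, and the absorption $\le\circ R=R=R\circ\le$. The only point worth keeping in mind (it does not affect your argument) is that in the general setting complements are taken relative to $E$ rather than $X^2$, as the paper notes just before Lemma~\ref{lem:important_up-_and_down-sets}.
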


Using the alternative method of calculating ${\sim}$ and ${-}$ from Lemma~\ref{lem:twiddle_minus_alternative}, greatly simplifies the proof of the main theorem of this background section. 

\begin{theorem}[{\cite[Theorem 3.12]{CR23}}]\label{thm:InFL-algebra}
Let $\mathbf X = \langle X, \le\rangle$ be a poset with $E$ an equivalence relation on $X$ such that ${\leqslant} \subseteq E$. Further, let $\alpha: X \rightarrow X$ be an order automorphism of $\mathbf X$ such that $\alpha \subseteq E$.  
\begin{enumerate}[\normalfont (i)]
\item If\, $0 = {\alpha}\, \circ \,{\leqslant^{c\smile}} = {\leqslant^{c\smile}} \circ {\alpha}$ then $\mathfrak{D}(\mathbf{E})=\langle \mathsf{Up}(\mathbf{E}), \cap, \cup, \circ, \leqslant, {\sim},-\rangle$ is a distributive InFL-algebra. 
\item The algebra in {\upshape (i)} is cyclic if and only if $\alpha=\mathrm{id}_X$. 		\end{enumerate}
\end{theorem}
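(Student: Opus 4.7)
For (i), the preceding lemmas do almost all of the work. The discussion before Lemma~\ref{lem:definition_of_0} has already established that $\langle \mathsf{Up}(\mathbf{E}), \cap, \cup, \circ, \leqslant, \backslash, /\rangle$ is a distributive residuated lattice, and Lemma~\ref{lem:definition_of_0} guarantees that the chosen element $0$ lies in $\mathsf{Up}(\mathbf{E})$, so we obtain an FL-algebra with constants $1={\leqslant}$ and $0=\alpha\,\circ\,{\leqslant^{c\smile}}$. Lemma~\ref{lem:twiddle_minus_alternative} tells us moreover that, for this choice of $0$, the linear negations admit the compact forms ${\sim}R = R^{c\smile}\circ\alpha$ and $-R = \alpha\circ R^{c\smile}$. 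The only remaining obligation is to verify the involutive laws ${\sim}{-}R = R = {-}{\sim}R$ for every $R\in\mathsf{Up}(\mathbf{E})$.

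The involutive identity is then a short symbolic calculation. Starting from
\[
{\sim}{-}R \;=\; (\alpha\circ R^{c\smile})^{c\smile}\circ\alpha,
\]
I would apply Lemma~\ref{lem:important_eq_injective_map}(i) to push the outer complement past $\alpha$, giving $(\alpha\circ R^{c\smile})^c = \alpha\circ R^{c\smile c}$. Since $R^{c\smile c} = R^{\smile cc} = R^{\smile}$ by Proposition~\ref{prop:properties_binary_relations}(ii), this simplifies to $\alpha\circ R^{\smile}$. Taking converse and using Proposition~\ref{prop:properties_binary_relations}(vii),(i) yields $R\circ\alpha^{\smile}$, so that
\[
{\sim}{-}R \;=\; R\circ\alpha^{\smile}\circ\alpha \;=\; R\circ\mathrm{id}_X \;=\; R,
\]
since $\alpha$ is a bijection. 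The dual identity ${-}{\sim}R = R$ follows by the same chain of manipulations applied on the opposite side, using part (ii) of Lemma~\ref{lem:important_eq_injective_map}. No genuine obstacle arises here; the difficulty is just bookkeeping with the complement/converse identities.

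For (ii), if $\alpha = \mathrm{id}_X$ then Lemma~\ref{lem:twiddle_minus_alternative} immediately yields ${\sim}R = R^{c\smile} = -R$, so the algebra is cyclic. For the converse, I plan to argue contrapositively. First note that by Lemma~\ref{lem:important_up-_and_down-sets}(ii),(iii), the assignment $R\mapsto R^{c\smile}$ is a bijection of $\mathsf{Up}(\mathbf{E})$ onto itself, so cyclicity is equivalent to $S\circ\alpha = \alpha\circ S$ holding for every $S\in\mathsf{Up}(\mathbf{E})$. Assuming $\alpha \neq \mathrm{id}_X$, pick $x\in X$ with $\alpha(x)\neq x$ and take $S$ to be the principal $\preceq$-upset of $(x,x)$ (which lies in $E$ since ${\leqslant}\subseteq E$). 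Then $(x,\alpha(x)) \in S\circ\alpha$ via the witness $z=x$, using $(x,x)\in S$ and $\alpha\subseteq E$. However, $(x,\alpha(x))\in\alpha\circ S$ would force $(\alpha(x),\alpha(x))\in S$, i.e.\ both $\alpha(x)\leqslant x$ and $x\leqslant\alpha(x)$, and hence $\alpha(x)=x$ by antisymmetry---contradicting the choice of $x$. Therefore $S\circ\alpha \neq \alpha\circ S$, and the algebra fails to be cyclic.
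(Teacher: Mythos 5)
Your proposal is correct and follows exactly the route the paper signposts: the paper does not reproduce a proof of this theorem (it is imported from~\cite{CR23}), but it states explicitly that the argument runs through the residuated-lattice structure on $\mathsf{Up}(\mathbf{E})$, Lemma~\ref{lem:definition_of_0}, and the compact formulas ${\sim}R=R^{c\smile}\circ\alpha$, $-R=\alpha\circ R^{c\smile}$ of Lemma~\ref{lem:twiddle_minus_alternative}, which is precisely your calculation for (i), and your witness $S={\uparrow}(x,x)$ for (ii) is the standard argument. The only point worth flagging is that complements here are taken in $E$ rather than $X^2$, so when you invoke Lemma~\ref{lem:important_eq_injective_map} you should note that $(\alpha\circ R)^c=\alpha\circ R^c$ persists for $E$-relative complements because $\alpha\subseteq E$ and $\alpha\circ E=E$.
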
 
We define the class of \emph{equivalence distributive involutive FL-algebras}, denoted $\mathsf{EDInFL}$, to be those algebras of the form $\mathfrak{D}(\mathbf{E})$ from part (i) of Theorem~\ref{thm:InFL-algebra}. If $E=X^2$ then we say that $\mathfrak{D}(\mathbf{E})$ is a \emph{full distributive involutive FL-algebra}, and denote the class of such algebras by $\mathsf{FDInFL}$.

Before we give some examples, we need the definition of a \emph{direct reduct}. This is an adaptation of~\cite[Definition 8]{KM20}. It differs from an ordinary reduct because the set of operations is not simply a subset of the original, but new operations are also defined in terms of the old operations.

\begin{definition}\label{def:direct}
Let $\mathbf{A} =\langle 
A, \wedge, \vee, \cdot, 1, {\sim},-\rangle$ be a DInFL-algebra. We call 
$\mathbf{A}_r=\langle A, \wedge, \vee, \cdot, \to, 1, {\sim}\rangle$
the \emph{direct reduct} of $\mathbf{A}$, where $\to$ is defined for $a,b \in A$ by 
$a\to b = {\sim}(-b \cdot a)$. 
\end{definition}

If a 
DInFL-algebra $\mathbf{A} =\langle 
A, \wedge, \vee, \cdot, 1, {\sim},-\rangle$
is commutative, then it is cyclic. Hence it is easy to see that if $\mathbf{A}$ is commutative and idempotent, then the direct reduct of $\mathbf{A}$ is a Sugihara monoid. If in addition, $\mathbf{A}$ is a chain, then $a \to b = {\sim}a \vee b$ when $a \leqslant b$ and $a \to b = {\sim}a \wedge b$ when $a > b$. 

In the concrete case, 
the operation $\Rightarrow$ can be added to $\mathfrak{D}(\mathbf{E})$ using
the following standard definition of 
the left residual of composition: for $R,S \in \mathsf{Up}(\mathbf{E})$ let 
$R \Rightarrow S= \left(R^{\smile}\circ S^c\right)^c
$. Using Lemma~\ref{lem:twiddle_minus_alternative} it is easy to see that this coincides with the expression given in Definition~\ref{def:direct}. 

\begin{example}\label{ex:S2}
The two-element Sugihara chain $\mathbf{S}_2$ can be represented over $X = \{x\}$ with ${\leqslant} = E = \alpha= \mathrm{id}_X$.
\end{example}

Using Theorem~\ref{thm:InFL-algebra}, we obtain a relational representation of the 
odd Sugihara chain $\mathbf{S}_3$. We observe that this representation is not covered by the construction given in Section~\ref{sec:construction}. 
 
\begin{example}\label{ex:S3}
Consider the three-element Sugihara chain $\mathbf{S}_3$. The underlying set is given by $S_3=\{a_{-1},a_0,a_1\}$ where $a_0$ is the identity of the monoid operation. Consider the two-element poset $X = \left\{x, y\right\}$ with order $\leqslant\, = \mathrm{id}_X$. 
We let $E=X^2$ and define  
the order automorphism $\alpha = \left\{\left(x, y\right), \left(y, x\right)\right\}$. We get that $\mathsf{Up}(\langle X^2,\preceq\rangle)$ has as its underlying lattice the 16-element Boolean lattice. It is straightforward to  calculate that ${\leqslant} = {\sim}{\leqslant}  = {\leqslant}^{c\smile} \circ \,\alpha$ and then to show that the set of binary relations $\{\varnothing, \leqslant, X^2\}$ 
is closed under $\cap$, $\cup$, $\circ$, $\Rightarrow$ and ${\sim}$. 
Define a map $h: S_3 \to \mathsf{Up}(\langle X^2,\preceq\rangle)$ by $h(a_{-1})=\varnothing$, $h(a_0)={\leqslant}$ and $h(a_1)=X^2$. It is easy to check that this is indeed an embedding of $\mathbf{S}_3$ into the direct reduct of $\mathfrak{D}(\mathbf{E})$. 
\end{example}

The next example is based on the representation of the finite even Sugihara chains $\mathbf{S}_{2n+2}$ ($n \in \mathbb{Z}^+$) by Maddux~\cite[Theorem 6.2]{Mad10}. The exposition below shows how that representation can be re-cast into our setting.  We will need this example in Section~\ref{subsec:RSM=SM}. 

\begin{example}\label{ex:Maddux2010}
Let $n \in \mathbb{Z}^+$ and consider $\mathbb{Q}^n = \left\{\left(q_1, \ldots, q_n\right) \mid q_1, \ldots, q_n \in \mathbb{Q}\right\}$. We will denote the $n$-tuple $\left( q_1, \dots, q_n\right)$ by $q$. Now define  binary relations $\mathrm{id}_{\mathbb{Q}^n}$ and $L_1$ on $\mathbb{Q}^n$ as follows: 
\begin{align*}
\left(p, q\right) \in \mathrm{id}_{\mathbb{Q}^n} \qquad & \textnormal{ iff } \qquad p = q \\
(p, q) \in L_1 \qquad & \textnormal{ iff } \qquad p_1 < q_1.
\end{align*}
For $i \in \{2, \ldots, n\}$, the relation  $L_i$ on $\mathbb{Q}^n$ is defined follows:
$$
(p, q) \in L_i \qquad \textnormal{ iff } \qquad \left(p_1, \ldots, p_{i-1}\right) = \left(q_1, \ldots, q_{i-1}\right) \textnormal{ and } p_i < q_i
$$
The relations in $\left\{\mathrm{id}_{\mathbb{Q}^n}, L_1, \ldots, L_n\right\}$ are pairwise disjoint.
We set $\leqslant_n := \bigcup\left\{\mathrm{id}_{\mathbb{Q}^n}, L_1, \ldots, L_n\right\}$. 
Let $\mathbf{X}=
\langle \mathbb{Q}^n,\leqslant_n\rangle$, $E=\left(\mathbb{Q}^n\right)^2$ and $\alpha=\mathrm{id}_{\mathbb{Q}^n}$. Notice that with $\alpha=\mathrm{id}_{\mathbb{Q}^n}$ we get ${\sim} \leqslant_n=\left(\leqslant_n\right)^{c\smile}= {<_n}$. Moreover, by Theorem~\ref{thm:InFL-algebra}, the
algebra $\mathfrak{D}(\mathbf{E})$ is a cyclic DInFL-algebra.

For $i \in \left\{-n-1, \ldots, -1, 1, \ldots, n+ 1\right\}$, define 
$$
T_i = \begin{cases} 
\varnothing & \text{if }\, i = -n-1 \\ 
\displaystyle \bigcup_{j=1}^{n+1+i}L_j & \text{if }\, -n \leqslant i \leqslant -1 \\ 
\leqslant_n & \text{if }\, i = 1\\
\left(T_{-i}\right)^{c\smile} & \text{if } \, 2\leqslant i \leqslant n+1
\end{cases}
$$
These relations are upsets of $\langle E, \preceq\rangle $ and form a Sugihara chain of length $2n+2$, i.e., 
$$\varnothing=T_{-n -1} \subseteq T_{-n} \subseteq \cdots \subseteq T_{-1} \subseteq T_1 \subseteq \cdots \subseteq T_{n} \subseteq T_{n+1} = \left(\mathbb{Q}^n\right)^2.$$
More precisely, the even Sugihara chain $\mathbf{S}_{2n+2}$ can be embedded into the direct reduct of the algebra $\mathfrak{D}(\mathbf{E})$ by setting $\varphi\left(a_i\right) = T_i$ for each $i \in \left\{-n-1, \ldots, -1, 1, \ldots, n+ 1\right\}$.  
\end{example}

\subsection{Representable DInFL-algebras} \label{sec:RDInFL}

Here we recall definitions and results from~\cite[Section 4]{CR23}. These will be needed in Section~\ref{sec:RSM}. As before, the major difference is that the results there were proved for distributive quasi relation algebras, but here we restrict to the simpler case of distributive InFL-algebras. 

Suppose we have  an index set $I$ and for each $i \in I$, let $\mathbf X_i  = \langle X_i, \le_i\rangle $ be a poset and $E_i$ an equivalence relation with  $\le_i\; \subseteq E_i$. 
We further require that $X_i \cap X_j = \varnothing$ whenever $i \neq j$.
Then set 
$$
X := \displaystyle \bigcup_{i\in I}X_i, \quad \displaystyle\le \; :=\bigcup_{i\in I}\le_i \quad \textnormal{and} \quad E := \displaystyle \bigcup_{i \in I}E_i.
$$
If we have maps $\gamma_i: X_i\rightarrow X_i$ ($i \in I$) define $\gamma: X \to X$ by setting, for each $x \in X$, $\gamma(x) = \gamma_i(x)$ whenever $x \in X_i$.
Clearly $\gamma = \displaystyle\bigcup_{i\in I}\gamma_i$. 
These definitions will be used throughout the theorems and their proofs in this section.

\begin{theorem}\cite[Theorem 4.1]{CR23}
\label{thm:prodXEalpha} 
Let  $I$ be an index set. For each $i \in I$, let $\mathbf X_i  = \langle X_i, \le_i\rangle $ be a poset and $E_i$ an equivalence relation such that $\le_i\; \subseteq E_i$. Assume  $X_i \cap X_j = \varnothing$ for all $i \neq j$. For each $i \in I$ we have $\alpha_i: X_i \rightarrow X_i$ an order automorphism of $\mathbf X_i$. Then:
\begin{enumerate}[\normalfont (i)]
\item $\mathbf{X}=\langle X, \le\rangle$ is a poset and $E$ is an equivalence relation on $X$ such that $\le\; \subseteq E$;
\item $\alpha$ is an order automorphism of $\mathbf X$ such that $\alpha \subseteq E$;
\item for the DInFL-algebras
$\mathfrak{D}\left(\mathbf E\right)$ and 
$\displaystyle \prod_{i \in I}\mathfrak{D}\left(\mathbf E_i\right)$, we have 
$\mathfrak{D}\left(\mathbf E\right) \cong \displaystyle \prod_{i \in I}\mathfrak{D}\left(\mathbf E_i\right)$. 
\end{enumerate}
\end{theorem}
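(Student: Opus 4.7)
The plan is as follows. Parts (i) and (ii) are routine verifications that hinge entirely on the disjointness hypothesis $X_i\cap X_j=\varnothing$ for $i\ne j$. Since each $\le_i$ and each $E_i$ has support in $X_i$, every pair that is $\le$-related (or $E$-related) in $X$ must lie entirely within some single $X_i$, so reflexivity, antisymmetry, transitivity and the equivalence-relation axioms carry over from the components; and $\le\; = \bigcup_i\le_i\;\subseteq \bigcup_i E_i=E$. Similarly $\alpha=\bigcup_i\alpha_i$ is a bijection of $X$ because the $\alpha_i$ are bijections of pairwise disjoint sets, it is an order automorphism since any comparable pair $(x,y)$ lies in some $\le_i$, and $\alpha\subseteq E$ follows from $\alpha_i\subseteq E_i$.

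The substantive content is (iii). The key observation is that the order $\preceq$ on $E$ also respects the decomposition: if $(u,v)\preceq(x,y)$ then $x\le u$ and $v\le y$, which forces $u,v,x,y$ all to lie in a common $X_i$, whence $(u,v),(x,y)\in E_i$. Thus $\langle E,\preceq\rangle$ is the disjoint union of the posets $\langle E_i,\preceq\rangle$, and every upset $R$ of $\mathbf E$ is uniquely of the form $R=\bigcup_{i\in I}R_i$ with $R_i:=R\cap E_i\in\mathsf{Up}(\mathbf E_i)$. Accordingly, I define
$$
\Phi:\mathsf{Up}(\mathbf E)\longrightarrow\prod_{i\in I}\mathsf{Up}(\mathbf E_i),\qquad \Phi(R)=(R\cap E_i)_{i\in I},
$$
which is immediately a lattice isomorphism (with inverse $(R_i)_{i\in I}\mapsto\bigcup_iR_i$), sends the identity $\le$ of $\mathfrak D(\mathbf E)$ to the identity $(\le_i)_{i\in I}$ of the product, and commutes with complementation relative to $E$ (since $R^c=\bigcup_i(E_i\setminus R_i)$) and with converse.

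It remains to check that $\Phi$ is a homomorphism for $\circ$, ${\sim}$ and $-$. For composition, suppose $(x,y)\in(R\circ S)\cap E_i$ with witness $z\in X$ satisfying $(x,z)\in R\subseteq E$ and $(z,y)\in S\subseteq E$. Since $x\in X_i$ and the $E$-class of $x$ coincides with its $E_i$-class, we must have $z,y\in X_i$, so $(R\circ S)\cap E_i=(R\cap E_i)\circ(S\cap E_i)$, giving $\Phi(R\circ S)=\Phi(R)\cdot\Phi(S)$ componentwise. For the linear negations I invoke Lemma~\ref{lem:twiddle_minus_alternative}, which gives ${\sim}R=R^{c\smile}\circ\alpha$ and $-R=\alpha\circ R^{c\smile}$; combining the already-noted compatibilities of $\Phi$ with complement and converse with the component-wise behaviour of composition (applied now using $\alpha=\bigcup_i\alpha_i\subseteq E$), we conclude $\Phi({\sim}R)=({\sim}R_i)_{i\in I}$ and $\Phi(-R)=(-R_i)_{i\in I}$.

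The main obstacle is essentially bookkeeping, and it reduces to the single structural fact that $R,S\subseteq E=\bigcup_iE_i$ forces any intermediate witness $z$ in a composition to live in the same component $X_i$ as the endpoints; this is what prevents ``cross-component'' contributions and makes the decomposition work. Once this observation is in place the algebraic verification is purely formal, and the isomorphism $\Phi$ establishes (iii).
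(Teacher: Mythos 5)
Your proof is correct: the disjointness of the $X_i$ does force every $\le$-, $E$-, $\preceq$-related pair and every composition witness into a single component, so the decomposition $\mathsf{Up}(\mathbf E)\cong\prod_i\mathsf{Up}(\mathbf E_i)$ via $R\mapsto(R\cap E_i)_{i\in I}$ goes through and is compatible with $\circ$, $\le$, complement-in-$E$, converse, and hence with ${\sim}$ and $-$ via Lemma~\ref{lem:twiddle_minus_alternative}. The present paper only cites this result from~\cite[Theorem 4.1]{CR23} without reproducing a proof, but your argument is the natural one for that statement and matches the intended approach.
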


The following theorem relies on part (iii) of Theorem~\ref{thm:prodXEalpha} and is used to define representable DInFL-algebras. 

\begin{theorem}\cite[Theorem 4.4]{CR23}
$\mathbb{P}\left(\mathsf{FDInFL}\right) = \mathsf{EDInFL}$.
\end{theorem}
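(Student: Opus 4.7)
The plan is to establish both inclusions and apply Theorem~\ref{thm:prodXEalpha}(iii) in each direction. The theorem's setup (a family of disjoint posets, equivalences and order automorphisms yielding a global $\mathbf X, E, \alpha$) is exactly what is needed to pass between a product of $\mathfrak{D}(\mathbf{E}_i)$'s and a single $\mathfrak{D}(\mathbf{E})$, so no fresh computation of the InFL-algebra operations is required.

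For the inclusion $\mathbb{P}(\mathsf{FDInFL}) \subseteq \mathsf{EDInFL}$, take a family $\{\mathfrak{D}(\mathbf{E}_i)\}_{i\in I}$ with each $\mathfrak{D}(\mathbf{E}_i) \in \mathsf{FDInFL}$, so that $E_i = X_i^2$ and $\alpha_i$ is an order automorphism of $\mathbf{X}_i = \langle X_i, \leq_i\rangle$. Replacing each $X_i$ by an isomorphic copy if necessary, we may assume the $X_i$ are pairwise disjoint. Form $X$, $\leq$, $E$, and $\alpha$ as in the preamble to Theorem~\ref{thm:prodXEalpha}. Then $E = \bigcup_{i\in I} X_i^2$ is an equivalence relation on $X$ (in general strictly smaller than $X^2$), so $\mathfrak{D}(\mathbf{E}) \in \mathsf{EDInFL}$, and Theorem~\ref{thm:prodXEalpha}(iii) gives $\prod_{i\in I} \mathfrak{D}(\mathbf{E}_i) \cong \mathfrak{D}(\mathbf{E}) \in \mathsf{EDInFL}$.

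For the reverse inclusion $\mathsf{EDInFL} \subseteq \mathbb{P}(\mathsf{FDInFL})$, take $\mathfrak{D}(\mathbf{E})$ built from $\mathbf{X} = \langle X, \leq\rangle$, an equivalence relation $E$ with ${\leq} \subseteq E$, and an order automorphism $\alpha$ of $\mathbf{X}$ with $\alpha \subseteq E$. Let $\{X_i\}_{i \in I}$ be the partition of $X$ into $E$-equivalence classes, and set $\leq_i := {\leq} \cap (X_i \times X_i)$ and $E_i := X_i \times X_i$. Then each $\mathbf{X}_i = \langle X_i, \leq_i\rangle$ is a poset with ${\leq_i} \subseteq E_i$; because ${\leq} \subseteq E$ we have ${\leq} = \bigcup_i {\leq_i}$, and by construction $E = \bigcup_i E_i$ and the $X_i$ are pairwise disjoint. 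The containment $\alpha \subseteq E$ means $(x, \alpha(x)) \in E$ for every $x \in X$, so $\alpha$ maps each equivalence class $X_i$ into itself; being a bijection, $\alpha_i := \alpha|_{X_i}$ is a bijection of $X_i$, and the fact that $\alpha$ is order-preserving and order-reflecting on $\mathbf{X}$ restricts to the corresponding statement for $\alpha_i$ on $\mathbf{X}_i$. Hence each $\alpha_i$ is an order automorphism of $\mathbf{X}_i$ with $\alpha_i \subseteq E_i$, and $\alpha = \bigcup_i \alpha_i$.

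With the hypotheses of Theorem~\ref{thm:prodXEalpha} verified, part (iii) yields $\mathfrak{D}(\mathbf{E}) \cong \prod_{i \in I} \mathfrak{D}(\mathbf{E}_i)$; since each $E_i = X_i^2$, every factor lies in $\mathsf{FDInFL}$, so $\mathfrak{D}(\mathbf{E}) \in \mathbb{P}(\mathsf{FDInFL})$. The only step needing more than bookkeeping is the verification that $\alpha$ preserves each equivalence class setwise and restricts to an order automorphism on each piece; this is the main obstacle, but it is immediate from $\alpha \subseteq E$ together with $\alpha$ being a bijective order automorphism of $\mathbf{X}$.
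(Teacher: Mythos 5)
Your proof is correct and is essentially the argument the paper intends: the paper states that this theorem ``relies on part (iii) of Theorem~\ref{thm:prodXEalpha}'', and your two inclusions (disjointifying the $X_i$ and taking $E=\bigcup_i X_i^2$ for one direction; partitioning $X$ into $E$-classes, on each of which $\alpha$ restricts to an order automorphism because $\alpha\subseteq E$, for the other) are precisely the intended application of that result. No issues beyond the usual reading of the equality up to isomorphism.
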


The definition below is a generalisation of~\cite[Definition 4.5]{CR23}, given there for DqRAs. 
\begin{definition}\label{def:RDInFL}
A DInFL-algebra  $\mathbf{A} = \left\langle A, \wedge, \vee, \cdot, 1,  {\sim},{-}\right\rangle$ is \emph{representable} if 
$\mathbf{A} \in \mathbb{ISP}\left(\mathsf{FDInFL}\right)$
or, equivalently, $\mathbf{A} \in \mathbb{IS}\left(\mathsf{EDInFL}\right)$. 
\end{definition}

We denote the class of representable DInFL-algebras by $\mathsf{RDInFL}$. 
In Theorem~\ref{thm:Pu(RDInFL)=RDInFL} we will show that 
$\mathsf{RDInFL}$
is closed under ultraproducts. That will be a key step in showing that all Sugihara monoids can be represented as algebras of binary relations. 

We end this section by observing that if one considers posets $\langle X, = \rangle$ (i.e. with the discrete order), and $\alpha=\mathrm{id}_X$, then the constructions and definitions above, with some slight modification, will coincide with those for relation algebras. The modifications include using converse, complement, and bounds in the signature. 

\section{Constructing  
finite 
odd Sugihara 
chains}\label{sec:construction}

Let $\langle C, \leqslant\rangle$ be a chain. Set $X := \left(C\times \left\{-1\right\}\right) \cup \left(C\times \left\{1\right\}\right)$. If $b \in \left\{-1, 1\right\}$, we will write $x^b$ instead of $\left(x, b\right)$. Define a binary relation $\leqslant_X$ on $X$ by
\[
x^b \leqslant_X y^d \quad \text{ iff } \quad x^b = y^d  \quad \textnormal{ or } \quad x < y 
\]

It is not difficult to see that $\langle X,\leqslant_X\rangle$ is a poset. We can define a map $\alpha: X \to X$ by setting, for each $x^b \in X$,  $\alpha(x^b) = x^{-b}$. The map $\alpha$ will be a self-inverse order automorphism of $\left\langle X, \leqslant_X\right\rangle$. This poset is drawn in Figure~\ref{fig:orderX}. 

%20231019ac figure added 
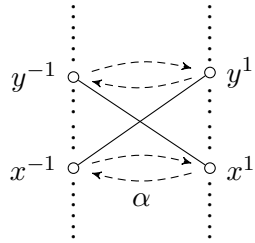
\begin{figure}[th]
\vspace{-0.4cm}
\centering
\begin{tikzpicture}[scale=0.6]
\begin{scope}
  % Elements
  \node[unshaded] (fn) at (-1,1) {};
  \node[unshaded] (tn) at (2,1) {};
  \node at (-1,0.6) {$\vdots$};
  \node at (-1,-0.1) {$\vdots$};
  \node at (-1,3.8) {$\vdots$};
  \node at (-1,4.5) {$\vdots$};

  \node at (-1,1.85) {$\vdots$};
  \node at (-1,2.65) {$\vdots$};

  \node at (2,0.6) {$\vdots$};
  \node at (2,-0.1) {$\vdots$};
  \node at (2,3.8) {$\vdots$};
  \node at (2,4.5) {$\vdots$};

  \node at (2,1.85) {$\vdots$};
  \node at (2,2.65) {$\vdots$};
  \node[unshaded] (f1) at (-1,3) {};
  \node[unshaded] (t1) at (2,3.1) {};
  % Order
  \draw[order] (fn) -- (t1);
  \draw[order] (tn) -- (f1);
  % Labels
  \path (fn) edge [map, bend left=15] node {} (tn);
  \path (tn) edge [map, bend left=15] node {} (fn);
  \path (f1) edge [map, bend left=15] node {} (t1);
  \path (t1) edge [map, bend left=15] node {} (f1);
  \node[label] at (0.5,0.3) {$\alpha$};

  \node[label,anchor=east] at (fn) {$x^{-1}$};
  \node[label,anchor=west] at (tn) {$x^1$};
  \node[label,anchor=east] at (f1) {$y^{-1}$};
  \node[label,anchor=west] at (t1) {$y^1$};
  \end{scope}
\end{tikzpicture}

\caption{The poset $\langle X,\leqslant_X\rangle$ from a chain $\langle C,\leqslant\rangle$ with order automorphism $\alpha: X \to X $. }\label{fig:orderX}
\end{figure}

The first lemma in this section establishes that when we have a poset $\langle X,\leqslant_X\rangle$ constructed from a chain as above, the order $\leqslant_X$ will be fixed by the ${\sim}$ operation described in Theorem~\ref{thm:InFL-algebra}.
Note that $(x^b,y^d) \in {\sim}R$ if and only if $(y^{-d},x^b) \notin R$. In particular, $(x^b,y^d) \in 0$ if and only if $y^{-d} \nleqslant x^b$. 

\begin{lemma}\label{lem:leq=0} 
Let $\langle C,\leqslant\rangle$ be a chain, and $\langle X,\leqslant_X\rangle$ with $\alpha$ as above. Then 
$\leqslant_X\, = 0 = (\leqslant_X)^{c\smile}\circ \alpha$.  
\end{lemma}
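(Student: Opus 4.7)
The plan is first to note that the second equality $0 = (\leqslant_X)^{c\smile}\circ\alpha$ is nothing more than the definition of $0$ supplied by Lemma~\ref{lem:definition_of_0} (together with the fact that $\alpha$ is a self-inverse bijection, so the order on which we take complement-converse is $\leqslant_X$). The whole content therefore lies in proving the set-equality $\leqslant_X\,=\,0$. I would use the pointwise description of $0$ already mentioned just before the lemma: since $\alpha$ is a bijection with $\alpha(x^b)=x^{-b}$, one has $(x^b,y^d)\in 0$ iff $(x^b,y^{-d})\in{\leqslant_X^{c\smile}}$ iff $y^{-d}\nleqslant_X x^b$. So the goal becomes the biconditional
\[
x^b\leqslant_X y^d \qquad\Longleftrightarrow\qquad y^{-d}\nleqslant_X x^b.
\]

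For the forward direction I would split into the two defining cases of $\leqslant_X$. If $x^b=y^d$ then $x=y$ and $b=d$, so $y^{-d}=x^{-b}$; since $-b\neq b$ and we do not have $x<x$, the point $x^{-b}$ is incomparable with $x^b$, so $y^{-d}\nleqslant_X x^b$. If instead $x<y$, then neither $y<x$ nor $y^{-d}=x^b$ can hold (because $x\neq y$), so again $y^{-d}\nleqslant_X x^b$.

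For the reverse direction I would argue contrapositively: assume $x^b\nleqslant_X y^d$, i.e.\ $x^b\neq y^d$ and $x\not<y$. Because $\langle C,\leqslant\rangle$ is a chain, $x\not<y$ gives $y\leqslant x$, and I would consider the sub-cases $y<x$ (which yields $y^{-d}\leqslant_X x^b$ directly) and $y=x$ (which forces $b\neq d$, hence $-d=b$ and $y^{-d}=x^b$, so trivially $y^{-d}\leqslant_X x^b$). Either way $y^{-d}\leqslant_X x^b$, contradicting $y^{-d}\nleqslant_X x^b$, which completes the equivalence.

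The only subtlety—what I would call the main potential pitfall rather than a real obstacle—is handling the diagonal case $x=y$ with $b\neq d$: here the two points at the same $C$-coordinate are incomparable in $\leqslant_X$ but $\alpha$ swaps them, and one must check that this incomparability is exactly what is needed for $\leqslant_X$ to coincide with $0$. Once the chain hypothesis on $\langle C,\leqslant\rangle$ is used to rule out the ``neither $x<y$ nor $y<x$'' situation except in the diagonal case, the case analysis is forced and the equality $\leqslant_X\,=\,0$ follows.
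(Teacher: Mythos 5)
Your proposal is correct and follows essentially the same route as the paper: both reduce the claim to the biconditional $x^b\leqslant_X y^d\Leftrightarrow y^{-d}\nleqslant_X x^b$ via the pointwise description of $0$ and then verify it by elementary case analysis using the definition of $\leqslant_X$ and the chain hypothesis on $\langle C,\leqslant\rangle$. The only difference is cosmetic: you run the backward direction contrapositively with cases on $y<x$ versus $y=x$, whereas the paper argues directly with cases on whether $b=-d$.
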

\begin{proof}
To prove the left-to-right inclusion, assume $x^b \leqslant_X y^d$. Then $x^b = y^d$ or $x < y$. If $x^b = y^d$, then $y^{-d} \not\leqslant x^{b}$, and so $\left(x^b, y^d\right) \in   0$. 
On the other hand, if $x < y$, then $y \not\leqslant x$, so $y^{-d} \not\leqslant x^{b}$. 
Again we have $(x^b, y^d) \in 0$.
For the other inclusion, let $\left(x^b, y^d\right) \in 0 =  \left(\leqslant_X\right)^{c\smile}\circ \alpha$. Then we have 
$y^{-d} \not\leqslant x^{b}$. If $b = -d$, then 
$y \not\leqslant x$, so
$x < y$. Hence, $x^b \leqslant_X y^d$. On the other hand, if $b \neq -d$, then $y^{-d} \not\leqslant x^{b}$ implies $y \not < x$. This means $x \leqslant y$, and so, since $d = b$, we get $x^b \leqslant_X y^d$. 
\end{proof}

Consider the 
rationals  $\left\langle \mathbb{Q}, \leqslant\right\rangle$ with their usual order.
Let $\mathbb{Q}^n = \left\{\left(q_1, \ldots, q_n\right)\mid q_1, \ldots, q_n \in \mathbb{Q}\right\}$.
We will again denote the $n$-tuple $\left( q_1, \dots, q_n\right)$ by $q$. 
Define binary relations $\mathrm{id}_{\mathbb{Q}^n}$ and $L_{1}$ on $\mathbb{Q}^n$ by 
\begin{align*}
\left(p, q\right) \in \mathrm{id}_{\mathbb{Q}^n} & \quad \textnormal{ iff }\quad  p =  q\\ 
\left(p, q\right) \in L_1 & \quad \textnormal{ iff } \quad p_1 < q_1 
\end{align*}
and for $i \in \left\{2, \ldots, n\right\}$, define binary relations $L_i$ on $\mathbb{Q}^n$ by
\begin{align*}
\left(p, q\right) \in L_i & \quad \textnormal{ iff } \quad \left(p_1, \ldots, p_{i-1}\right) = \left(q_1, \ldots, q_{i-1}\right) \textnormal{ and } p_i < q_i.
\end{align*}
Now set $\leqslant_n := \bigcup\left\{\mathrm{id}_{\mathbb{Q}^n}, L_1, \ldots, L_n\right\}$.  
Then $\leqslant_n$ is the lexicographic order on $\mathbb{Q}^n$. Since the rational numbers is linearly ordered, it follows that $\left\langle \mathbb{Q}^n, \leqslant_n\right\rangle$ is a chain. Part (i) of the lemma below can be used to show that $\leqslant_n$ is transitive. It will also help to simplify some of the proofs later on.

\begin{lemma}\label{lem:transitivity_Tis}
Let $n \geqslant 1$. 
\begin{enumerate}[\normalfont (i)]
\item If $\left(p, q\right) \in L_j$ and $\left(q, r\right) \in L_k$ with $j, k \in \{1, \ldots, n\}$, then $\left(p, r\right) \in L_{\textnormal{min}\{j, k\}}$.
\item The structure $\left\langle \mathbb{Q}^n, \leqslant_n\right\rangle$ is a chain.
\end{enumerate}
\end{lemma}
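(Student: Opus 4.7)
For part (i), the plan is to handle the cases $j \leqslant k$ and $j > k$ by the same coordinate-by-coordinate analysis, since the two hypotheses play essentially parallel roles once the minimum is fixed. In the case $j \leqslant k$, so $\min\{j,k\} = j$, I would unpack both memberships: from $(p,q) \in L_j$ I get $p_i = q_i$ for all $i < j$ and $p_j < q_j$, and from $(q,r) \in L_k$ with $k \geqslant j$ I get $q_i = r_i$ for all $i < k$, which in particular covers every $i < j$. Chaining the equalities yields $p_i = r_i$ for $i < j$. For the strict inequality at position $j$, I split into the subcases $j < k$ (in which $q_j = r_j$, so $p_j < q_j = r_j$) and $j = k$ (in which $p_j < q_j < r_j$ by transitivity of $<$ on $\mathbb{Q}$). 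Either way, $(p,r) \in L_j$. The case $j > k$ is entirely analogous, this time pushing the equalities of $p$ with $q$ down to coordinate $k$ and inheriting the strict inequality $q_k < r_k$ to obtain $p_k < r_k$.

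For part (ii), I would show that $\leqslant_n$ is a partial order and is linear. Reflexivity is immediate from $\mathrm{id}_{\mathbb{Q}^n} \subseteq \leqslant_n$. For transitivity, given $(p,q),(q,r) \in \leqslant_n$, the cases $p = q$ or $q = r$ are trivial; otherwise $(p,q) \in L_j$ and $(q,r) \in L_k$ for some $j,k$, and part (i) delivers $(p,r) \in L_{\min\{j,k\}} \subseteq \leqslant_n$. For antisymmetry, if $(p,q) \in \leqslant_n$ with $p \neq q$, then $(p,q) \in L_j$ for some $j$, so $p_i = q_i$ for $i < j$ and $p_j < q_j$. I would then check that this rules out $(q,p)$ lying in any $L_k$: if $k < j$ one would need $q_k < p_k$, contradicting $p_k = q_k$; if $k = j$ one would need $q_j < p_j$, contradicting $p_j < q_j$; and if $k > j$ one would need $q_j = p_j$, again contradicting $p_j < q_j$. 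Of course $(q,p) \notin \mathrm{id}_{\mathbb{Q}^n}$ since $p \neq q$.

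Finally, for linearity, given distinct $p, q \in \mathbb{Q}^n$, let $j$ be the least index where $p_j \neq q_j$ (this exists because the tuples are finite and distinct). Trichotomy of $<$ on $\mathbb{Q}$ applied at coordinate $j$ gives either $p_j < q_j$, whence $(p,q) \in L_j$, or $q_j < p_j$, whence $(q,p) \in L_j$. In either case, $p$ and $q$ are comparable under $\leqslant_n$. The argument has no genuine obstacle; the only delicate point is accurate bookkeeping through the three or four subcases in (i) and in antisymmetry, and the essential input throughout is just trichotomy on $\mathbb{Q}$ combined with the way the relations $\mathrm{id}_{\mathbb{Q}^n}, L_1, \ldots, L_n$ partition $\leqslant_n$ according to the first coordinate of disagreement.
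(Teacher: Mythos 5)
Your proof is correct and complete. The paper states Lemma~\ref{lem:transitivity_Tis} without proof, and your coordinate-by-coordinate case analysis (splitting on $j\leqslant k$ versus $j>k$ for part (i), and using the least index of disagreement together with trichotomy on $\mathbb{Q}$ for linearity in part (ii)) is exactly the routine argument the authors evidently intend, so there is nothing to compare it against and no gap to report.
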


Let $X := \left(\mathbb{Q}^n\times \left\{-1\right\}\right) \cup \left( \mathbb{Q}^n\times \left\{1\right\}\right)$, 
and define a binary relation $\leqslant_X$ on $X$ by
\[
p^b \leqslant_X q^d \quad \text{ iff } \quad p^b = q^d  \quad \textnormal{ or } \quad p <_n q 
\] 
Consider the posets $\left\langle X, \le_X\right\rangle$ and $\left\langle X^2, \preceq\right\rangle$ (see Section~\ref{subsec:construct}).
 Let $\alpha: X \to X$ be defined by $\alpha\left(p^b\right) = p^{-b}$. For each $1 \leqslant j \leqslant n$, set $$U_j = \left\{\left(p^b, q^d\right)\Big |\; b, d \in \left\{-1, 1\right\} \text{ and } \left(p, q\right) \in L_j\right\}.$$
 Using $\alpha$ and $U_j$, we alter the definition of the binary relation $T_i$ in~\cite[Theorem 6.2]{Mad10} and Example~\ref{ex:Maddux2010} as follows. For $i \in \left\{-n-1, \ldots, n+ 1\right\}$, define 
$$
R_i = \begin{cases} 
\varnothing & \text{if }\, i = -n-1 \\ 
\displaystyle \bigcup_{j=1}^{n+1+i}U_j & \text{if }\, -n \leqslant i \leqslant -1 \\ 
\leqslant_X & \text{if }\, i = 0\\

{\sim} R_{-i} 
= \left(R_{-i}\right)^{c\smile}\circ \alpha & \text{if } \, 1\leqslant i \leqslant n+1
\end{cases}
$$
The relations $R_i$ will be the elements of our representation of the odd Sugihara monoid $\mathbf{S}_{2n+3}$. 

To see how the construction of Maddux~\cite{Mad10} (see Example~\ref{ex:Maddux2010}) relates to our construction, consider the map $\delta$ that sends relations $R$ on $\left(\mathbb{Q}^n\right)^2$ to relations on $X$ as follows:
$$
\delta\left(R\right) = \left\{\left(p^b, q^d\right) \mid b, d \in \{-1, 1\} \textnormal{ and } \left(p, q\right) \in R\right\}.
$$

Clearly $\delta\left(L_j\right) = U_j$. 
It is straightforward to show that $$\delta\left(\bigcup_{j=1}^{n+1+i}L_j\right) = \bigcup_{j=1}^{n+1+i}\delta\left(L_j\right)$$
which means $\delta\left(T_i\right) = R_i$ for all $i \in \left\{-n-1, \ldots, -1\right\}$. Moreover, we can also show that $\delta\left(\left(T_{-i}\right)^{c\smile}\right) = \left(\delta\left(T_{-i}\right)\right)^{c\smile}$ for all $i \in \left\{1, \ldots, n+1\right\}$, which implies that 
$\delta\left(T_i\right) \circ \alpha = R_i$ for all $i \in \left\{1, \ldots, n+ 1\right\}$. 

\begin{lemma}\label{lem:Ri_upsets}
For each $i \in \left\{-n-1, \ldots, n+ 1\right\}$, the relation $R_{i}$ is an upset of $\left\langle X^2, \preceq\right\rangle$.
\end{lemma}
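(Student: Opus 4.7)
The proof plan is to argue by cases on $i$, handling the easy extremes first and then the two interesting ranges. For $i=-n-1$, $R_i=\varnothing$ is vacuously an upset, and for $i=0$, $R_0={\leqslant_X}$ is an upset of $\langle X^2,\preceq\rangle$ directly by Lemma~\ref{lem:order_identity}(i) applied with $E=X^2$ (both $\leqslant_X$ and $\alpha$ trivially sit inside $X^2$). For the positive range $1\leqslant i\leqslant n+1$, my plan is to bootstrap from the negative range: once $R_{-i}$ is known to be an upset, Lemma~\ref{lem:important_up-_and_down-sets}(ii) gives that $R_{-i}^c$ is a downset; part~(iii), read right-to-left, converts this to $R_{-i}^{c\smile}$ being an upset; and then part~(iv) yields that $R_{-i}^{c\smile}\circ\alpha=R_i$ is an upset.

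The substantive case is $-n\leqslant i\leqslant -1$, where $R_i=\bigcup_{j=1}^k U_j$ with $k=n+1+i\in\{1,\ldots,n\}$. Individual $U_j$ are in general not upsets (for example $U_2$ demands equality of the first coordinate, which is destroyed by comparisons in $\leqslant_X$), so the argument has to treat the union globally rather than termwise. The cleaner characterisation I would first establish is
$$(p,q)\in \bigcup_{j=1}^k L_j \quad\Longleftrightarrow\quad (p_1,\ldots,p_k)<_k(q_1,\ldots,q_k),$$
where $<_k$ denotes strict lexicographic order on $\mathbb{Q}^k$; both directions come from picking the minimal coordinate $j\leqslant k$ at which $p$ and $q$ disagree and noting that $j\leqslant k$ iff the truncations differ.

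With that identification in hand, the upset condition on $R_i$ unpacks smoothly. Suppose $(p^b,q^d)\in R_i$ with $x^e\leqslant_X p^b$ and $q^d\leqslant_X y^f$. By the definition of $\leqslant_X$ one has $x\leqslant_n p$ and $q\leqslant_n y$, irrespective of the second coordinates. A short case split on whether the first position of disagreement between $x$ and $p$, respectively $q$ and $y$, lies within the first $k$ coordinates gives $(x_1,\ldots,x_k)\leqslant_k(p_1,\ldots,p_k)$ and $(q_1,\ldots,q_k)\leqslant_k(y_1,\ldots,y_k)$. Chaining these with the hypothesis $(p_1,\ldots,p_k)<_k(q_1,\ldots,q_k)$ yields $(x_1,\ldots,x_k)<_k(y_1,\ldots,y_k)$, and the characterisation above returns $(x^e,y^f)\in R_i$. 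The main obstacle is precisely spotting the lex-restriction characterisation; without it one is tempted to proceed termwise on each $U_j$ and gets stuck on the mismatched-prefix examples.
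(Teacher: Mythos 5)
Your proof is correct, and while its outer skeleton (dispose of $\varnothing$ and $\leqslant_X$, bootstrap the positive indices from the negative ones via Lemmas~\ref{lem:important_up-_and_down-sets} and~\ref{lem:twiddle_minus_alternative}, then verify the negative range by hand) coincides with the paper's, your treatment of the substantive case $-n\leqslant i\leqslant -1$ is genuinely different. The paper fixes the witnessing index $j$ with $(p,q)\in L_j$ and runs a five-way case split on how $r$ relates to $p$ and $s$ to $q$, repeatedly invoking Lemma~\ref{lem:transitivity_Tis} to land in some $L_m$ with $m\leqslant j$. You instead prove once and for all that $\bigcup_{j=1}^{k}L_j$ is exactly the strict lexicographic order on the truncations to the first $k$ coordinates, after which the upset property reduces to the monotonicity of truncation under $\leqslant_n$ and the transitivity of a total order ($\leqslant_k$ composed with $<_k$ composed with $\leqslant_k$ gives $<_k$). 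This buys a shorter, more conceptual argument that makes visible \emph{why} the union is an upset even though the individual $U_j$ for $j\geqslant 2$ are not (your $U_2$ counterexample is apt and mirrors the paper's Case~2, where membership drops from $U_j$ down to $U_1$); the paper's version, by contrast, stays closer to the raw definitions and reuses machinery (Lemma~\ref{lem:transitivity_Tis}) that it needs again for the transitivity and composition lemmas later in the section. One small bookkeeping point in your favour: your bootstrap explicitly covers $i=n+1$ (via $R_{-n-1}=\varnothing$), whereas the paper absorbs that case into the opening remark that $X^2$ is an upset.
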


\begin{proof}
Both $X^2$ and $\varnothing$ are upsets of $\left\langle X^2, \preceq\right\rangle$. The fact that $\leqslant_X$ is an upset of $\left\langle X^2, \preceq\right\rangle$ follows from Lemma~\ref{lem:order_identity}. 
It would also follow from Lemmas \ref{lem:important_up-_and_down-sets} and \ref{lem:twiddle_minus_alternative} that $R_i$ is an upset for $i \in \left\{1, \ldots, n\right\}$, if we can show that $R_i$ is an upset of $\left\langle X^2, \preceq\right\rangle$ when $-n \leqslant i < 0$. 

Let $(p^b, q^d) \in R_{i}$ and assume $\left(p^b, q^d\right) \preceq \left(r^e, s^f\right)$. The former implies that $(p^e, q^f) \in U_{j}$ for some $j$ such that $1 \leqslant j \leqslant n + 1+ i$. Hence, $(p, q) \in L_j$ for some $j$ such that $1 \leqslant j \leqslant n + 1+ i$. If $j = 1$, then we have $p_1 < q_1$.  But $\left(p^b, q^d\right) \preceq \left(r^e, s^f\right)$ implies that $r \leq_n p$ and $q \leq_n s$, so $r_1 \leqslant p_1 < q_1 \leqslant s_1$. Hence, $\left(r^e, s^f\right) \in U_1 \subseteq R_{i}$. 

On the other hand, if $1 < j \leqslant n + 1 + i$, then $\left( p_1, \ldots, p_{j-1}\right) = \left( q_1, \ldots, q_{j-1}\right)$ and $p_j < q_j$.  We now consider the following cases:
\\

\noindent
\underline{Case 1:} $r = p$ and $q = s$. In this case we get $\left(r_1, \ldots, r_{j-1}\right) = \left(p_1, \ldots, p_{j-1}\right) = \left(q_1, \ldots, q_{j-1}\right) = \left(s_1, \ldots, s_{j-1}\right)$ and $r_j = p_j < q_j = s_j$. Hence, $\left(r^e, s^f\right) \in U_j \subseteq R_{i}$. 
\\

\noindent
\underline{Case 2:} $r_1 < p_1$ or $q_1 < s_1$. Then $r_1 < p_1 = q_1 \leqslant s_1$ or $r_1 \leqslant  p_1 = q_1 < s_1$. In both cases, we have $\left(r^e, s^f\right) \in U_1 \subseteq R_{i}$.
\\

\noindent
\underline{Case 3:} $r = p$ and $\left(q, s\right) \in L_k$ for some $k \in \left\{2, \ldots, n\right\}$. Here we have $\left(p, s\right) \in L_{\textnormal{min}\{j, k\}}$ by Lemma~\ref{lem:transitivity_Tis}. Hence, since $r = p$, it follows that $\left(r_1, \ldots, r_{\textnormal{min}\{j, k\}-1}\right) = \left(p_1, \ldots, p_{\textnormal{min}\{j, k\}-1}\right)$ and $r_{\textnormal{min}\{j, k\}} = p_{\textnormal{min}\{j, k\}} < s_{\textnormal{min}\{j, k\}}$. This means $\left(r, s\right) \in L_{\textnormal{min}\{j, k\}}$, and so $\left(r^e, s^f\right) \in U_{\text{min}\{j, k\}} \subseteq R_{i}$. \\

\noindent
\underline{Case 4:} $q = s$ and $\left(r, p\right) \in L_k$ for some $k \in \left\{2, \ldots, n\right\}$.  This case is similar to Case 3.
\\

\noindent
\underline{Case 5:} $\left(r, p\right) \in L_k$ and $\left(q, s\right) \in L_\ell$ for $k, \ell \in \left\{2, \ldots, n\right\}$. By Lemma~\ref{lem:transitivity_Tis}, $\left(r, q\right) \in L_{\text{min}\{j, k\}}$, and so, by another application of Lemma~\ref{lem:transitivity_Tis}, we have $\left(r, s\right) \in L_{\text{min}\{\text{min}\{j, k\}, \ell\}}$. Hence, $\left(r, s\right) \in L_{\text{min}\{j, k, \ell\}}$, which means $\left(r^e, s^f\right) \in U_{\text{min}\{j, k, \ell\}} \subseteq R_{i}$.
\end{proof}

Notice that when defining $R_i$ for $i \in \left\{1, \ldots, n+1\right\}$ we chose to use the unary order-reversing operation $\sim$ instead of $-$. Since $\alpha\neq \text{id}_{X}$ we know from Theorem~\ref{thm:InFL-algebra} that the algebra will not be cyclic and hence ${\sim}R\neq -R$ in general. However, the following lemma shows that for the relations $R_i$, we will in fact get ${\sim}R_i=-R_i$. 

\begin{lemma}
For each $i \in \left\{-n-1, \ldots, n+1\right\}$, we have $-R_i = {\sim}R_i$. 
\end{lemma}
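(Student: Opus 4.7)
The plan is to reformulate ${\sim}R_i = -R_i$ as a symmetry condition on $R_i$ that can be read off directly from the definition, and then to verify that condition on the building blocks and show it is preserved by the $\sim$ operation. By Lemma~\ref{lem:twiddle_minus_alternative}, we have ${\sim}R = R^{c\smile} \circ \alpha$ and $-R = \alpha \circ R^{c\smile}$, so the claim is equivalent to the assertion that $R_i^{c\smile}$ commutes with $\alpha$. Since $\alpha$ is a self-inverse bijection with $\alpha^{\smile} = \alpha$ (the graph of $\alpha$ is symmetric as a relation), and since, by Lemma~\ref{lem:important_eq_injective_map} together with Proposition~\ref{prop:properties_binary_relations}, conjugation by $\alpha$ commutes with both complement and converse, this commutation is in turn equivalent to $\alpha \circ R_i \circ \alpha = R_i$. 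Unpacked pointwise, this is the property $(\ast)_i$: for all $p^b, q^d \in X$, $(p^b, q^d) \in R_i$ iff $(p^{-b}, q^{-d}) \in R_i$.

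Next I would verify $(\ast)_i$ by a case split on $i$. For $-n - 1 \leqslant i \leqslant -1$, each $U_j$ in the union defining $R_i$ satisfies $(\ast)_i$ trivially, since membership of $(p^b, q^d)$ in $U_j$ depends only on whether $(p, q) \in L_j$ and is insensitive to the signs $b, d$; arbitrary unions of such relations preserve the property, and $\varnothing$ satisfies it vacuously. For $i = 0$, we have $R_0 = {\leqslant_X}$, where $p^b \leqslant_X q^d$ holds iff $p^b = q^d$ or $p <_n q$; both disjuncts are invariant under simultaneously flipping $b$ and $d$ (in the first case $p^b = q^d$ forces $b = d$ and $p = q$, hence $p^{-b} = q^{-d}$; in the second case the signs do not appear at all).

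For $i \in \{1, \ldots, n+1\}$, I would invoke the defining identity $R_i = {\sim}R_{-i}$ and check that ${\sim}$ preserves $(\ast)$: indeed, if $\alpha \circ R \circ \alpha = R$, then $R^{c\smile}$ commutes with $\alpha$ (by the translation above), so $\alpha \circ ({\sim}R) \circ \alpha = \alpha \circ R^{c\smile} \circ \alpha \circ \alpha = \alpha \circ R^{c\smile} = R^{c\smile} \circ \alpha = {\sim}R$, using $\alpha \circ \alpha = \mathrm{id}_X$. Hence $(\ast)_{-i}$ from the previous cases yields $(\ast)_i$, completing the verification.

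The main obstacle is really only the initial translation from ${\sim}R = -R$ to the symmetry condition $(\ast)$; once that reformulation is made explicit using the calculus of converse, complement and composition from Section~\ref{sec:bin-rel}, the rest reduces to transparent inspection of the definition of the $R_i$. In particular, no detailed analysis of the lexicographic structure of $\leqslant_n$ or of the relations $L_j$ is needed beyond noting that they are defined on $\mathbb{Q}^n$ and ignore the $\pm 1$ labels.
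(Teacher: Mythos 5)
Your proposal is correct, and it takes a genuinely different route from the paper's proof. The paper argues directly: for $i=0$ it cites Lemma~\ref{lem:definition_of_0}; for $-n-1\leqslant i<0$ it chases elements through the chain $\left(p^b,q^d\right)\notin -R_i \Rightarrow \left(q^d,p^{-b}\right)\in R_i \Rightarrow \left(q^{-d},p^b\right)\in R_i \Rightarrow \left(p^b,q^d\right)\notin {\sim}R_i$ (using, implicitly, exactly the sign-insensitivity of the $U_k$ that you isolate); and for $0<i\leqslant n+1$ it invokes the involutive laws. You instead factor the whole statement through the single symmetry condition $\alpha\circ R\circ\alpha=R$, i.e.\ invariance of $R$ under the simultaneous sign flip $\left(p^b,q^d\right)\mapsto\left(p^{-b},q^{-d}\right)$. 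Your reduction is sound: from ${\sim}R=R^{c\smile}\circ\alpha$ and $-R=\alpha\circ R^{c\smile}$ (Lemma~\ref{lem:twiddle_minus_alternative}), composing with $\alpha$ on the right and using $\alpha\circ\alpha=\mathrm{id}_X$ gives the equivalence with $\alpha\circ R^{c\smile}\circ\alpha=R^{c\smile}$, and since conjugation by $\alpha$ commutes with complement (Lemma~\ref{lem:important_eq_injective_map}) and with converse ($\alpha^{\smile}=\alpha$ as $\alpha$ is self-inverse), this is equivalent to $\alpha\circ R\circ\alpha=R$ because $S\mapsto S^{c\smile}$ is injective. The verification on the base relations and the preservation of the condition under ${\sim}$ are both correct as you state them. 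What your approach buys is a uniform treatment of all three cases and an explicit identification of the structural reason the lemma holds (every $R_i$ is blind to a simultaneous flip of the two labels); what the paper's approach buys is that it needs no preliminary calculus of conjugation, only the already-established involutive laws and one direct computation. Both are complete proofs.
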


\begin{proof}
If $i = 0$, the claim follows from 	Lemma \ref{lem:definition_of_0}.  Now consider the following cases:
\\

\noindent
\underline{Case 1:} $-n-1 \leqslant i < 0$. Let $\left(p^b, q^d\right) \notin - R_i$. By Lemma \ref{lem:important_eq_injective_map}, $\left(p^b, q^d\right) \in \left(-R_i\right)^c = \left(\alpha \circ \left(R_i\right)^{c\smile}\right)^c = \alpha\;\circ \; R^\smile_i$. Therefore, $\left(q^d, p^{-b}\right) \in R_i$, and so  $\left(q^d, p^{-b}\right) \in U_k$ for some $k \in \left\{1, \ldots, n+ 1+i\right\}$. This implies $(q, p) \in L_k$ for some $k \in \left\{1, \ldots, n+ 1+i\right\}$. It thus follows that $\left(q^{-d}, p^{b}\right) \in U_k\subseteq R_i$, i.e., $\left(\alpha\left(q^d\right), p^b\right) \in U_k \subseteq R_i$. Hence, $\left(p^b, \alpha\left(q^d\right)\right) \in R_i^\smile$, and therefore $\left(p^b, \alpha\left(q^d\right)\right) \in \left(\left(R_i\right)^{c\smile}\right)^c$. From this we get $\left(p^b, q^d\right) \in \left(\left(R_i\right)^{c\smile}\right)^c\circ \alpha = \left(\left(R_i\right)^{c\smile}\circ \alpha\right)^c= \left({\sim}R_i\right)^c$, i.e., $\left(p^b, q^d\right) \notin {\sim}R_i$. 

The reverse inclusion can be proved in a similar way. \\

\noindent
\underline{Case 2:} $0 < i \leqslant n+1$. This case follows from the previous case and the fact that $-$ and $\sim$ satisfy the involutive property. 
\end{proof}

We want our set of relations to have the structure of a Sugihara chain, therefore we need to confirm that they form a linear order when ordered by inclusion. 

\begin{lemma}\label{lem:Ri-chain}  
The relations in $\left\{R_{-n -1}, R_{-n}, \ldots,  R_{-1}, R_0, R_1, \ldots, R_{n}, R_{n+1}\right\}$ form a chain, i.e., $$R_{-n -1} \subseteq R_{-n} \subseteq \cdots \subseteq R_{-1} \subseteq R_0 \subseteq R_1 \subseteq \cdots \subseteq R_{n} \subseteq R_{n+1}.$$
\end{lemma}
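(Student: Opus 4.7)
The plan is to split the chain of inclusions into three regimes corresponding to the three branches of the definition of $R_i$: the negative indices (where $R_i$ is an explicit union of $U_j$'s), the identity $R_0 = \,\leqslant_X$, and the positive indices (where $R_i$ is defined via ${\sim}$). The ``hinge'' inclusions $R_{-1}\subseteq R_0$ and $R_0\subseteq R_1$ will do most of the real work; the rest of the chain follows by bookkeeping plus order-reversal.

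First, for $-n-1\leqslant i < -1$, the inclusion $R_i\subseteq R_{i+1}$ is immediate from the definition, since $R_i$ is a union over $j\in\{1,\ldots,n+1+i\}$ and $R_{i+1}$ is the same union with one extra term (and $R_{-n-1}=\varnothing$ is contained in anything). Next, $R_{-1}\subseteq R_0$ is verified by observing that each $U_j$ (for $1\leqslant j\leqslant n$) is contained in $\leqslant_X$: if $(p^b,q^d)\in U_j$ then $(p,q)\in L_j\subseteq\,\leqslant_n$ is actually in $<_n$ (strictly less), so $p<_n q$, and hence $p^b\leqslant_X q^d$ by the definition of $\leqslant_X$.

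The key step is $R_0\subseteq R_1$, and this is where Lemma~\ref{lem:leq=0} enters decisively. By that lemma, ${\leqslant_X}\,=\,0\,=\,(\leqslant_X)^{c\smile}\circ\alpha\,=\,{\sim}{\leqslant_X}$. Combined with the previous paragraph (${R_{-1}\subseteq\,\leqslant_X}$), order-reversal of $\sim$ (which is immediate from the formula ${\sim}R=R^{c\smile}\circ\alpha$) then gives $R_0=\,{\leqslant_X}\,=\,{\sim}{\leqslant_X}\subseteq{\sim}R_{-1}=R_1$.

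Finally, for $1\leqslant i\leqslant n$, the inclusion $R_i\subseteq R_{i+1}$ is obtained by applying $\sim$ to the already-established inclusions $R_{-i-1}\subseteq R_{-i}$: since ${\sim}$ is order-reversing and $R_k={\sim}R_{-k}$ for $k>0$, we have $R_i={\sim}R_{-i}\subseteq{\sim}R_{-i-1}=R_{i+1}$ (with the case $i=n$ using $R_{-n-1}=\varnothing\subseteq R_{-n}$, which gives $R_n\subseteq{\sim}\varnothing=R_{n+1}$). I anticipate no genuine obstacle here; the only subtle point is remembering that the whole chain hinges on the self-duality of $\leqslant_X$ under $\sim$ supplied by Lemma~\ref{lem:leq=0}, without which the two ``middle'' inclusions meeting at $R_0$ would not match up.
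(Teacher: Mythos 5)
Your proof is correct and follows essentially the same route as the paper: the negative-index inclusions come directly from the definition as nested unions, $R_{-1}\subseteq R_0$ is checked pointwise via $(p,q)\in L_j\Rightarrow p<_n q\Rightarrow p^b\leqslant_X q^d$, and the positive-index inclusions (including $R_0\subseteq R_1$) follow by applying the order-reversing operation ${\sim}$. Your explicit appeal to Lemma~\ref{lem:leq=0} to justify $R_0={\sim}R_0\subseteq{\sim}R_{-1}=R_1$ is in fact slightly more careful than the paper's one-line remark, which leaves that use of ${\sim}{\leqslant_X}={\leqslant_X}$ implicit.
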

\begin{proof}
It follows from the construction that $R_{-i} \subseteq R_{-i+ 1}$ for all $i \in \left\{2, \ldots, n + 1\right\}$. Since $\sim$ is order-reversing, it follows that $R_i \subseteq R_{i+1}$ for all $i \in \left\{1, \ldots, n\right\}$. 
To see that $R_{-1} \subseteq R_0 = \;\leqslant_X$, let $\left(p^b, q^d\right) \in R_{-1}$. Then $\left(p^b, q^d\right) \in U_j$ for some $j \in \left\{1, \ldots, n\right\}$, and therefore $(p, q) \in L_j$ for some  $j \in \left\{1, \ldots , n\right\}$. Hence, $p <_n q$, which means $p^b \leqslant_X q^d$. 
The inclusion $R_0 \subseteq R_1$ then follows from the fact that $\sim$ is order-reversing. 
\end{proof}

In $\delta$-notation, the above result can be rewritten as follows:  
$$
\delta\left(T_{-n-1}\right) \subseteq \delta\left(T_{-n}\right) \subseteq  \cdots \subseteq \delta\left(T_{-1}\right) \subseteq R_0 \subseteq \delta\left(T_1\right) \circ \alpha \subseteq \cdots \subseteq \delta\left(T_n\right) \circ \alpha \subseteq \delta\left(T_{n+1}\right) \circ \alpha
$$

The transitivity of each of the relations $R_i$ will be essential for the calculations involving relational composition. 

\begin{lemma}\label{lem:Ri_transitive}
For each $i \in \left\{-n-1, \ldots, n+ 1\right\}$, the relation $R_i$ is transitive. 
\end{lemma}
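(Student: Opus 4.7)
The plan is to split on the index $i$ into four cases. The boundary cases are immediate: $R_{-n-1} = \varnothing$ is vacuously transitive, and $R_0 = {\leqslant_X}$ is a partial order on $X$, so transitive by construction.

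For $-n \leqslant i \leqslant -1$, I would argue directly: suppose $(p^b, q^d), (q^d, r^e) \in R_i$. Then $(p^b, q^d) \in U_j$ and $(q^d, r^e) \in U_k$ for some $j, k \in \{1, \ldots, n+1+i\}$, which gives $(p, q) \in L_j$ and $(q, r) \in L_k$. By Lemma~\ref{lem:transitivity_Tis}(i), $(p, r) \in L_{\min\{j, k\}}$, and since $\min\{j, k\} \leqslant n+1+i$, we conclude $(p^b, r^e) \in U_{\min\{j, k\}} \subseteq R_i$.

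For $1 \leqslant i \leqslant n+1$ I would exploit $R_i = {\sim}R_{-i}$, so that $(x^b, y^d) \in R_i$ iff $(y^{-d}, x^b) \notin R_{-i}$. Transitivity of $R_i$ is then equivalent to the contrapositive statement: whenever $(r^{-e}, p^b) \in R_{-i}$, i.e.\ $(r, p) \in L_k$ for some $k \leqslant n+1-i$, at least one of $(q, p) \in L_j$ or $(r, q) \in L_j$ holds for some $j \leqslant n+1-i$. I would verify this by trichotomy (Lemma~\ref{lem:transitivity_Tis}(ii)) on $p, q \in \mathbb{Q}^n$: if $p = q$, then $(r, q) = (r, p) \in L_k$; if $p <_n q$ with $(p, q) \in L_j$, Lemma~\ref{lem:transitivity_Tis}(i) gives $(r, q) \in L_{\min\{k, j\}}$ and $\min\{k, j\} \leqslant k \leqslant n+1-i$; and if $q <_n p$ with $(q, p) \in L_j$ and $j \leqslant n+1-i$ we are already done, while otherwise $j \geqslant n+2-i$ forces $p$ and $q$ to agree on their first $n+1-i$ coordinates, so $(r, p) \in L_k$ transfers directly to $(r, q) \in L_k$ with $k \leqslant n+1-i$.

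The main obstacle is this final range, where the contrapositive formulation requires careful bookkeeping of the cutoff index $n+1-i$ against several applications of Lemma~\ref{lem:transitivity_Tis}(i). A conceptually cleaner alternative would be to first identify, for $1 \leqslant i \leqslant n+1$, the explicit description $R_i = \{(p^b, q^d) \in X^2 : (p_1, \ldots, p_{n+1-i}) \leqslant_{n+1-i} (q_1, \ldots, q_{n+1-i})\}$ (with $R_{n+1} = X^2$ under the empty-prefix convention); transitivity then follows at once from transitivity of the lexicographic order. Either route will suffice, but the identification is itself a bookkeeping exercise of comparable weight to the contrapositive argument.
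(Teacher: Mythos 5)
Your proof is correct. The skeleton matches the paper's: the boundary cases and the range $-n\leqslant i\leqslant -1$ are handled exactly as in the paper, via Lemma~\ref{lem:transitivity_Tis}(i). Where you differ is in the positive range. The paper argues forward from $\left(p^b,r^e\right),\left(r^e,q^d\right)\in R_i={\sim}R_{-i}$, extracts coordinatewise inequalities, and splits into four sub-cases, three of which place $\left(p^b,q^d\right)$ in some $U_m\subseteq R_{-i}$ and hence rely on the chain inclusion $R_{-i}\subseteq R_i$ from Lemma~\ref{lem:Ri-chain}, while the fourth shows $\left(q^{-d},p^b\right)\notin R_{-i}$ directly. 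Your contrapositive reformulation --- if $(r,p)\in L_k$ with $k\leqslant n+1-i$, then $(q,p)$ or $(r,q)$ lies in some $L_j$ with $j\leqslant n+1-i$ --- reduces the same computation to a three-way trichotomy on $p$ versus $q$, and is self-contained: it uses only Lemma~\ref{lem:transitivity_Tis} and the disjointness of the $L_j$, not the chain lemma. The bookkeeping in your third branch is sound: $j\geqslant n+2-i$ forces $p$ and $q$ to agree on their first $n+1-i$ coordinates, which is precisely what allows $(r,p)\in L_k$ to transfer to $(r,q)\in L_k$; and for $i=n+1$ the hypothesis is vacuous since $R_{-n-1}=\varnothing$. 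Your proposed alternative via the explicit description of $R_i$ as the set of pairs $\left(p^b,q^d\right)$ with $\left(p_1,\ldots,p_{n+1-i}\right)$ lexicographically below $\left(q_1,\ldots,q_{n+1-i}\right)$ is also correct, and once verified it would render not only this lemma but much of the case analysis in Proposition~\ref{prop:comp-for-Ri} nearly immediate; the trade-off, as you note, is that the identification itself must be checked.
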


\begin{proof}
This is clearly true for $i=0$ and $i = n+1$. Since $\varnothing$ is vacuously transitive, the statement also holds for $i = -n-1$.  We now consider the following two cases:
\\

\noindent
\underline{Case 1:} $ -n \leqslant i < 0$. Let $\left(p^b, r^e\right) \in R_i$ and $\left(r^e, q^d\right) \in R_i$. Then there exist $j, k \in \left\{1, \ldots, n+1 +i\right\}$ such that $(p, r) \in L_j$ and $(r, q) \in L_k$. Hence, by Lemma~\ref{lem:transitivity_Tis}, we have $\left(p, q\right) \in L_{\text{min}\{j, k\}}$, and so $\left(p^b, q^d\right) \in U_{\text{min}\{j, k\}} \subseteq R_i$.\\

\noindent
\underline{Case 2:} $0 < i \leqslant n$. Let $\left(p^b, r^e\right) \in R_i$ and $\left(r^e, q^d\right) \in R_i$. Then we have $\left(p^b, r^e\right) \in {\sim}R_{-i}$ and $\left(r^e, q^d\right) \in {\sim}R_{-i}$. Hence, $\left(r^{-e}, p^b\right) \notin R_{-i}$ and $\left(q^{-d}, r^e\right) \notin R_{-i}$, which means $(r, p) \notin L_j$ for all $j \in  \left\{1, \ldots, n+1 -i\right\}$ and $(q, r) \notin L_k$ for all $k \in \left\{1, \ldots, n+1 -i\right\}$. The first part implies that $r_1 \not < p_1$ and, for all $j \in \left\{2, \ldots, n+1 -i\right\}$, if $\left(r_1, \ldots, r_{j-1}\right) = \left(p_1, \ldots, p_{j-1}\right)$ then $r_j \not < p_j$, while the second part gives $q_1 \not < r_1$ and, for all $k \in \left\{2, \ldots, n+1 -i\right\}$, if $\left(q_1, \ldots, q_{k-1}\right) = \left(r_1, \ldots, r_{k-1}\right)$ then $q_k \not < r_{k}$. 
Since $\mathbb Q$ is linearly ordered, $p_1 \leqslant r_1$ and, for all $j \in \left\{2, \ldots, n+1 -i\right\}$, if $\left(r_1, \ldots, r_{j-1}\right) = \left(p_1, \ldots, p_{j-1}\right)$ then $p_j \leqslant r_j$. Likewise, $r_1 \leqslant q_1$ and, for all $k \in \left\{2, \ldots, n+1 -i\right\}$, if $\left(q_1, \ldots, q_{k-1}\right) = \left(r_1, \ldots, r_{k-1}\right)$ then $r_k \leqslant q_{k}$. If $p_1 < r_1$, then we have $p_1 < r_1 \leqslant q_1$, and so $(p, q) \in L_1$, which means $\left(p^b, q^d\right) \in U_1 \subseteq R_{-1} \subseteq R_i$. If $r_1 < q_1$, then $p_1 \leqslant r_1 < q_1$, so again $\left(p^b, q^d\right) \in U_1 \subseteq R_{-1}\subseteq R_i$. Finally, suppose $p_1 = r_1 = q_1$. There are four cases:
\begin{enumerate}[(a)]
\item  There exist $\ell, m \in \left\{2, \ldots, n+1 -i\right\}$ such that $\left(p_{1}, \ldots, p_{\ell-1}\right) = \left(r_1, \ldots, r_{\ell-1}\right)$, $p_{\ell} < r_{\ell}$, $\left(q_1, \ldots, q_{m - 1}\right) = \left(r_1, \ldots, r_{m - 1}\right)$ and $r_{m} < q_{m}$; that is, $\left(p, r\right) \in L_\ell$ and $\left(r, q\right) \in L_m$.  Therefore, by Lemma~\ref{lem:transitivity_Tis}, we get
$\left(p, q\right) \in L_{\textnormal{min}\{\ell, m\}}$, and thus $\left(p^b, q^d\right) \in U_{\textnormal{min}\{\ell, m\}} \subseteq R_{-i}\subseteq R_i$.
\item  There exists $\ell \in \left\{2, \ldots, n+1 -i\right\}$ such that $\left(p_{1}, \ldots, p_{\ell-1}\right) = \left(r_1, \ldots, r_{\ell-1}\right)$ and $p_{\ell} < r_{\ell}$, and for all $k\in \left\{2, \ldots, n+1 -i\right\}$, we have $r_k = q_k$. In this case, $\left(p_{1}, \ldots, p_{\ell-1}\right) = \left(q_1, \ldots, q_{\ell-1}\right)$ and $p_{\ell} < r_{\ell} = q_\ell$. Hence, $(p, q) \in L_\ell$, which gives $\left(p^b, q^d\right) \in U_\ell \subseteq R_{-i}\subseteq R_i$. 
\item  There exists $m \in \left\{2, \ldots, n+1 -i\right\}$ such that $\left(q_{1}, \ldots, q_{m-1}\right) = \left(r_1, \ldots, r_{m-1}\right)$ and $r_{m} < q_{m}$, and for all $j\in \left\{1, \ldots, n+1 -i\right\}$, we have $p_j = r_j$. Then $\left(p_{1}, \ldots, p_{m-1}\right) = \left(q_1, \ldots, q_{m-1}\right)$ and $p_{m} = r_{m} < q_m$. Hence, $(p, q) \in L_m$, and so $\left(p^b, q^d\right) \in U_m\subseteq R_{-i} \subseteq R_i$. 
\item  For all $j, k \in \left\{1, \ldots, n+1 -i\right\}$, we have $p_j = r_j$ and $q_k = r_k$. This implies $(q, p) \notin L_j$ for all $j \in \left\{1, \ldots, n+1 -i\right\}$, so $\left(q^{-d}, p^b\right) \notin U_j$ for all $j \in \left\{1, \ldots, n+1 -i\right\}$. Hence, it follows that $\left(q^{-d}, p^b\right) \notin R_{-i}$, which means $\left(p^b, q^{-d}\right) \in \left(R_{-i}\right)^{c\smile}$, i.e.,  $\left(p^b, \alpha\left(q^{d}\right)\right) \in \left(R_{-i}\right)^{c\smile}$. We thus get $\left(p^b, q^d\right) \in \left(R_{-i}\right)^{c\smile} \circ \alpha = {\sim}R_{-i} = R_i$. 
\end{enumerate}
\end{proof}

Given $n \geqslant 1$, in order for the relations $\{\,R_i \mid -(n+1)  \leqslant i \leqslant n+1 \,\}$ to give a representation of the Sugihara chain $\mathbf{S}_{2n+3}$, we need to show the following:
$$
R_i \circ R_j = \begin{cases} 
R_i & \text{if }\, |j|<|i| \\ 
R_j & \text{if }\, |i|< |j| \\ 
R_{\text{min}\{i,j\}} & \text{if }\, |j|=|i|
\end{cases}
$$
We will use a series of lemmas to show that the above holds. Each lemma will involve a number of cases. 
We will frequently make use of the fact that composition is order-preserving in both coordinates, and that in all cases the relations are greater than or less than the monoid identity $\leqslant_X$. 

\begin{lemma}\label{lem:absi>absj}
If $|i|>|j|$ then $R_i \circ R_j = R_i$. 
\end{lemma}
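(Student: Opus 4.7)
The plan is to prove $R_i \circ R_j \subseteq R_i$ and $R_i \subseteq R_i \circ R_j$ separately, by a case split on the signs of $i$ and $j$. Several boundary cases dispose of themselves: if $i = -(n+1)$ then $R_i = \varnothing$; if $j = 0$ then $R_j = \leqslant_X$ is the identity of composition by Lemma~\ref{lem:order_identity}; and if $i = n+1$ then $R_i = X^2$, so the claim reduces to checking that for every $q^d \in X$ there is some $r^e$ with $(r^e, q^d) \in R_j$ (take $r = q$ when $j \geqslant 0$, or any $r$ with $r_1 < q_1$ when $j < 0$). In the remainder we assume $-n \leqslant i \leqslant n$, $i \neq 0$, and $|j| < |i|$.

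For the forward inclusion, when $0 < i \leqslant n$ the hypothesis $|j| < i$ forces $j < i$, so Lemma~\ref{lem:Ri-chain} yields $R_j \subseteq R_i$ and transitivity of $R_i$ (Lemma~\ref{lem:Ri_transitive}) closes the argument via $R_i \circ R_j \subseteq R_i \circ R_i \subseteq R_i$. When $-n \leqslant i < 0$ the chain order instead gives $R_i \subseteq R_j$, so transitivity alone does not suffice. Given $(p^b, r^e) \in R_i$ and $(r^e, q^d) \in R_j$, unfold $(p, r) \in L_k$ for some $k \leqslant n+1+i$. If $j \leqslant 0$, then either $r = q$ and we are done, or $(r, q) \in L_\ell$ and Lemma~\ref{lem:transitivity_Tis} places $(p^b, q^d)$ in $U_{\min\{k,\ell\}} \subseteq R_i$. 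If $j > 0$, unpacking $R_j = {\sim}R_{-j}$ yields $(q, r) \notin L_\ell$ for all $\ell \leqslant n+1-j$; comparing $r$ and $q$ in the lex order, either $r = q$, or $r <_n q$ (apply Lemma~\ref{lem:transitivity_Tis} again), or $q <_n r$, in which case the first disagreement of $q$ and $r$ lies beyond position $n+1-j > n+1+i \geqslant k$, forcing $p_k < r_k = q_k$ and $p_\ell = q_\ell$ for $\ell < k$, so that $(p^b, q^d) \in U_k \subseteq R_i$.

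For the reverse inclusion, whenever $j \geqslant 0$ we have $\leqslant_X \subseteq R_j$, and so $R_i = R_i \circ \leqslant_X \subseteq R_i \circ R_j$. When $j < 0$ we build a witness $r^e$ explicitly. If $-n \leqslant i < 0$ and $(p, q) \in L_{k^*}$ with $k^* \leqslant n+1+i$, density of $\mathbb{Q}$ lets us choose $r$ with $r_\ell = p_\ell$ for $\ell < k^*$ and $r_{k^*}$ strictly between $p_{k^*}$ and $q_{k^*}$, producing $(p, r), (r, q) \in L_{k^*}$; since $k^* \leqslant n+1+i \leqslant n+1+j$ we obtain $(r^e, q^d) \in U_{k^*} \subseteq R_j$ as required. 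If $0 < i \leqslant n$ and $(p^b, q^d) \in R_i$, split on the two ways this can happen: either $(p, q) \in L_{k^*}$ for some $k^* \leqslant n+1-i$, in which case $r = p$ suffices because $k^* \leqslant n+1-i < n+1+j$ (using $-i < j$); or $p$ and $q$ agree on the first $n+1-i$ coordinates, and we pick $r$ agreeing with them on those coordinates and strictly smaller than $q$ at position $n+2-i \leqslant n+1+j$.

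The hardest part will be the forward inclusion with $i < 0 < j$, since neither $R_j \subseteq R_i$ nor $R_i \supseteq R_j$ is available, so transitivity cannot close the argument and we must descend to the lexicographic structure of $\mathbb{Q}^n$. The bookkeeping around the four thresholds $n+1 \pm i$ and $n+1 \pm j$ generates most of the case explosion, but the crucial inequality $n+1-j > n+1+i$ (equivalent to $|j| < |i|$) lets each subcase collapse to a one-line application of Lemma~\ref{lem:transitivity_Tis}.
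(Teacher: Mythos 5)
Your proof is correct and follows essentially the same route as the paper's: the same case split on the signs of $i$ and $j$, with the easy inclusions obtained from $R_j \subseteq\; \leqslant_X$ or $\leqslant_X\; \subseteq R_j$ together with the chain order and transitivity, and the hard inclusions handled by unfolding the lexicographic definitions, using density of $\mathbb{Q}$ to build witnesses, and exploiting the threshold inequality $n+1-j > n+1+i$. The only step left implicit is the verification that your witness $r$ in the final sub-case satisfies $\left(p^b, r^e\right) \in R_i = {\sim}R_{-i}$ (immediate, since $r$ agrees with $p$ on the first $n+1-i$ coordinates, so $(r,p) \notin L_\ell$ for all $\ell \leqslant n+1-i$); your separate treatment of the boundary case $i = n+1$ is a welcome extra precaution.
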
 
\begin{proof}
If $i = -n - 1$ (i.e. $R_i =\varnothing$) then we clearly have $R_i \circ R_j = R_i$. The case for $j=0$ follows from Lemma \ref{lem:important_eq_injective_map}. So we consider the following cases:
\\

\noindent
\underline{Case 1:} $-n - 1< i<j<0$. Since $R_j \subseteq {\leqslant_X}$, we get $R_i \circ R_j \subseteq R_i$. Let $\left(p^b,q^d\right) \in R_i$. Hence, there exists $k \in \left\{1, \ldots,  n+1+i\right\}$ such that $\left(p^b,q^d\right) \in U_k$, i.e., there exists $k \in \left\{1, \ldots,  n+1+i\right\}$ such that $\left(p, q\right) \in L_k$. There are two cases:

\begin{enumerate}[(a)]

\item $p_1 < q_1$. By the density of $\mathbb Q$, there is some $r_1 \in \mathbb Q$ such that $p_1 < r_1 < q_1$. Let $r = (r_1, 0,\ldots, 0)$. Then $\left(p, r\right) \in L_1$ and $(r, q) \in L_1$, i.e., $\left(p^b, r^b\right) \in U_1\subseteq R_i$ and $\left(r^b, q^d\right) \in U_1\subseteq R_i\subseteq R_j$. We thus have $\left(p^b, q^d\right) \in R_i\circ R_j$. 

\item There is some $k \in \left\{2, \ldots, n+1+i\right\}$ such that $(p_1,\dots,p_{k-1})=(q_1,\dots,q_{k-1})$ and $p_k < q_k$. By the density of  $\mathbb{Q}$, there exists $r_k$ such that $p_k < r_k < q_k$. Let $r=(p_1, \dots, p_{k-1}, r_k, 0, \dots, 0)$. Clearly $(p^b,r^d) \in U_k \subseteq R_i$. Now notice that $r=(q_1,\dots,q_{k-1},r_k,0,\dots,0)$, and therefore $(r^d,q^d) \in U_k \subseteq R_i \subseteq R_j$. This gives us $(p^b,q^d) \in R_i \circ R_j$ and so $R_i \circ R_j = R_i$.
\end{enumerate}
%\\ 

\noindent
\underline{Case 2:} $-n-1 < i<0$, $j>0$. Since ${\leqslant_X} \subseteq R_j$, we get $R_i \subseteq R_i \circ R_j$. Let $\left(p^b,q^d\right) \in R_i \circ R_j$. Then there is some $r^e$ such that $\left(p^b, r^e\right) \in R_i$ and $\left(r^e, q^d\right) \in R_j$. From the first part we get $\left(p^b, r^e\right) \in U_k$ for some $k \in \left\{1, \ldots, n+1 +i\right\}$, which means $(p, r) \in L_k$ for some  $k \in \left\{1, \ldots, n+1 +i\right\}$. The second part gives $\left(r^e, q^d\right) \in {\sim}R_{-j} = \left(R_{-j}\right)^{c\smile}\circ\alpha$, and so $\left(q^{-d}, r^b\right) \notin R_{-j}$. This implies that $\left(q^{-d}, r^e\right) \notin U_\ell$ for all $\ell \in \left\{1, \ldots, n+1 -j\right\}$. That is, $q_1 \not < r_1$ and, for all $\ell \in \left\{2, \ldots, n+1 -j\right\}$, if $\left( q_1, \ldots, q_{\ell-1}\right) = \left(r_1, \ldots, r_{\ell -1}\right)$ then $q_\ell \not < r_\ell$. Since $\mathbb Q$ is linearly ordered, we get $r_1 \leqslant q_1$ and, for all $\ell \in \left\{2, \ldots, n+1 -j\right\}$, if $\left( q_1, \ldots, q_{\ell-1}\right) = \left(r_1, \ldots, r_{\ell -1}\right)$ then $r_\ell \leqslant q_1$. If $p_1 < r_1$, then $p_1 < r_1 \leqslant q_1$, and so $\left(p, q\right) \in L_1$, i.e., $\left(p^b, q^d\right) \in U_1 \subseteq R_i$. If $r_1 < q_1$, we have $p_1 \leqslant r_1 < q_1$, so again $\left(p, q\right) \in L_1$, i.e., $\left(p^b, q^d\right) \in U_1 \subseteq R_i$. If $p_1 = r_1 = q_1$, then it must be the case that $\left(p_1, \ldots, p_{m-1}\right) = \left(r_1, \ldots, r_{m-1}\right) = \left(q_1, \ldots, q_{m-1}\right)$ and $p_m< q_m$ for some $m \in \left\{2, \ldots, k\right\}$. Hence, $(p, q) \in L_m$, and so $\left(p^b, q^d\right) \in U_m \subseteq R_i$. 
\\

\noindent
\underline{Case 3:} $i>0$, $-n-1 < j<0$. Since $R_j \subseteq {\leqslant_X}$ we get $R_i\circ R_j \subseteq R_i$. Let $\left(p^b,q^d\right) \in R_i$. Then $\left(p^b,q^d\right) \in {\sim}R_{-i} = \left(R_{-i}\right)^{c\smile}\circ \alpha$, and so $\left(q^{-d}, p^b\right) \notin R_{-i}$. This means that $\left(q^{-d}, p^b\right) \notin U_k$ for all $k \in \left\{1, \ldots, n+1 -i\right\}$. That is, $q \not < p_1$ and, for all $k \in \left\{2, \ldots, n+1 -i\right\}$, if $\left(p_1, \ldots, p_{k-1}\right) = \left(q_1,\ldots, q_{k-1}\right)$ then $q_k \not < p_k$. We therefore obtain $p_1 \leqslant q_1$ and, for all $k \in \left\{2, \ldots, n+1 -i\right\}$, if $\left(p_1, \ldots, p_{k-1}\right) = \left(q_1,\ldots, q_{k-1}\right)$ then $p_k \leqslant q_k$. Suppose $p_1 < q_1$. By the density of  $\mathbb Q$, there is some $r_1 \in \mathbb Q$ such that $p_1 < r_1 < q_1$. Now let $r = \left(p_1, 0, \ldots, 0\right)$. Then $(p, r) \in L_1$ and $(r, q) \in L_1$, i.e., $\left(p^b, r^b\right) \in U_1 \subseteq R_{-i} \subseteq R_i$ and $\left(r^b, q^d\right) \in U_1 \subseteq R_{-i} \subseteq R_j$. Hence, $\left(p^b, q^d\right) \in R_i \circ R_j$. On the other hand, suppose $p_1 = q_1$. We have two cases: 

\begin{enumerate}[(a)]
\item There is some $\ell \in \left\{2, \ldots, n+1 -i\right\}$ such that $\left(p_1, \ldots, p_{\ell - 1}\right) = \left(q_1, \ldots, q_{\ell - 1}\right)$ and $p_\ell < q_\ell$. Then  $\left(p, q\right) \in L_{\ell}$, and so $\left(p^b, q^d\right) \in U_\ell \subseteq R_{-i} \subseteq R_j$. Since $\leqslant_X\, \subseteq R_i$, we have $\left(p^b, p^b\right) \in R_i$, and therefore we obtain $\left(p^b, q^d\right) \in R_i \circ R_j$.
\item  For all $k \in \left\{2, \ldots, n+1 -i\right\}$, we have $p_k = q_k$. Let $$r = \left(q_1, \ldots, q_{n+1-i}, q_{n+2-i}-1,\ldots, q_n-1\right).$$ Then $(r, q) \in L_{n+2 -i}$, and so $\left(r^b, q^d\right) \in U_{n+2-i} \subseteq \bigcup_{k=1}^{n+1+j} \subseteq R_j$. Note now that we also have $r = \left(p_1, \ldots, p_{n+1-i}, q_{n+2-i}-1,\ldots, q_n-1\right)$, so $\left(r^{-b}, p^b\right) \notin U_{k}$ for all $k \in \left\{1, \ldots, n+1 -i\right\}$. Hence, $\left(r^{-b}, p^b\right) \notin R_{-i}$, which implies that $\left(p^b, r^{-b}\right) \in \left(R_{-i}\right)^{c\smile}$. It therefore follows that $\left(p^b, r^{b}\right) \in \left(R_{-i}\right)^{c\smile} \circ \alpha = {\sim}R_{-i} = R_i$. From this we get $\left(p^b, q^d\right) \in R_i \circ R_j$. 
\end{enumerate}

\noindent
\underline{Case 4:} $i> j > 0$. Since ${\leqslant_X} \subseteq R_j$ we get $R_i \subseteq R_i \circ R_j$. Let $\left(p^b,q^d\right) \in R_i \circ R_j$. Then there is some $r^e$ such that $\left(p^b, r^e\right) \in R_i$ and $\left(r^e, q^d\right) \in R_j$. But $R_j\subseteq R_i$ and $R_i$ is transitive by Lemma \ref{lem:Ri_transitive}, so $\left(p^b, q^d\right) \in R_i$.   
\end{proof}

\begin{lemma}
If $|j|>|i|$ then $R_i \circ R_j = R_j$. 
\end{lemma}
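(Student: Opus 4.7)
The plan is to mirror the case-structure of Lemma~\ref{lem:absi>absj}, since the two statements are symmetric in $i$ and $j$ but composition is not commutative. I would first dispose of the degenerate subcases: if $j=-n-1$ then $R_j=\varnothing$ absorbs the composition; if $i=0$ then $R_i={\leqslant_X}$ is the identity of composition by Lemma~\ref{lem:order_identity}; and if $j=n+1$ then $R_j=X^2$, so one only needs to check that every $p^b\in X$ has some $R_i$-successor, which holds because $R_i$ contains either a translate of $L_1$ or the diagonal. The four remaining cases are (1) $-n\le j<i<0$, (2) $-n\le i<0<j\le n$ with $i+j>0$, (3) $-n\le j<0<i\le n$ with $i+j<0$, and (4) $0<i<j\le n$.

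Cases (1) and (4) are routine. In Case~(1), $R_i\subseteq {\leqslant_X}$ forces $R_i\circ R_j\subseteq R_j$, and the reverse inclusion follows from a density-of-$\mathbb{Q}$ argument of the same flavour as Case~1 of the previous lemma: any $(p^b,q^d)\in R_j$ satisfies $(p,q)\in L_k$ for some $k\le n+1+j<n+1+i$, so one inserts an intermediate $r$ (namely $(r_1,0,\ldots,0)$ if $k=1$, or $(p_1,\ldots,p_{k-1},r_k,0,\ldots,0)$ if $k\ge 2$) with $(p^b,r^b)\in U_k\subseteq R_i$ and $(r^b,q^d)\in U_k\subseteq R_j$. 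In Case~(4), both relations contain ${\leqslant_X}$, so $R_j={\leqslant_X}\circ R_j\subseteq R_i\circ R_j$, while $R_i\subseteq R_j$ (by Lemma~\ref{lem:Ri-chain}) combined with transitivity of $R_j$ (Lemma~\ref{lem:Ri_transitive}) yields $R_i\circ R_j\subseteq R_j\circ R_j=R_j$.

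Case~(3) is the mirror of Case~2 of the previous lemma. Since ${\leqslant_X}\subseteq R_i$, the diagonal gives $R_j\subseteq R_i\circ R_j$. For the reverse inclusion, I would take $(p^b,r^e)\in R_i$ and $(r^e,q^d)\in R_j$, unpack the former as $(r,p)\notin L_k$ for all $k\le n+1-i$ and the latter as $(r,q)\in L_\ell$ for some $\ell\le n+1+j$; the inequality $i+j<0$ gives $\ell\le n+1+j<n+1-i$, so the non-membership applies at index $\ell$ and all smaller indices. A case split on the first coordinate at which $r$ and $p$ disagree (three sub-cases: the first disagreement occurs before $\ell$, at $\ell$, or $r$ and $p$ coincide on the first $\ell$ coordinates) then produces $(p,q)\in L_m$ for some $m\le\ell\le n+1+j$, hence $(p^b,q^d)\in U_m\subseteq R_j$.

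The main obstacle is Case~(2), where neither of $R_i,R_j$ sits inside ${\leqslant_X}$ in a convenient direction. The inclusion $R_i\circ R_j\subseteq R_j$ still follows from $R_i\subseteq R_j$ and transitivity of $R_j$; the real work is $R_j\subseteq R_i\circ R_j$. Given $(p^b,q^d)\in R_j$, I would unpack this as $(q,p)\notin L_k$ for $k\le n+1-j$ and split on whether $p$ and $q$ agree on all their first $n+1-j$ coordinates: if they first disagree at some index $m\le n+1-j$ (necessarily with $p_m<q_m$), take $r=(p_1,\ldots,p_{m-1},r_m,0,\ldots,0)$ with $p_m<r_m<q_m$ (using density of $\mathbb{Q}$); if they agree throughout, take $r=(p_1,\ldots,p_{n+1-j},p_{n+2-j}+1,0,\ldots,0)$. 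In either subcase, the integer inequality $i+j\ge 1$ (which is exactly the form $|j|>|i|$ takes when $i<0<j$) guarantees that the relevant $L$-index is at most $n+1+i$, hence $(p^b,r^b)\in R_i$; a direct check then confirms $(q^{-b},r^b)\notin R_{-j}$, so $(r^b,q^d)\in{\sim}R_{-j}=R_j$.
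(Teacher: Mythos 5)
Your proposal is correct and follows essentially the same route as the paper: the same case decomposition by signs of $i$ and $j$, the same easy inclusions via $R_i\subseteq{\leqslant_X}$ or ${\leqslant_X}\subseteq R_i$ together with the chain ordering and transitivity, the same density-of-$\mathbb{Q}$ insertions of intermediate points, and in the critical mixed case $i<0<j$ the same witness $r=(p_1,\ldots,p_{n+1-j},p_{n+2-j}+1,\ldots)$ with the same use of $i+j\geqslant 1$ to land in $\bigcup_{k=1}^{n+1+i}U_k=R_i$. The only cosmetic differences are that you peel off $j=n+1$ as a separate degenerate case (the paper absorbs it, slightly less carefully, into its Case 3) and reorganise some subcases; these do not change the argument.
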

\begin{proof}
If $j = -n -1$, then $R_j = \varnothing$, and so $R_i \circ R_j = R_j$. If $i = 0$, we have $R_i = \;\leq_X$, so $R_i \circ R_j = R_j$ by Lemma \ref{lem:order_identity}. We now consider the following cases:
\\

\noindent
\underline{Case 1:} $-n - 1 < j < i < 0$. Since $R_i \subseteq {\leqslant_X}$, we get $R_i \circ R_j \subseteq R_j$. Let $\left(p^b,q^d\right) \in R_j$. Then $\left(p, q\right) \in L_k$ for some  $k \in \left\{1, 2, \ldots, n+1+ j\right\}$. That is, $p_1 < q_1$ or $(p_1,\dots,p_{k-1})=(q_1,\dots,q_{k-1})$ and $p_k < q_k$ for some  $k \in \left\{2, \ldots, n+1+j\right\}$. If $p_1 < q_1$, we use the density of $\mathbb Q$ to find some $r_1 \in \mathbb Q$ such that $p_1 < r_1 < q_1$. Then we set $r = \left(r_1, 0 \ldots, 0\right)$ and obtain $\left(p^b, r^b\right) \in U_1 \subseteq R_i$ and $\left(r^b, q^d\right) \in U_1 \subseteq R_{j}$. Hence, $\left(p^b, q^d\right) \in R_i \circ R_j$.  Similarly, if $(p_1,\dots,p_{k-1})=(q_1,\dots,q_{k-1})$ and $p_k < q_k$ for some  $k \in \left\{2, \ldots, n+1+j\right\}$, we can find some $r_k \in \mathbb Q$ such that $p_k < r_k < q_k$. Setting $r = \left(q_1, \ldots, q_{k-1}, r_k, 0, \ldots, 0\right)$, we obtain $\left(p^b, r^b\right) \in U_k \subseteq R_j\subseteq R_i$ and $\left(r^b, q^d\right) \in U_k \subseteq R_{j}$, which means $\left(p^b, q^d\right) \in R_i \circ R_j$. \\

\noindent
\underline{Case 2:} $i > 0, -n - 1< j < 0$. Since ${\leqslant_X} \subseteq R_i$, we get $R_j \subseteq R_i \circ R_j$. Let $\left(p^b,q^d\right) \in R_i \circ R_j$. Then there is some $r^e$ such that $\left(p^b, r^e\right) \in R_i$ and $\left(r^e, q^d\right) \in R_j$. The latter gives 
$(r, q) \in L_k$ for some  $k \in \left\{1, \ldots, n+1 +j\right\}$. The former gives $\left(p^b, r^e\right) \in {\sim}R_{-i} = \left(R_{-i}\right)^{c\smile}\circ\alpha$, and so $\left(r^{-e}, p^b\right) \notin R_{-i}$. Hence, $\left(r^{-e}, p^b\right) \notin U_\ell$ for all $\ell \in \left\{1, \ldots, n+1 -i\right\}$. That is, $r_1 \not < p_1$ and, for all $\ell \in \left\{2, \ldots, n+1 -i\right\}$, if $\left(p_1, \ldots, p_{\ell-1}\right) = \left(r_1, \ldots, r_{\ell -1}\right)$ then $r_\ell \not < p_\ell$. Since $\mathbb Q$ is linearly ordered, we get $p_1 \leqslant r_1$ and, for all $\ell \in \left\{2, \ldots, n+1 -j\right\}$, if $\left( p_1, \ldots, p_{\ell-1}\right) = \left(r_1, \ldots, r_{\ell -1}\right)$ then $p_\ell \leqslant r_1$. If $r_1 < q_1$, then $p_1 \leqslant r_1 < q_1$, and so  $\left(p^b, q^d\right) \in U_1 \subseteq R_j$. If $p_1 < r_1$, we get $p_1 < r_1 \leqslant q_1$, so again $\left(p^b, q^d\right) \in U_1 \subseteq R_j$. If $p_1 = r_1 = q_1$, then $\left(p_1, \ldots, p_{m-1}\right) = \left(r_1, \ldots, r_{m-1}\right) = \left(q_1, \ldots, q_{m-1}\right)$ and $p_m< q_m$ for some $m \in \left\{2, \ldots, k\right\}$. Hence, $(p, q) \in L_m$, and so $\left(p^b, q^d\right) \in U_m \subseteq R_j$. 
\\

\noindent
\underline{Case 3:} $i < 0, j > 0$. Since $R_i \subseteq {\leqslant_X}$, we have $R_i \circ R_j \subseteq R_j$. Let $\left(p^b,q^d\right) \in R_j$. Then $\left(p^b,q^d\right) \in {\sim}R_{-j} = \left(R_{-j}\right)^{c\smile}\circ \alpha$, and so $\left(q^{-d}, p^b\right) \notin R_{-j}$. This means that $\left(q, p\right) \notin L_k$ for all $k \in \left\{1, \ldots, n+1 -j\right\}$. Hence, $q_1 \not < p_1$ and, for all $k \in \left\{2, \ldots, n+1 -j\right\}$, if $\left(q_1, \ldots, q_{k-1}\right) = \left(p_1, \ldots, p_{k-1}\right)$ then $q_k \not < p_k$. Since $\mathbb Q$ is linearly ordered, $p_1 \leqslant q_1$ and, for all  $k \in \left\{2, \ldots, n+1 -j\right\}$, if $\left(q_1, \ldots, q_{k-1}\right) = \left(p_1, \ldots, p_{k-1}\right)$ then $p_{k} \leqslant q_k$. Suppose $p_1 < q_1$. By the density of $\mathbb Q$, there is some $r_1\in \mathbb Q$ such that $p_1 < r_1 < q_1$. Let $r = \left(r_1, 0, \ldots, 0\right)$. Then we have $\left(p^b, r^b\right) \in U_1 \subseteq R_{-j} \subseteq R_{i}$ and $\left(r^b, q^d\right) \in U_1 \subseteq R_{-j} \subseteq R_{j}$. Hence, $\left(p^b, q^d\right) \in R_i \circ R_j$. Now suppose $p_1 = q_1$. Then there are two cases: 

\begin{enumerate}[(a)]
\item There is some $\ell \in \left\{2, \ldots, n+ 1-j\right\}$ such that $\left(p_1, \ldots, p_{\ell-1}\right) = \left(q_1, \ldots, q_{\ell-1}\right)$ and $p_\ell < q_\ell$. Then we have $\left(p, q\right) \in L_{\ell}$, and so $\left(p^b, q^d\right) \in U_\ell \subseteq R_{-j} \subseteq R_i$. But $\left(q^d, q^d\right) \in R_j$ since $\leqslant_X\, \subseteq R_j$, so it follows that $\left(p^b, q^d\right) \in R_i \circ R_j$. 

\item For all $k \in \left\{1,\ldots, n+1 -j\right\}$, we have $p_k = q_k$. Let $$r = \left(p_1, \ldots, p_{n+1-j}, p_{n+2-j}+1, \ldots, p_n + 1\right).$$ Then $\left(p, r\right) \in L_{n+2 -j}$, and so $\left(p^b, r^b\right) \in U_{n+ 2-j} \subseteq \bigcup_{k=1}^{n+1+i}U_k = R_i$. We also have $\left(q, r\right) \notin L_{k}$ for all $k \in \left\{1,\ldots, n+1 -j\right\}$.  Hence, $\left(q^{-d}, r^b\right) \notin U_k$ for all $k \in \left\{1, \ldots, n+1 -j\right\}$, which means $\left(q^{-d}, p^b\right) \notin R_{-j}$. We thus have $\left(r^b, q^{-d}\right) \in \left(R_{-j}\right)^{c\smile}$,
and thus, $\left(r^b, q^d\right) \in \left(R_{-j}\right)^{c\smile}\circ \alpha = {\sim}R_{-j} = R_j$. This gives us $\left(p^b, q^d\right) \in R_i \circ R_j$. 
 \end{enumerate}

\noindent
\underline{Case 4:} $j > i > 0$. Since ${\leqslant_X} \subseteq R_i$ we have $R_j \subseteq R_i \circ R_j$. Let $\left(p^b,q^d\right) \in R_i \circ R_j$. Then there is some $r^e$ such that $\left(p^b, r^e\right) \in R_i$ and $\left(r^e, q^d\right) \in R_j$. But $R_i\subseteq R_j$ and $R_j$ is transitive by Lemma \ref{lem:Ri_transitive}, so $\left(p^b, q^d\right) \in R_j$.   
\end{proof}

\begin{lemma}
If $|j|=|i|$ then $R_i \circ R_j = R_{\text{min}\{i,j\}}$. 
\end{lemma}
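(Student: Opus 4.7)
The plan is to split on whether $i = j$ or $i = -j$, since $|i| = |j|$ leaves no other possibility.

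\textbf{Case I: $i = j$.} The goal here is to establish idempotency $R_i \circ R_i = R_i$. The inclusion $R_i \circ R_i \subseteq R_i$ is immediate from the transitivity of $R_i$ proved in Lemma~\ref{lem:Ri_transitive}. The boundary values $i = -n-1$ (where $R_i = \varnothing$) and $i = 0$ (using Lemma~\ref{lem:order_identity}) dispatch directly. For $0 < i \leqslant n+1$, the reverse inclusion follows from $\leqslant_X \subseteq R_i$ and Lemma~\ref{lem:order_identity}: $R_i = {\leqslant_X} \circ R_i \subseteq R_i \circ R_i$. For $-n \leqslant i < 0$, I would take $(p^b, q^d) \in U_k \subseteq R_i$ with $k \leqslant n+1+i$ and use the density of $\mathbb{Q}$ to insert a midpoint tuple $r$ satisfying $(p,r), (r,q) \in L_k$: place a rational strictly between $p_k$ and $q_k$ at position $k$, copy the prefix of $p$ on the earlier coordinates, and pad with zeros afterwards. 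This is precisely the construction already used in Case~1 of Lemma~\ref{lem:absi>absj}.

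\textbf{Case II: $i = -j$ with $i \neq j$.} Both $R_i \circ R_j$ and $R_j \circ R_i$ must be handled separately, since relational composition is not commutative a priori. Assuming $i < 0 < j$, the target is $R_i \circ R_j = R_i$ (the case $i > 0 > j$, where the target becomes $R_j$, is symmetric). The inclusion $R_i \subseteq R_i \circ R_j$ follows from $\leqslant_X \subseteq R_j$ together with Lemma~\ref{lem:order_identity}. For the forward inclusion, I would suppose $(p^b, q^d) \in R_i \circ R_j$ with some witness $r^e$. The first piece $(p^b, r^e) \in R_i$ unpacks as $(p, r) \in L_k$ for some $k \leqslant n+1+i$. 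The second piece $(r^e, q^d) \in R_j = R_i^{c\smile}\circ \alpha$ (using $-j = i$ and Lemma~\ref{lem:twiddle_minus_alternative}) unpacks as the negative condition $(q, r) \notin L_\ell$ for all $\ell \leqslant n+1+i$. The linearity of $\mathbb{Q}$ then translates this into lexicographic upper bounds $r_1 \leqslant q_1$ and, inductively, $r_\ell \leqslant q_\ell$ whenever the earlier coordinates of $r$ and $q$ coincide, for all $\ell$ in the relevant range.

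The main obstacle will be the final coordinate-tracking step: combining $(p, r) \in L_k$ with the lexicographic bound on $r$ relative to $q$ on the first $n+1+i$ coordinates to conclude that $(p, q) \in L_m$ for some $m \leqslant n+1+i$. I expect this to reduce to a finite case split on where $p$ and $r$ first differ relative to where $r$ and $q$ first differ, very much in the spirit of Cases~1--5 of Lemma~\ref{lem:Ri_upsets} and the sub-cases (a)--(d) in Case~2 of Lemma~\ref{lem:Ri_transitive}; the correct witness index $m$ will turn out to be the minimum of those two indices, so that $(p^b, q^d) \in U_m \subseteq R_i$. The parallel verification of $R_j \circ R_i \subseteq R_i$ runs by the same template with the negative and positive conditions swapped between the two witnesses, and the mirror sub-case $i > 0 > j$ follows by inverting the order-reversing step throughout. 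The bookkeeping is intricate but routine, as is characteristic of the proofs in this section.
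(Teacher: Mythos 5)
Your proposal matches the paper's proof essentially step for step: the same split into $i=j$ (one inclusion by the transitivity of $R_i$ from Lemma~\ref{lem:Ri_transitive}, the other by $\leqslant_X\subseteq R_i$ when $i>0$ or a density-of-$\mathbb{Q}$ midpoint when $i<0$) and $i=-j$ (one inclusion by monotonicity using $R_i\subseteq{\leqslant_X}\subseteq R_j$, the other by unpacking the positive condition $(p,r)\in L_k$ against the negative condition $(q,r)\notin L_\ell$ and tracking coordinates). The ``minimum of the two first-difference indices'' step you defer is exactly what the paper's Cases~2 and~3 carry out explicitly, so nothing is missing in substance.
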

\begin{proof}
The claim clearly follows for the cases $i=j=0$ and $i=j =-n-1$. So consider the following cases: 
\\

\noindent
\underline{Case 1:} $-n-1 < i = j < 0$. Let $\left(p^b, q^d\right) \in R_i\circ R_j = R_i \circ R_i$. Then there exists $r^b$ such that $\left(p^b, r^e\right) \in R_i$ and $\left(r^e, q^d\right) \in R_i$. By Lemma \ref{lem:Ri_transitive}, $R_i$ is transitive, so $\left(p^b, q^d\right) \in R_i = R_j = R_{\text{min}\{i, j\}}$. Now let $\left(p^b, q^d\right) \in R_{\text{min}\{i, j\}} = R_i = R_j$. 
Then $(p, q) \in L_k$ for some $k \in \left\{1, \ldots, n+1 + i\right\}$. There are two cases: 

\begin{enumerate}[(a)]
\item  $p_1 < q_1$. By the density of  $\mathbb Q$, there exists $r_1\in \mathbb Q$ such that $p_1 < r_1 < q_1$. Let $r = \left(r_1, 0, \ldots, 0\right)$. Then $\left(p^b, r^b\right) \in U_1 \subseteq R_{i}$ and $\left(r^b, q^d\right) \in U_1 \subseteq R_{j}$. Hence, $\left(p^b, q^d\right) \in R_i \circ R_j$. 
\item  There exists $k \in \left\{2, \ldots, n+1 +i\right\}$ such that $(p_1,\dots,p_{k-1})=(q_1,\dots,q_{k-1})$ and $p_k < q_k$. By the density of  $\mathbb{Q}$, there exists $r_k$ such that $p_k < r_k < q_k$. Set $r=(p_1, \dots, p_{k-1}, r_k, 0, \dots, 0)$. Then $(p^b,r^d) \in U_k \subseteq R_i$ and $(r^d,q^d) \in U_k \subseteq R_j$. This gives us $(p^b,q^d) \in R_i \circ R_j$. 
\end{enumerate}

\noindent
\underline{Case 2:} $-n-1 < i< 0$, $j >0$. Since $\leqslant_X \;\subseteq R_j$, we have $R_{\text{min}\{i, j\}} = R_i \subseteq R_i\circ R_j$. Now let $(p^b, q^d) \in R_i \circ R_j$. Then there is some $r^b$ such that $\left(p^b, r^e\right) \in R_i$ and $\left(r^e, q^d\right) \in R_j$.
The former gives $\left(p, r\right) \in L_k$ for some $k \in \left\{1, \ldots, n+1 + i\right\}$. The latter gives $\left(q^{-d}, r^e\right) \notin R_{-j}$, and so $(q, r) \notin L_\ell$ for all $\ell \in \left\{1, \ldots, n+1 -j\right\}$. Hence, $q_1 \not < r_1$ and, for all $\ell \in \left\{2, \ldots, n+1 -j\right\}$, if $\left(r_1, \ldots, r_{\ell-1}\right) = \left(q_1, \ldots, q_{\ell-1}\right)$ then $q_\ell\not < r_\ell$. Since $\mathbb Q$ is linearly ordered, we obtain $r_1 \leqslant q_1$ and, for all $\ell \in \left\{2, \ldots, n+1 -j\right\}$, if $\left(r_1, \ldots, r_{\ell-1}\right) = \left(q_1, \ldots, q_{\ell-1}\right)$ then $r_\ell\leqslant q_\ell$. If $p_1 < r_1$, then $p_1 < r_1 \leqslant q_1$, so $\left(p^b, q^d\right) \in U_1 \subseteq R_i = R_{\text{min}\{i, j\}}$. If $r_1 < q_1$, we get $p_1 \leqslant r_1 < q_1$, which implies $\left(p^b, q^d\right) \in U_1 \subseteq R_i = R_{\text{min}\{i, j\}}$. Finally, if $p_1 = r_1 = q_1$, then $\left(p_1, \ldots, p_{m-1}\right) = \left(r_1, \ldots, r_{m-1}\right) = \left(q_1, \ldots, q_{m-1}\right)$ and $p_m < q_m$ for some $m \in \left\{2, \ldots, k\right\}$. Hence, $(p, q) \in L_m$, and so $\left(p^b, q^d\right) \in U_m \subseteq R_i = R_{\text{min}\{i, j\}}$. 
\\

\noindent
\underline{Case 3:} $-n-1 < j < 0, i> 0$. Since $\leqslant_X \;\subseteq R_i$, we have $R_{\text{min}\{i, j\}} = R_j \subseteq R_i\circ R_j$. Let $(p^b, q^d) \in R_i \circ R_j$. Then there exists some $r^b$ such that $\left(p^b, r^e\right) \in R_i$ and $\left(r^e, q^d\right) \in R_j$. The second part gives $\left(r, q\right) \in L_\ell$ for some $\ell \in \left\{1, \ldots, n+1 + j\right\}$. The first part gives $\left(r^{-e}, p^b\right) \notin R_{-i}$, which means $(r, p) \notin L_k$ for all $k \in \left\{1, \ldots, n+1 -i\right\}$. Hence, $r_1 \not < p_1$ and, for all $k \in \left\{2, \ldots, n+1 -i\right\}$, if $\left(r_1, \ldots, r_{k-1}\right) = \left(p_1, \ldots, p_{k-1}\right)$ then $r_k\not < p_k$. Since $\mathbb Q$ is linearly ordered, we get $p_1 \leqslant r_1$ and, for all $k \in \left\{2, \ldots, n+1 -i\right\}$, if $\left(r_1, \ldots, r_{k-1}\right) = \left(p_1, \ldots, p_{k-1}\right)$ then $p_k\leqslant r_k$. If $r_1 < q_1$, then $p_1 \leqslant r_1 < q_1$, and so $\left(p^b, q^d\right) \in U_1 \subseteq R_j = R_{\text{min}\{i, j\}}$. If $p_1 < r_1$, then it follows that $p_1 < r_1 \leqslant q_1$, which gives $\left(p^b, q^d\right) \in U_1 \subseteq R_j = R_{\text{min}\{i, j\}}$. Finally, if $p_1 = r_1 = q_1$, then $\left(p_1, \ldots, p_{m-1}\right) = \left(r_1, \ldots, r_{m-1}\right) = \left(q_1, \ldots, q_{m-1}\right)$ and $p_m < q_m$ for some $m \in \left\{2, \ldots, \ell\right\}$. Hence, $(p, q) \in L_m$, and so $\left(p^b, q^d\right) \in U_m \subseteq R_j = R_{\text{min}\{i, j\}}$. 
\\
\noindent
\underline{Case 4:} $i=j > 0$. The left-to-right inclusion follows from the transitivity of $R_i = R_j$. Now let $\left(p^b, q^d\right) \in R_{\text{min}\left\{i, j\right\}} = R_i = R_j$.  Then $\left(p^b, p^b\right) \in\; \leqslant_X\, \subseteq R_i$, and hence $\left(p^b, q^d\right) \in R_i\circ R_j$.  
\end{proof}

Combining the three lemmas above gives us the following result.  
\begin{proposition}\label{prop:comp-for-Ri}
	Let $n \geqslant 1$ and $i, j \in \{-(n+1),\dots,n+1\}$. Then 
	$$
	R_i \circ R_j = \begin{cases} 
		R_i & \text{if }\, |j|<|i| \\ 
		R_j & \text{if }\, |i|< |j| \\ 
		R_{\text{min}\{i,j\}} & \text{if }\, |j|=|i|
	\end{cases}
	$$
	
\end{proposition}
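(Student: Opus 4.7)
The plan is essentially trivial in structure: the statement is the direct assembly of the three preceding lemmas. Any pair $(i,j) \in \{-(n+1),\ldots,n+1\}^2$ satisfies exactly one of the three mutually exclusive conditions $|i|>|j|$, $|i|<|j|$, or $|i|=|j|$. Lemma~\ref{lem:absi>absj} resolves the first case to give $R_i \circ R_j = R_i$, the unnamed lemma immediately after it resolves the second case to give $R_i \circ R_j = R_j$, and the final lemma resolves the third case to give $R_i \circ R_j = R_{\min\{i,j\}}$. These outputs match the three branches of the piecewise formula exactly, so nothing further is required.

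Had those three lemmas not been already established, my plan would be to prove each by double inclusion, using the chain property from Lemma~\ref{lem:Ri-chain} together with the transitivity of each $R_i$ from Lemma~\ref{lem:Ri_transitive} and the fact that $R_0 = \,\leqslant_X$ is the monoid identity from Lemma~\ref{lem:order_identity}. For the ``easy'' direction in each case one uses the chain to compare one factor with $\leqslant_X$ (or with the other factor) and then invokes either the identity property or transitivity. For the ``hard'' direction one unpacks the definition of $R_i$ via the $U_j$ when $i<0$, or via the rewriting ${\sim}R_{-i} = (R_{-i})^{c\smile}\circ \alpha$ together with Lemma~\ref{lem:important_eq_injective_map} when $i>0$, and then splits on the signs of $i$ and $j$.

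The main obstacle in the individual lemmas is the mixed-sign situation (one of $i,j$ negative, the other positive), and especially the subcase $|i|=|j|$ where both the $\alpha$-twist and the lexicographic ordering are in play simultaneously. To establish the $\supseteq$-inclusion one must exhibit intermediate tuples $r \in \mathbb{Q}^n$ explicitly: sometimes using the density of $\mathbb{Q}$ to insert a rational strictly between two first coordinates, and sometimes constructing a shifted tuple of the form $(p_1,\ldots,p_{n+1-|i|},\ldots)$ that escapes $R_{-|i|}$ on exactly the right number of coordinates so that the desired membership in ${\sim}R_{-|i|}$ follows from the identity $\left(R_{-|i|}\right)^{c\smile}\circ \alpha = R_{|i|}$. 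This is the step where the choice of $\mathbb{Q}$ as the base chain genuinely matters, and it is where I would expect the bulk of the technical verification to live.
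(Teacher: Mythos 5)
Your proposal is correct and matches the paper exactly: the paper's own ``proof'' of Proposition~\ref{prop:comp-for-Ri} is literally the one-line observation that the three preceding lemmas cover the three mutually exclusive cases. Your supplementary sketch of how those lemmas are themselves proved (double inclusion, density of $\mathbb{Q}$ for intermediate tuples, the $\alpha$-twist via $(R_{-i})^{c\smile}\circ\alpha$) also accurately reflects the paper's arguments.
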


Let $n \geqslant 1$ and consider $\mathcal{R}_{2n+3}= \{\, R_i \mid -(n+1) \leqslant i \leqslant n+1\,\}$. Then define the algebra 
$$\mathbb{S}_{2n+3}:=\langle \mathcal{R}_{2n+3}, \cap, \cup, \circ, \Rightarrow, R_0, {\sim}\rangle$$ 
where $\Rightarrow$ is defined, as in the abstract case, in terms of $\cap$, $\cup$ and ${\sim}$. 
\begin{theorem}\label{thm:main}
Let $n \geqslant 1$. Then the algebra 
$\mathbb{S}_{2n+3}$ is an odd Sugihara chain and $\mathbf{S}_{2n+3} \cong \mathbb{S}_{2n+3}$. 
\end{theorem}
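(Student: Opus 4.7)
The plan is to define $\varphi: S_{2n+3} \to \mathcal{R}_{2n+3}$ by $\varphi(a_i) = R_i$ and show that $\varphi$ is an isomorphism in the Sugihara monoid signature. Since $\mathbf{S}_{2n+3}$ is by definition an odd Sugihara chain, this will simultaneously establish both assertions of the theorem. The map $\varphi$ is evidently a bijection: both sides are indexed by the same set $\{-(n+1),\ldots,n+1\}$, and Lemma~\ref{lem:Ri-chain} shows that the $R_i$ are pairwise distinct.

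The preservation of each operation is essentially a bookkeeping exercise, using the lemmas already established. For the lattice operations, Lemma~\ref{lem:Ri-chain} states that $\mathcal{R}_{2n+3}$ is a chain under $\subseteq$, so $R_i \cap R_j = R_{\min\{i,j\}}$ and $R_i \cup R_j = R_{\max\{i,j\}}$, exactly matching the definition of $\wedge$ and $\vee$ on $\mathbf{S}_{2n+3}$. For the monoid operation, Proposition~\ref{prop:comp-for-Ri} gives precisely the case analysis from the definition of $\cdot$ in $\mathbf{S}_{2n+3}$. The constant $R_0 = \;\leqslant_X$ is the monoid identity of $\mathfrak{D}(\mathbf{E})$ by Lemma~\ref{lem:order_identity}, matching $a_0$ in $\mathbf{S}_{2n+3}$.

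For the involution, I would handle three cases. When $i > 0$, $R_i = {\sim}R_{-i}$ holds by construction, and applying involutivity (from Theorem~\ref{thm:InFL-algebra}) gives ${\sim}R_i = R_{-i}$. When $i < 0$, the same case applied with index $-i > 0$ yields ${\sim}R_i = R_{-i}$. When $i = 0$, using Lemma~\ref{lem:twiddle_minus_alternative} together with Lemma~\ref{lem:leq=0} gives ${\sim}R_0 = (\leqslant_X)^{c\smile} \circ \alpha = \;\leqslant_X\; = R_0$, so $R_0$ is the required fixed point of ${\sim}$ that makes the Sugihara chain \emph{odd}. Finally, since the implication $\to$ in $\mathbf{S}_{2n+3}$ and the implication $\Rightarrow$ in $\mathbb{S}_{2n+3}$ are both term-definable from the already-handled operations via $a \to b = {\sim}({\sim}b \cdot a)$ (using the cyclicity established above on the relevant elements, together with Definition~\ref{def:direct}), preservation of $\Rightarrow$ is automatic.

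Because nearly all the technical content has already been carried out in Lemmas~\ref{lem:leq=0}--\ref{lem:Ri_transitive} and Proposition~\ref{prop:comp-for-Ri}, I do not anticipate a serious obstacle in this final argument. The one point requiring mild care is confirming that ${\sim}R_0 = R_0$, since the ambient DInFL-algebra $\mathfrak{D}(\mathbf{E})$ is \emph{not} cyclic (by part (ii) of Theorem~\ref{thm:InFL-algebra}, as $\alpha \neq \mathrm{id}_X$); the cyclicity holds only on the subalgebra $\mathcal{R}_{2n+3}$ and must be checked element-by-element, as we have already done in the lemma preceding Lemma~\ref{lem:Ri-chain}.
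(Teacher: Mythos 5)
Your proposal is correct and follows essentially the same route as the paper's own (much terser) proof: the map $\varphi(a_i)=R_i$, the chain structure from Lemma~\ref{lem:Ri-chain}, the fusion table from Proposition~\ref{prop:comp-for-Ri}, and the identities $R_i={\sim}R_{-i}$ and ${\sim}R_0=R_0$ for the involution. Your extra care about the non-cyclicity of the ambient algebra $\mathfrak{D}(\mathbf{E})$ — noting that ${\sim}R_i=R_{-i}$ for $i>0$ requires the element-wise identity $-R_i={\sim}R_i$ from the lemma preceding Lemma~\ref{lem:Ri-chain}, not just involutivity — is a point the paper leaves implicit, and is handled correctly.
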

\begin{proof}
The fact that the ordered set forms a chain is Lemma~\ref{lem:Ri-chain}. Comparing  Proposition~\ref{prop:comp-for-Ri} with the definition of $\mathbf{S}_{2n+3}$ in Section~\ref{sec:sugihara} shows that the monoid operation is defined as it should be for an odd Sugihara chain. The fact that ${\sim}$ is order-reversing follows from the fact that $R_i={\sim}R_{-i}$ and Lemma~\ref{lem:Ri-chain}. The isomorphism is given by $\varphi: S_{2n+3} \to \mathcal{R}_{2n+3}$ where $\varphi(a_i)=R_i$. 
\end{proof}

\section{Representing all Sugihara monoids via weakening relations}\label{sec:RSM}

The main result of this section is to show that all Sugihara monoids can be represented as algebras of binary relations with the monoid operation given by relational composition. Since the algebras of relations are constructed using the method of Section~\ref{subsec:construct}, 
the relations will be weakening relations. We begin with an important definition. 

\begin{definition}\label{def:RSM}
A Sugihara monoid $\mathbf{A}$ is \emph{representable} if it is isomorphic to the direct reduct of an element of $\mathsf{RDInFL}$.    
\end{definition}
The class of representable Sugihara monoids will be denoted by $\mathsf{RSM}$. We note that since $\mathsf{RDInFL}$ is closed under $\mathbb{I}$, $\mathbb{S}$ and $\mathbb{P}$ (see Definition~\ref{def:RDInFL}), we will also have that $\mathsf{RSM}=\mathbb{ISP}(\mathsf{RSM})$. 

From here, we proceed as follows. 
We show that the class of representable DInFL-algebras is closed under ultraproducts.   Then, using the closure of $\mathsf{RDInFL}$ under ultraproducts, we are able to show that the Sugihara monoids $\mathbf{S}$ and $\mathbf{S}^*$ are representable.
Since those two algebras generate the quasivariety $\mathsf{SM}$, we will then be able to  conclude that $\mathsf{SM} \subseteq \mathsf{RSM}$.

\subsection{Closure of $\mathsf{RDInFL}$ under ultraproducts}\label{sec:ultraproducts} 

Here we show that the class of representable DInFL-algebras
(see Definition~\ref{def:RDInFL})
is closed under taking ultraproducts. The ultraproduct was first introduced by \L o\'s~\cite{Los55} in 1955. 
The construction is now very well-known, both for algebras and other first-order structures (cf.~\cite[IV-6 and V-2]{BS2012}). 

Let $I$ be a set and let $\left\{A_i \mid i \in I\right\}$ be a family of non-empty sets indexed by $I$. Recall that if $a, b \in \prod\{A_i\mid i \in I\}$, then the subset
\[
\llbracket a=b\rrbracket = \left\{i \in I \mid a(i) = b(i)\right\}
\] 
of $I$ is called the \emph{equaliser} of $a$ and $b$. 

If $\mathcal{F}$ is a filter on $I$, then we may define a binary relation $\theta_\mathcal{F}$ on the Cartesian product $\prod\left\{A_i \mid i \in I\right\}$ by setting, for all $a, b \in \prod\left\{A_i \mid i \in I\right\}$, 
\[
\left(a, b\right) \in \theta_{\mathcal{F}} \qquad \textnormal{ iff } \qquad \llbracket a=b\rrbracket \in \mathcal{F}.
\]
It is straightforward to check that $\theta_\mathcal{F}$ is an equivalence relation on $\prod\left\{A_i \mid i \in I\right\}$. 

\begin{lemma}\label{lem:theta_congruence}
Let $I$ be a set and let $\mathbf{A}_i$ be an algebra, for each $i \in I$. For each filter $\mathcal{F}$ on $I$, the relation $\theta_\mathcal{F}$ is a congruence on the direct product $\prod\left\{\mathbf{A}_i \mid i \in I\right\}$.
\end{lemma}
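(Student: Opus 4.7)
The plan is to verify the two requirements for $\theta_{\mathcal{F}}$ to be a congruence on $\prod\{\mathbf{A}_i \mid i \in I\}$: (a) that $\theta_{\mathcal{F}}$ is an equivalence relation, and (b) that $\theta_{\mathcal{F}}$ has the substitution property for every fundamental operation of the common signature. Part (a) has already been asserted in the text immediately preceding the statement, so the work lies in (b). Since this is a very standard fact from universal algebra, the argument is routine; I will simply give the cleanest case analysis.

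First I would fix an arbitrary $n$-ary operation symbol $f$ in the signature of the $\mathbf{A}_i$'s, and suppose that $(a^{(1)},b^{(1)}), \ldots, (a^{(n)},b^{(n)}) \in \theta_{\mathcal{F}}$, where each $a^{(k)}, b^{(k)} \in \prod\{A_i \mid i \in I\}$. By definition this means $\llbracket a^{(k)} = b^{(k)} \rrbracket \in \mathcal{F}$ for each $k \in \{1,\ldots,n\}$. The goal is to show that
\[
\bigl(f^{\prod \mathbf{A}_i}(a^{(1)},\ldots,a^{(n)}),\; f^{\prod \mathbf{A}_i}(b^{(1)},\ldots,b^{(n)})\bigr) \in \theta_{\mathcal{F}}.
\]

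The key observation is the set-theoretic inclusion
\[
\bigcap_{k=1}^n \llbracket a^{(k)} = b^{(k)} \rrbracket \;\subseteq\; \llbracket f^{\prod \mathbf{A}_i}(a^{(1)},\ldots,a^{(n)}) = f^{\prod \mathbf{A}_i}(b^{(1)},\ldots,b^{(n)}) \rrbracket.
\]
This holds because if $i$ belongs to every equaliser on the left, then $a^{(k)}(i) = b^{(k)}(i)$ for all $k$, and since the operation $f^{\prod \mathbf{A}_i}$ is computed coordinatewise as $f^{\mathbf{A}_i}$, applying $f^{\mathbf{A}_i}$ to equal tuples yields equal values in $A_i$. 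The left-hand side is a finite intersection of members of $\mathcal{F}$, hence is in $\mathcal{F}$ by the closure of filters under finite meets; the right-hand side then lies in $\mathcal{F}$ by upward closure. This is exactly what is needed.

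There is no real obstacle: the argument only uses the two defining properties of a filter (closure under finite intersection and upward closure) together with the coordinatewise definition of operations on a direct product. For the $0$-ary case (constants) the intersection on the left is empty so equals $I \in \mathcal{F}$, and the conclusion is immediate. Combined with the already-noted fact that $\theta_{\mathcal{F}}$ is an equivalence relation, this establishes the congruence property and completes the proof.
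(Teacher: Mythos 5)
Your proof is correct and is the standard universal-algebra argument (finite intersections of equalisers lie in the filter, and the equaliser of the outputs contains the intersection of the equalisers of the inputs, so upward closure applies); the paper itself omits any proof of this lemma, treating it as the well-known fact from \cite{BS2012} that your argument reproduces. No gaps.
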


Let $I$ be a set and $\mathbf{A}_i$ an algebra, for each $i \in I$. Recall that if the filter $\mathcal{F}$ in Lemma~\ref{lem:theta_congruence} is an ultrafilter on $I$, then the algebra $\prod\left\{\mathbf{A}_i \mid i \in I\right\}/\theta_\mathcal{F}$ is called the \emph{ultraproduct} of $\left\{\mathbf{A}_i \mid i \in I\right\}$. For a class of algebras $\mathsf{K}$, we denote by $\mathbb{P_U}(\mathsf{K})$ the class of all ultraproducts of members of $\mathsf{K}$. The following theorem is immediate. 

\begin{theorem}\label{thm:ultra_prod_DInFl}
An ultraproduct of a family of DInFL-algebras is again a DInFL-algebra. 
\end{theorem}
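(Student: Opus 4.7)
The plan is to observe that the class of DInFL-algebras is axiomatisable by a set of first-order sentences in the signature $\langle \wedge, \vee, \cdot, 1, {\sim}, -\rangle$, and then invoke \L o\'s's theorem. Indeed, as noted in Section~\ref{sec:sugihara}, a DInFL-algebra can be presented in the signature above without the residuals or $0$ (using the equivalent formulation from~\cite[Lemma 2.2]{GJ13}). The defining conditions then reduce to a collection of equations: the lattice identities (including distributivity), the monoid identities, the involutive identities ${-}{\sim}a = a = {\sim}{-}a$, and the equational translation of the residuation property together with the compatibility of $\sim$ and $-$ with the monoid via the linear negations.

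The main step of the proof is simply to note, by \L o\'s's theorem (see, for instance, \cite[V-2]{BS2012}), that an ultraproduct $\prod\{\mathbf{A}_i \mid i \in I\}/\theta_{\mathcal F}$ satisfies exactly those first-order sentences that lie in the set $\{\varphi \mid \{i \in I \mid \mathbf{A}_i \models \varphi\} \in \mathcal{F}\}$. In particular, since each $\mathbf{A}_i$ is a DInFL-algebra and therefore satisfies each of the defining axioms in \emph{every} index $i \in I$, the equaliser of each axiom is all of $I$, which lies in every ultrafilter. Hence all defining axioms hold in the ultraproduct, and so the ultraproduct is again a DInFL-algebra.

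The only mildly technical point is the coherent presentation of DInFL-algebras purely in the reduced signature $\langle \wedge, \vee, \cdot, 1, {\sim}, -\rangle$, so as to make it completely clear that the class is elementary. Once this is done, no further work is required, which is presumably why the authors describe the theorem as immediate. An alternative (and equally brief) route is to quote Lemma~\ref{lem:theta_congruence} to see that the ultraproduct is at least a congruence quotient of the direct product, and then to verify the DInFL axioms on the quotient by direct calculation with $\mathcal F$-equivalence classes; however, this is just \L o\'s's theorem done by hand. I would prefer the concise elementary-class argument.
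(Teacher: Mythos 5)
Your proposal is correct and matches the paper's intent: the paper offers no proof beyond declaring the theorem immediate, and the reason it is immediate is exactly the one you give, namely that DInFL-algebras form an elementary (indeed equational) class in the signature $\langle \wedge, \vee, \cdot, 1, {\sim}, -\rangle$, so \L o\'s's theorem (or, via Lemma~\ref{lem:theta_congruence}, closure of a variety under homomorphic images of products) applies directly.
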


Below we outline how to construct the ultraproduct of a set of partially ordered sets with additional structure, each of which can be used to represent a DInFL-algebra. 
Consider an index set $I$ and let $\mathcal{F}$ be an ultrafilter on $I$. For each index $i \in I$, let $\mathbf X_i = \left\langle X_i, \leqslant_i\right\rangle$ be a poset and $E_i$ an equivalence relation on $X_i$ such that $X_i \neq \varnothing$ and ${\leqslant_i} \subseteq E_i$. Consider the Cartesian product $X = \prod_{i \in I}X_i$ and the equaliser $\llbracket x=y\rrbracket = \left\{i \in I \mid x(i) = y(i)\right\}$ of $x, y \in X$. Define a binary relation on $X$ by setting, for all $x, y \in X$,
\[
\left(x, y\right) \in E_\mathcal{F} \qquad \textnormal{iff} \qquad \llbracket x=y\rrbracket\in \mathcal{F}.
\] 
Then $E_\mathcal{F}$ is an equivalence relation on $X$. Now consider the 
set of equivalence classes
$Y=X/E_{\mathcal{F}}$. Define binary relations $\leqslant_Y$ and $E_Y$ on $Y$ by setting, for all $[x], [y] \in Y$,
\begin{align*}
[x] \leqslant_Y [y]  \qquad  & \textnormal{iff} \qquad \left\{i \in I \mid x(i) \leqslant_i y(i)\right\} \in \mathcal{F},\\
\left([x], [y]\right) \in E_Y \qquad & \textnormal{iff} \qquad \left\{i \in I \mid \left(x(i), y(i)\right) \in E_i\right\} \in \mathcal{F}. 
\end{align*}

Suppose further that for each $i \in I$, the map $\alpha_i : X_i \to X_i$ is an order automorphism of $\mathbf X_i$ such that $\alpha_i \subseteq E_i$. Define $\alpha_Y: Y \to Y$ by setting, for each $[x] \in Y$, 
\[
\alpha_Y\left([x]\right) = \left[\alpha_X(x)\right]
\]
where $\left(\alpha_X(x)\right)(i) = \alpha_i\left(x(i)\right)$ for all $x \in X$ and $i \in I$. Since each of the $\alpha_i$ is bijective, we get the following result. 

\begin{lemma}\label{lem:alphaY_inverse}
The map $\alpha_Y: Y \to Y$ has an inverse. Moreover, $\alpha^{-1}_Y\left([x]\right) = \left[\alpha^{-1}_X(x)\right]$ where $\left(\alpha^{-1}_X(x)\right)(i) = \alpha_i^{-1}\left(x(i)\right)$ for all $x \in X$ and $i \in I$. 
\end{lemma}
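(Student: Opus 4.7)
The plan is to verify the lemma in three short steps: well-definedness of $\alpha_Y$, well-definedness of the proposed inverse, and the two inversion equations. The whole argument rests on one simple observation: all the relevant conditions on $Y$ are defined via equalisers lying in the ultrafilter $\mathcal{F}$, and pointwise composition preserves (indeed enlarges) such equalisers, so everything descends cleanly to the quotient.

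First I would check that $\alpha_Y$ is well-defined. If $[x] = [x']$, then $\llbracket x = x'\rrbracket \in \mathcal{F}$. Since $x(i) = x'(i)$ forces $\alpha_i(x(i)) = \alpha_i(x'(i))$, we obtain the inclusion $\llbracket x = x'\rrbracket \subseteq \llbracket \alpha_X(x) = \alpha_X(x')\rrbracket$. Because $\mathcal{F}$ is upward closed, the latter equaliser also lies in $\mathcal{F}$, giving $[\alpha_X(x)] = [\alpha_X(x')]$. Next I would define the candidate inverse by $\beta([x]) := [\alpha_X^{-1}(x)]$, where $\alpha_X^{-1}(x)$ is the pointwise inverse $(\alpha_X^{-1}(x))(i) = \alpha_i^{-1}(x(i))$; this is legitimate because each $\alpha_i$ is an order automorphism of $\mathbf{X}_i$ and therefore a bijection on $X_i$. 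Well-definedness of $\beta$ is identical to the argument just given, with $\alpha_i^{-1}$ in place of $\alpha_i$.

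Finally I would verify $\beta \circ \alpha_Y = \mathrm{id}_Y = \alpha_Y \circ \beta$. For any $[x] \in Y$,
\[
(\beta \circ \alpha_Y)([x]) = \beta([\alpha_X(x)]) = [\alpha_X^{-1}(\alpha_X(x))].
\]
Since $\alpha_i^{-1}(\alpha_i(x(i))) = x(i)$ holds for every $i \in I$, the equaliser $\llbracket \alpha_X^{-1}(\alpha_X(x)) = x\rrbracket$ equals all of $I$, which lies in $\mathcal{F}$; hence $[\alpha_X^{-1}(\alpha_X(x))] = [x]$. The opposite composition $\alpha_Y \circ \beta$ is handled symmetrically. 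This shows $\alpha_Y$ is a bijection on $Y$ with inverse $\alpha_Y^{-1} = \beta$ of the asserted form.

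There is no real obstacle here; if any step deserves care, it is simply making explicit that the various maps are well-defined on equivalence classes, which in every instance reduces to the trivial remark that $\{i : x(i) = y(i)\} \subseteq \{i : f(x(i)) = f(y(i))\}$ for any function $f$ acting coordinatewise. The lemma will be used later as the bookkeeping that lets the order automorphism $\alpha$ on the $\mathfrak{D}(\mathbf{E}_i)$'s transfer to the ultraproduct, so it suffices here to record the pointwise formula for the inverse.
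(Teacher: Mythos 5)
Your proof is correct and follows the only natural route: the paper itself gives no proof, simply remarking that the result follows ``since each of the $\alpha_i$ is bijective,'' and your argument is exactly the routine verification behind that remark (coordinatewise maps enlarge equalisers, so everything descends to the quotient, and the inversion identities hold on the nose at every coordinate). Nothing is missing.
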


The above definitions will be used in the results that follow. 

\begin{theorem}\label{thm:ultraproductDInFL}
Let $I$ be an index set and $\mathcal{F}$  an ultrafilter on $I$. For each index $i \in I$, let $\mathbf X_i = \left\langle X_i, \leqslant_i\right\rangle$ be a poset and $E_i$ an equivalence relation on $X_i$ such that $X_i \neq \varnothing$ and ${\leqslant_i} \subseteq E_i$. Suppose further that for each $i \in I$, the map $\alpha_i : X_i \to X_i$ is an order automorphism of $\mathbf X_i$ such that $\alpha_i \subseteq E_i$. Then:
\begin{enumerate}[\normalfont (i)]
\item $\mathbf{Y} = \left\langle Y, \leqslant_Y\right\rangle$ is a poset.
\item $E_Y$ is an equivalence relation on $Y$ such that ${\leqslant_Y} \subseteq E_Y$.
\item $\alpha_Y$ is an order automorphism of $\mathbf{Y}$ such that $\alpha_Y \subseteq E_Y$. 
\item If $0 = \alpha_Y \circ \left(\leqslant_Y\right)^{c\smile} = \left(\leqslant_Y\right)^{c\smile} \circ \alpha_Y$, then $\left\langle \mathsf{Up}\left(\left\langle E_Y, \preceq_Y\right\rangle\right), \cap, \cup, \circ, \leqslant_Y, 0, \sim, -\right\rangle$ is a distributive InFL-algebra. 
\end{enumerate}
\end{theorem}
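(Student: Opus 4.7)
The plan is to handle parts (i)--(iii) by standard \L o\'s-style arguments transferring pointwise properties through the ultrafilter $\mathcal{F}$, and then derive (iv) as an immediate invocation of Theorem~\ref{thm:InFL-algebra}. Throughout, the workhorse facts are that $\mathcal{F}$ is closed under finite intersections and supersets, and that the intersection of sets in $\mathcal{F}$ is again in $\mathcal{F}$, so pointwise assertions that hold on a set belonging to $\mathcal{F}$ survive to the ultraproduct.

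For (i), I would first verify that $\leqslant_Y$ is well-defined on $E_{\mathcal{F}}$-classes: if $\llbracket x = x'\rrbracket, \llbracket y = y'\rrbracket \in \mathcal{F}$, then on their intersection $x(i) \leqslant_i y(i)$ holds iff $x'(i) \leqslant_i y'(i)$ holds, so the two witnessing sets differ outside $\mathcal{F}$. Reflexivity and transitivity then follow from the pointwise properties of each $\leqslant_i$ combined with closure of $\mathcal{F}$ under intersection. For antisymmetry, if $[x] \leqslant_Y [y]$ and $[y] \leqslant_Y [x]$, then by antisymmetry of each $\leqslant_i$, the intersection of the two witnessing sets is contained in $\llbracket x = y\rrbracket$, which is thus in $\mathcal{F}$, giving $[x] = [y]$. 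For (ii), an analogous argument shows $E_Y$ is an equivalence relation and is well-defined on classes. The inclusion ${\leqslant_Y} \subseteq E_Y$ is immediate: if $[x] \leqslant_Y [y]$ then $\{i \mid x(i) \leqslant_i y(i)\} \in \mathcal{F}$, and since ${\leqslant_i} \subseteq E_i$ for each $i$, this set is contained in $\{i \mid (x(i), y(i)) \in E_i\}$, which is therefore in $\mathcal{F}$.

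For (iii), I would first check that $\alpha_Y$ is well-defined on classes: if $\llbracket x = x'\rrbracket \in \mathcal{F}$ then, since each $\alpha_i$ is a function, $\llbracket \alpha_X(x) = \alpha_X(x')\rrbracket \supseteq \llbracket x = x'\rrbracket$ and hence lies in $\mathcal{F}$. Bijectivity is Lemma~\ref{lem:alphaY_inverse}. Order-preservation uses that each $\alpha_i$ is an order automorphism, so $x(i) \leqslant_i y(i)$ iff $\alpha_i(x(i)) \leqslant_i \alpha_i(y(i))$, making the two witnessing sets equal (and analogously for $\alpha_Y^{-1}$, which establishes that $\alpha_Y$ reflects the order too). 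Finally, $\alpha_Y \subseteq E_Y$ follows because $\alpha_i \subseteq E_i$ means $(x(i), \alpha_i(x(i))) \in E_i$ for every $i \in I$, so the witnessing set is all of $I \in \mathcal{F}$.

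Once (i)--(iii) are in place, part (iv) is an immediate application of Theorem~\ref{thm:InFL-algebra}(i) to the poset $\mathbf{Y}$ equipped with the equivalence relation $E_Y$ and the order automorphism $\alpha_Y$, since the hypothesis of that theorem is exactly what (i)--(iii) provide, and the definition of $0$ given in (iv) matches the definition used there. The main obstacle is not conceptual but bookkeeping: one must consistently verify each relation and operation is well-defined at the quotient level before exploiting pointwise structure, which amounts to showing that the relevant witnessing sets differ by elements outside $\mathcal{F}$. No step is individually deep, but the verifications for $\alpha_Y$ being both order-preserving and order-reflecting require slight care because one needs each $\alpha_i$ to be an order automorphism, not merely an order-preserving bijection.
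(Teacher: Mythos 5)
Your proposal is correct and follows essentially the same route as the paper: the paper disposes of (i)--(iii) in one line by observing that the conditions are first-order and invoking \L o\'s's Theorem, and of (iv) by citing Theorem~\ref{thm:InFL-algebra}(i), exactly as you do. Your explicit ultrafilter bookkeeping (well-definedness on classes, intersections and supersets witnessing reflexivity, transitivity, antisymmetry, and the order-automorphism properties of $\alpha_Y$) is just the unpacked content of that appeal to \L o\'s's Theorem, and all of it checks out.
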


\begin{proof}
Since all the conditions in items (i) to (iii) can be translated into first-order sentences in the relevant language, they follow from \L o\'s's Theorem. 
Item (iv) follows from part (i) of Theorem~\ref{thm:InFL-algebra}. 
\end{proof}

\begin{remark}
The fact that the structures with which the DInFL-algebras are represented (i.e. posets with an equivalence relation and an order automorphism) are first-order structures, greatly simplifies the proof that $\mathsf{RDInFL}$ is closed under ultraproducts. Theorem~4.5 above gives us the candidate structure from which we can build the algebra in which we will embed the ultraproduct of representable algebras. 
\end{remark}
Before proving that the class $\mathsf{RDInFL}$ is closed under ultraproducts, we need the lemmas below to ensure that we have a candidate for the injective homomorphism needed in Theorem~\ref{thm:Pu(RDInFL)=RDInFL}.  For Lemmas~\ref{lem:Ra-upset} and~\ref{lem:varphi-inj},  and Theorem~\ref{thm:Pu(RDInFL)=RDInFL}, we will be working in the following setting. Let $I$ be an index set and $\mathcal{F}$ an ultrafilter on $I$. For each $i \in I$, let $\mathbf{A}_i$ be a representable distributive InFL-algebra. Let $\mathbf{A}$ denote the direct product $\prod\left\{\mathbf{A}_i \mid i \in I\right\}$, and $\mathbf{A}/\theta_{\mathcal{F}}$ the ultraproduct. 

We know that for each $i \in I$, there is a poset $\left\langle X_i, \leqslant_i\right\rangle$, an equivalence relation $E_i$ on $X_i$ such that ${\leqslant_i} \subseteq E_i$, an order automorphism $\alpha_i$ of $\left\langle X_i, \leqslant_i\right\rangle$ such that $\alpha_i \subseteq E_i$, and an embedding $\varphi_i$ from $\mathbf{A}_i$ into 
$\mathfrak{D}\left(\left\langle E_i, \preceq_i\right\rangle\right)$.
We may assume that $X_i \neq \varnothing$ for all $i \in I$ since the direct product of a family of algebras from $\mathsf{EDInFL}$ is isomorphic
to another such product from which all factors of the form $\mathfrak{D}\left(\left\langle \varnothing, \varnothing\right\rangle\right)$ have
been deleted. Now, for each $[a] \in A/\theta_{\mathcal{F}}$, define
\[
\left([x], [y]\right) \in R_{[a]} \qquad \textnormal{iff} \qquad \left\{i \in I \mid \left(x(i), y(i)\right) \in \varphi_i\left(a(i)\right)\right\} \in \mathcal{F}.
\]

\begin{lemma}\label{lem:Ra-upset}
For any $[a] \in A/\theta_{\mathcal{F}}$, the set $R_{[a]}$ is an upset of $\left\langle E_Y, \preceq_Y\right\rangle$. 
\end{lemma}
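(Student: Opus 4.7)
The plan is to unpack the definitions of $\preceq_Y$ and of $R_{[a]}$, then transfer the upset property from each fibre $\varphi_i(a(i))$ to the ultraproduct via standard filter bookkeeping of \L o\'s type. A brief preliminary step is to check that $R_{[a]}$ is well defined, i.e. that membership does not depend on the choice of representatives of $[x]$, $[y]$ or $[a]$; this uses that each $\varphi_i$ is a function and that the equalisers $\llbracket x = x'\rrbracket$, $\llbracket y = y'\rrbracket$, $\llbracket a = a'\rrbracket$ lie in $\mathcal{F}$, so intersecting with them preserves membership in $\mathcal{F}$. I would also note that $R_{[a]} \subseteq E_Y$: if $([x],[y]) \in R_{[a]}$ then the witnessing set $S_1 := \{i \in I \mid (x(i),y(i)) \in \varphi_i(a(i))\}$ lies in $\mathcal{F}$, and since each $\varphi_i(a(i)) \subseteq E_i$, the set $\{i \in I \mid (x(i),y(i)) \in E_i\}$ contains $S_1$ and hence is in $\mathcal{F}$.

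For the upset property itself, I would assume $([x],[y]) \in R_{[a]}$ and $([x],[y]) \preceq_Y ([u],[v])$. The latter, by definition of $\preceq_Y$ on $E_Y$ induced from $\leqslant_Y$, means $[u] \leqslant_Y [x]$ and $[y] \leqslant_Y [v]$, so the sets $S_2 := \{i \in I \mid u(i) \leqslant_i x(i)\}$ and $S_3 := \{i \in I \mid y(i) \leqslant_i v(i)\}$ both lie in $\mathcal{F}$. On the intersection $S_1 \cap S_2 \cap S_3$, which is in $\mathcal{F}$ because $\mathcal{F}$ is closed under finite intersections, we have pointwise $(x(i), y(i)) \preceq_i (u(i), v(i))$ together with $(x(i), y(i)) \in \varphi_i(a(i))$. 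Because $\mathbf{A}_i \in \mathsf{RDInFL}$ and $\varphi_i$ maps into $\mathfrak{D}(\langle E_i, \preceq_i\rangle)$, each $\varphi_i(a(i))$ is an upset of $\langle E_i, \preceq_i\rangle$ (by Theorem~\ref{thm:InFL-algebra}), so $(u(i),v(i)) \in \varphi_i(a(i))$ for every $i$ in this intersection. Upward closure of $\mathcal{F}$ then gives $\{i \in I \mid (u(i),v(i)) \in \varphi_i(a(i))\} \in \mathcal{F}$, i.e. $([u],[v]) \in R_{[a]}$, as required.

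There is no real obstacle here: the content is just careful translation of the coordinatewise upset condition to the ultraproduct, and the only thing to keep track of is that the three relevant sets, $S_1$, $S_2$ and $S_3$, are all in $\mathcal{F}$ so that their intersection is as well. The analogous coordinatewise arguments will also be invoked in the main closure result Theorem~\ref{thm:Pu(RDInFL)=RDInFL} to show that the assignment $[a] \mapsto R_{[a]}$ preserves the DInFL-operations and is injective, but here we isolate only the structural fact that each $R_{[a]}$ is an element of $\mathsf{Up}(\langle E_Y, \preceq_Y\rangle)$.
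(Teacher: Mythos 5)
Your proposal is correct and follows essentially the same route as the paper's proof: both intersect the three $\mathcal{F}$-sets witnessing $([x],[y])\in R_{[a]}$, $[u]\leqslant_Y[x]$ and $[y]\leqslant_Y[v]$, and then push the coordinatewise upset property of each $\varphi_i(a(i))$ through upward closure of the ultrafilter. The extra remarks on well-definedness and on $R_{[a]}\subseteq E_Y$ are harmless additions not present in the paper's proof of this lemma.
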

\begin{proof}
Let $\left([x], [y]\right) \in R_{[a]}$ and assume $\left([x], [y]\right) \preceq_Y \left([u], [v]\right)$. The first statement implies that we have $\left\{i \in I \mid \left(x(i), y(i)\right) \in \varphi_i(a(i))\right\} \in \mathcal{F}$. From the second part we get $[u] \leqslant_Y [x]$ and $[y] \leqslant_Y [v]$, which means $\left\{i \in I \mid u(i) \leqslant_i x(i)\right\} \in \mathcal{F}$ and $\left\{i \in I \mid y(i) \leqslant_i v(i)\right\} \in \mathcal{F}$. Since $\mathcal{F}$ is closed under taking intersections, we have
\[
J= 
\left\{i \in I \mid \left(x(i), y(i)\right) \in \varphi_i(a(i))\right\} \cap  \left\{i \in I \mid u(i) \leqslant_i x(i)\right\} \cap \left\{i \in I \mid y(i) \leqslant_i v(i)\right\} \in \mathcal{F}.
\]
If we can show that  
$J\subseteq \left\{i \in I \mid \left(u(i), v(i)\right) \in \varphi_i(a(i))\right\}$,
we can use the fact that $\mathcal{F}$ is closed under taking supersets to conclude that $\left\{i \in I \mid \left(u(i), v(i)\right) \in \varphi_i(a(i))\right\} \in \mathcal{F}$, which means $\left([u], [v]\right) \in R_{[a]}$. To this end, let $j\in J$.
Then $\left(x(j), y(j)\right) \in \varphi_j(a(j))$, $u(j) \leqslant_j x(j)$ and $y(j) \leqslant_j v(j)$. From the second and third part we obtain $\left(x(j), y(j)\right) \preceq_j \left(u(j), v(j)\right)$. Hence, since $\left(x(j), y(j)\right) \in \varphi_j(a(j))$ and $\varphi_j\left(a(j)\right)$ is an upset of $\left\langle E_j, \preceq_j\right\rangle$, we have $\left(u(j), v(j)\right) \in \varphi_j\left(a(j)\right)$. 
\end{proof}

With the above lemma in hand, we can define a function $\varphi: A/\theta_\mathcal{F} \to\mathsf{Up}\left(\langle E_Y, \preceq_Y\rangle \right)$ by setting
$\varphi\left([a]\right) = R_{[a]}$ for all 
$\left[a\right] \in A/\theta_{\mathcal{F}}$.

\begin{lemma}\label{lem:varphi-inj}
The map $\varphi$ is an injective function from 
$A/\theta_\mathcal{F}$  to $\mathsf{Up}\left(\langle E_Y, \preceq_Y\rangle \right)$.
\end{lemma}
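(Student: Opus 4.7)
The plan is to verify two things: first that $\varphi$ is well-defined on equivalence classes, and then that it is injective.

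For well-definedness, I would start from $[a] = [b]$ in $A/\theta_{\mathcal{F}}$, so that $K := \llbracket a = b\rrbracket \in \mathcal{F}$ and consequently $\varphi_i(a(i)) = \varphi_i(b(i))$ for every $i \in K$. For any pair $[x], [y] \in Y$, the two index sets $\{i : (x(i), y(i)) \in \varphi_i(a(i))\}$ and $\{i : (x(i), y(i)) \in \varphi_i(b(i))\}$ then coincide on $K$, so by closure of $\mathcal{F}$ under finite intersection with $K$ and under supersets, one of these sets lies in $\mathcal{F}$ iff the other does. Hence $R_{[a]} = R_{[b]}$. (An analogous argument handles independence from the representatives chosen for $[x]$ and $[y]$, which is already implicit in Lemma~\ref{lem:Ra-upset}.)

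For injectivity, suppose $[a] \neq [b]$. Since $\mathcal{F}$ is an ultrafilter and $\llbracket a = b\rrbracket \notin \mathcal{F}$, its complement $K := \{i \in I : a(i) \neq b(i)\}$ lies in $\mathcal{F}$. Each $\varphi_i$ being an embedding, $\varphi_i(a(i)) \neq \varphi_i(b(i))$ for all $i \in K$, so for each such $i$ at least one of $\varphi_i(a(i)) \setminus \varphi_i(b(i))$ and $\varphi_i(b(i)) \setminus \varphi_i(a(i))$ is nonempty. I would then split $K = K_1 \cup K_2$ with
$$K_1 := \{i \in K : \varphi_i(a(i)) \setminus \varphi_i(b(i)) \neq \varnothing\}, \qquad K_2 := \{i \in K : \varphi_i(b(i)) \setminus \varphi_i(a(i)) \neq \varnothing\},$$
and invoke the ultrafilter property to deduce that at least one of $K_1, K_2$ belongs to $\mathcal{F}$; by relabelling we may assume $K_1 \in \mathcal{F}$.

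Using the axiom of choice, for each $i \in K_1$ I would pick a witness $(x_i, y_i) \in \varphi_i(a(i)) \setminus \varphi_i(b(i))$, and for $i \in I \setminus K_1$ any $(x_i, y_i) \in X_i \times X_i$ (possible since every $X_i$ is nonempty). The elements $x, y \in \prod_{i \in I} X_i$ so defined yield classes $[x], [y] \in Y$ with $\{i : (x(i), y(i)) \in \varphi_i(a(i))\} \supseteq K_1 \in \mathcal{F}$, giving $([x], [y]) \in R_{[a]}$, while $\{i : (x(i), y(i)) \in \varphi_i(b(i))\} \subseteq I \setminus K_1 \notin \mathcal{F}$, giving $([x], [y]) \notin R_{[b]}$. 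Hence $R_{[a]} \neq R_{[b]}$. The main obstacle is the splitting step: it uses crucially that $\mathcal{F}$ is an ultrafilter rather than merely a filter, which is what lets us extract witnesses with a consistent ``orientation'' on an $\mathcal{F}$-large set of indices; the rest is routine manipulation of the closure properties of $\mathcal{F}$.
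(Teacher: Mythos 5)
Your proof is correct and follows essentially the same route as the paper's: reduce to $\{i \mid \varphi_i(a(i)) \neq \varphi_i(b(i))\} \in \mathcal{F}$, split this set by ``orientation'' of the disagreement, use the ultrafilter property to keep one half, and build a witness pair $([x],[y])$ separating $R_{[a]}$ from $R_{[b]}$. The only (harmless) difference is that you additionally spell out independence of the choice of representatives, which the paper subsumes under its citation of Lemma~\ref{lem:Ra-upset}.
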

\begin{proof}
By Lemma~\ref{lem:Ra-upset} we know that $\varphi$ is well-defined. Now assume that $[a]\neq[b]$. This implies $(a,b) \notin \theta_\mathcal{F}$ and hence $\{ i \in I \mid a(i)=b(i) \} \notin \mathcal{F}$. We get $\{ i \in I \mid a(i)\neq b(i) \} \in \mathcal{F}$, and since each of the $\varphi_i$ is injective, it follows that $\{ i \in I \mid a(i)\neq b(i) \} \subseteq \{i\in I \mid \varphi_i(a(i)) \neq \varphi_i (b(i))\}$. But  $\mathcal{F}$ is closed under taking supersets, so $\{i\in I \mid \varphi_i(a(i)) \neq \varphi_i (b(i))\} \in \mathcal{F}$. Now 
$$\{i\in I \mid \varphi_i(a(i)) \neq \varphi_i (b(i))\} = \{ i \in I \mid \varphi_i(a(i))\nsubseteq \varphi_i(b(i))\} \cup
\{ i \in I \mid \varphi_i(a(i))\nsupseteq \varphi_i(b(i))\}\in \mathcal{F}.
$$
Hence, either 
$\{i \in I \mid \varphi_i(a(i))\nsubseteq \varphi_i(b(i))\} \in \mathcal{F}$ or $
\{i \in I \mid \varphi_i(a(i))\nsupseteq \varphi_i(b(i))\} \in \mathcal{F}$, since $\mathcal{F}$ is an ultrafilter. Let $V=\{ i \in I \mid \varphi_i(a(i))\nsubseteq \varphi_i(b(i))\}$ and assume $V \in \mathcal{F}$. Since $\mathcal{F}$ is a proper filter, $V \neq \varnothing$. For each $i\in V$, there exists $(s_i,t_i) \in E_i$ such that $(s_i,t_i) \in \varphi_i(a(i))$ and $(s_i,t_i) \notin \varphi_i(b(i))$. Since each $X_i \neq \varnothing$, we have $r_i \in X_i$ for all $i \in I$. Now define two elements of $\prod_{i \in I} X_i$ as follows: 
$$x(i)=\begin{cases} s_i & i \in V  \\ r_i  & i \notin V\end{cases}\qquad \text{ and }\qquad 
y(i)=\begin{cases} t_i & i \in V  \\ r_i  & i \notin V.\end{cases}
$$
It is not difficult to see that 
$$V \subseteq \{i \in I \mid (x(i),y(i)) \in \varphi_i(a(i))\} \quad \text{and} \quad  V \subseteq \{ i \in I \mid (x(i),y(i)) \notin \varphi_i(b(i))\},$$ and hence $\{i \in I \mid (x(i),y(i)) \in \varphi_i(a(i))\}\in \mathcal{F}$ and $\{ i \in I \mid (x(i),y(i)) \notin \varphi_i(b(i))\} \in \mathcal{F}$. 
The second part implies that $\{ i \in I \mid (x(i),y(i)) \in \varphi_i(b(i))\}^c \in \mathcal{F}$, which means $$\{ i \in I \mid (x(i),y(i)) \in \varphi_i(b(i))\} \notin \mathcal{F}.$$
Hence we have $([x],[y])\in \varphi([a])$ but $([x],[y]) \notin \varphi([b])$. That is, $\varphi([a])\neq\varphi([b])$.
The proof of the  case 
$\{ i \in I \mid \varphi_i(a(i))\nsupseteq \varphi_i(b(i))\}\in \mathcal{F}$
is similar. 
\end{proof}

The function $\varphi$ from Lemma~\ref{lem:varphi-inj} is now shown to be 
a homomorphism. 

\begin{theorem}\label{thm:Pu(RDInFL)=RDInFL}
$\mathbb{P}_{\mathbb{U}}\left(\mathsf{RDInFL}\right) = \mathsf{RDInFL}$. 
\end{theorem}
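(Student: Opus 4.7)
The inclusion $\mathsf{RDInFL} \subseteq \mathbb{P}_{\mathbb{U}}(\mathsf{RDInFL})$ is trivial: any $\mathbf{A} \in \mathsf{RDInFL}$ is isomorphic to the ultraproduct $\mathbf{A}^{I}/\theta_{\mathcal{F}}$ where $\mathcal{F}$ is any principal ultrafilter on any index set $I$. So the real content is showing that $\mathbb{P}_{\mathbb{U}}(\mathsf{RDInFL}) \subseteq \mathsf{RDInFL}$.

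The plan is to use the setup already assembled in the excerpt. Fix an ultrafilter $\mathcal{F}$ on $I$ and representable DInFL-algebras $\mathbf{A}_{i}$ for $i \in I$, each coming with a poset $\langle X_{i}, \leqslant_{i}\rangle$, an equivalence $E_{i} \supseteq {\leqslant_{i}}$, an order automorphism $\alpha_{i} \subseteq E_{i}$, and an embedding $\varphi_{i}\colon \mathbf{A}_{i}\hookrightarrow \mathfrak{D}(\langle E_{i},\preceq_{i}\rangle)$. Theorem~\ref{thm:ultraproductDInFL} delivers a poset $\mathbf{Y}$, an equivalence $E_{Y}$, and an automorphism $\alpha_{Y}$ satisfying the hypotheses of Theorem~\ref{thm:InFL-algebra}, hence a DInFL-algebra $\mathfrak{D}(\langle E_{Y},\preceq_{Y}\rangle) \in \mathsf{EDInFL}$. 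Lemmas~\ref{lem:Ra-upset} and~\ref{lem:varphi-inj} already give an injection $\varphi\colon \mathbf{A}/\theta_{\mathcal{F}} \to \mathsf{Up}(\langle E_{Y},\preceq_{Y}\rangle)$ by $\varphi([a]) = R_{[a]}$. What remains is to verify that $\varphi$ is a homomorphism, from which the theorem follows via Definition~\ref{def:RDInFL}.

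The verifications proceed operation by operation, each reduced to a membership-in-$\mathcal{F}$ computation that exploits a different property of the ultrafilter. For $\wedge$ and $1$, one uses only that $\mathcal{F}$ is closed under finite intersections and supersets; for $\vee$, one uses additionally that $\mathcal{F}$ is prime (ultra). The composition clause $\varphi([a]\cdot [b]) = \varphi([a]) \circ \varphi([b])$ is the main technical step: the inclusion $\supseteq$ is immediate from the fact that each $\varphi_{i}$ preserves composition together with closure of $\mathcal{F}$ under intersection, but the converse $\subseteq$ requires, for each $i$ in a set $J \in \mathcal{F}$, a witness $z_{i} \in X_{i}$ such that $(x(i),z_{i}) \in \varphi_{i}(a(i))$ and $(z_{i},y(i)) \in \varphi_{i}(b(i))$; one invokes the Axiom of Choice to assemble these into an element $z \in \prod_{i\in I}X_{i}$ (choosing arbitrary values off $J$), and then observes that both defining sets for $([x],[z]) \in R_{[a]}$ and $([z],[y]) \in R_{[b]}$ contain $J$ and so lie in $\mathcal{F}$.

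For the two linear negations, we use the equivalent formulation ${\sim}R = R^{c\smile}\circ \alpha$ and $-R = \alpha \circ R^{c\smile}$ from Lemma~\ref{lem:twiddle_minus_alternative}. The membership $([x],[y]) \in {\sim}R_{[a]}$ unpacks to the condition that $([y],\alpha_{Y}^{-1}([x])) \notin R_{[a]}$, i.e.\ $\{i : (y(i),\alpha_{i}^{-1}(x(i))) \in \varphi_{i}(a(i))\} \notin \mathcal{F}$; by the ultrafilter property this is equivalent to the complementary set being in $\mathcal{F}$, which, using that each $\varphi_{i}$ preserves ${\sim}$ and the same Lemma~\ref{lem:twiddle_minus_alternative} applied pointwise, matches exactly with $\{i : (x(i),y(i)) \in \varphi_{i}({\sim}a(i))\} \in \mathcal{F}$, i.e.\ $([x],[y]) \in \varphi([{\sim}a])$. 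The calculation for $-$ is symmetric. The main obstacle throughout is bookkeeping: the interplay between complements inside $E_{i}$, the converse, and the automorphisms $\alpha_{i}$ has to be tracked carefully so that the ultrafilter manipulations align with the pointwise applications of each $\varphi_{i}$; but once the composition clause is handled, the negation clauses are essentially a direct translation via $\mathcal{F}$ being prime.
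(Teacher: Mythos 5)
Your plan follows the paper's proof essentially step for step: the same structure $\langle E_Y,\preceq_Y\rangle$ from Theorem~\ref{thm:ultraproductDInFL}, the same map $\varphi([a])=R_{[a]}$ via Lemmas~\ref{lem:Ra-upset} and~\ref{lem:varphi-inj}, the same operation-by-operation verification, and in particular the same witness construction (choosing $z(i)=s_i$ on $V\in\mathcal{F}$ and arbitrarily off $V$) for the hard direction of the composition clause. The one place where your write-up as stated would not go through is the unpacking of the linear negation. From ${\sim}R=R^{c\smile}\circ\alpha$ one gets that $([x],[y])\in{\sim}R_{[a]}$ iff $\bigl(\alpha_Y^{-1}([y]),[x]\bigr)\in E_Y$ and $\bigl(\alpha_Y^{-1}([y]),[x]\bigr)\notin R_{[a]}$; you have written $\bigl([y],\alpha_Y^{-1}([x])\bigr)\notin R_{[a]}$, which applies $\alpha_Y^{-1}$ to the wrong coordinate and consequently will not match the pointwise condition $\bigl(\alpha_i^{-1}(y(i)),x(i)\bigr)\notin\varphi_i(a(i))$ that comes out of $\varphi_i({\sim}a(i))=(\varphi_i(a(i)))^{c\smile}\circ\alpha_i$. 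You have also dropped the conjunct $\bigl(\alpha_Y^{-1}([y]),[x]\bigr)\in E_Y$: since complements are taken inside $E_Y$ (respectively $E_i$), the negated membership alone is not the full content of $R_{[a]}^c$, and the paper's proof carries this conjunct through the chain of $\mathcal{F}$-equivalences explicitly (using closure under intersection to split it off). Both are repairable bookkeeping errors rather than gaps in the method, but they are exactly the kind of slip that the final sentence of your plan warns about, so fix them before writing out the details.
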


\begin{proof}
We have to show that the ultraproduct $\mathbf{A}/\theta_{\mathcal{F}}$ of $\left\{\mathbf{A}_i \mid i \in I\right\}$ is representable. 
Recall that each of the $\mathbf{A}_i$ is embedded via $\varphi_i$ into an algebra of the form  $\mathfrak{D}\left( \langle E_i, \preceq_i \rangle \right)$. Each $\langle E_i, \preceq_i\rangle$ comes from a poset  $\mathbf{X}_i$ equipped with the necessary $E_i$ and $\alpha_i$. The definitions   after Theorem~\ref{thm:ultra_prod_DInFl} describe  the structure  $\langle E_Y, \preceq_Y\rangle $,
and, by Theorem~\ref{thm:ultraproductDInFL}(iv), $\left\langle\mathsf{Up}\left(\left\langle E_Y, \preceq_Y\right\rangle\right), \cap, \cup, \circ, \leqslant_Y,  \sim, -\right\rangle$ is a distributive InFL-algebra. 

From Lemma~\ref{lem:varphi-inj} we have that $\varphi$ is injective. 
We will now show that $\varphi$ is a homomorphism from $\mathbf{A}/\theta_{\mathcal{F}}$ into $\mathfrak{D}\left(\left\langle E_Y, \preceq_Y\right\rangle\right)$. We first show that $\varphi$ preserves meets:
\begin{eqnarray*}
&& \left([x], [y]\right) \in \varphi\left([a] \wedge^{\mathbf{A}/\theta_{\mathcal{F}}} [b]\right) \\
& \textnormal{ iff } & \left([x], [y]\right) \in \varphi\left(\left[a \wedge^{\mathbf{A}} b\right]\right)\\
& \textnormal{ iff } & \left\{i \in I \mid \left(x(i), y(i)\right) \in \varphi_i\left(\left(a \wedge^\mathbf{A} b\right)(i)\right) \right\} \in \mathcal{F}\\
& \textnormal{ iff } & 
\left\{i \in I \mid \left(x(i), y(i)\right) \in \varphi_i\left(a(i) \wedge^{\mathbf{A}_i} b(i)\right) \right\} \in \mathcal{F}\\
& \textnormal{ iff } & \left\{i \in I \mid \left(x(i), y(i)\right) \in \varphi_i\left(a(i)\right) \cap \varphi_i\left(b(i)\right) \right\} \in \mathcal{F}\\
%& \textnormal{ iff } & \left\{i \in I \mid \left(x(i), y(i)\right) \in \varphi_i\left(a(i)\right) \textnormal{ and } \left(x(i), y(i)\right) \in \varphi_i\left(b(i)\right) \right\} \in \mathcal{F}\\
& \textnormal{ iff } & \left\{i \in I \mid \left(x(i), y(i)\right) \in \varphi_i\left(a(i)\right)\right\} \cap \left\{i \in I \mid \left(x(i), y(i)\right) \in \varphi_i\left(b(i)\right) \right\} \in \mathcal{F}\\
& \textnormal{ iff } & \left\{i \in I \mid \left(x(i), y(i)\right) \in \varphi_i\left(a(i)\right)\right\} \in \mathcal{F} \textnormal{ and } \left\{i \in I \mid \left(x(i), y(i)\right) \in \varphi_i\left(b(i)\right) \right\} \in \mathcal{F}\\
& \textnormal{ iff } & \left([x], [y]\right) \in \varphi\left([a]\right) \textnormal{ and } \left([x], [y]\right) \in \varphi\left([b]\right) \\
& \textnormal{ iff } & \left([x], [y]\right) \in \varphi\left([a]\right) \cap \varphi\left([b]\right).
\end{eqnarray*}
The fourth equivalence above follows from the fact that $\varphi_i$ preserves meets for all $i\in I$. The left-to-right implication of the the third-to-last equivalence follows from the fact that $\mathcal{F}$ is closed under taking supersets, and the right-to-left implication follows from the fact that $\mathcal{F}$ is closed under taking intersections. 

Using the facts that $\mathcal{F}$ is an ultrafilter and $\varphi_i$ preserves joins for all $i \in I$, we can also show that $\varphi$ preserves joins using
a proof dual to the case for meets. 

To show that $\varphi$ preserves the monoid operation, first let $\left([x], [y]\right) \in \varphi\left([a]\right)\circ \varphi\left([b]\right)$. Then there is some $[z]\in Y$ such that $\left([x], [z]\right) \in \varphi\left([a]\right)$ and $\left([z], [y]\right) \in \varphi\left([b]\right)$. Hence, we obtain $\left\{i \in I \mid \left(x(i), z(i)\right) \in \varphi_i\left(a(i)\right)\right\} \in \mathcal{F}$ and $\left\{i \in I \mid \left(z(i), y(i)\right) \in \varphi_i\left(b(i)\right)\right\} \in \mathcal{F}$, which means 
\[
\left\{i \in I \mid \left(x(i), z(i)\right) \in \varphi_i\left(a(i)\right)\right\} \cap \left\{i \in I \mid \left(z(i), y(i)\right) \in \varphi_i\left(b(i)\right)\right\} \in \mathcal{F}.
\]
Since $\left\{i \in I \mid \left(x(i), z(i)\right) \in \varphi_i\left(a(i)\right)\right\} \cap \left\{i \in I \mid \left(z(i), y(i)\right) \in \varphi_i\left(b(i)\right)\right\}$ is contained in the set \break $\left\{i \in I \mid \left(x(i), y(i)\right) \in \varphi_i\left(a(i)\right) \circ \varphi_i\left(b(i)\right)\right\}$ and $\mathcal{F}$ is closed under taking supersets, it follows that $\left\{i \in I \mid \left(x(i), y(i)\right) \in \varphi_i\left(a(i)\right) \circ \varphi_i\left(b(i)\right)\right\} \in \mathcal{F}$. Now $\varphi_i\left(a(i)\right) \circ \varphi_i\left(b(i)\right) = \varphi_i\left(a(i) \cdot^{\mathbf{A}_i} b(i)\right) = \varphi_i\left(\left(a \cdot^\mathbf{A} b\right)(i)\right)$, so we have 
$\left\{i \in I \mid \left(x(i), y(i)\right) \in \varphi_i\left(\left(a \cdot^\mathbf{A} b\right)(i)\right)\right\} \in \mathcal{F}$. This shows that \break $\left([x], [y]\right) \in \varphi\left(\left[a\cdot^\mathbf{A} b\right]\right) = \varphi\left([a] \cdot^{\mathbf{A}/\theta_\mathcal{F}} [b]\right)$. 

For the reverse inclusion, let $\left([x], [y]\right) \in \varphi\left([a] \cdot^{\mathbf{A}/\theta_\mathcal{F}} [b]\right) = \varphi\left(\left[a\cdot^\mathbf{A} b\right]\right)$. Then we have \break $\left\{i \in I \mid \left(x(i), y(i)\right) \in \varphi_i\left(\left(a \cdot^\mathbf{A} b\right)(i)\right)\right\} \in \mathcal{F},$ and so 
$$\left\{i \in I \mid \left(x(i), y(i)\right) \in \varphi_i\left(a(i)\right)\circ \varphi_i\left(b(i)\right)\right\} \in \mathcal{F}.$$ 
Let $V = \left\{i \in I \mid \left(x(i), y(i)\right) \in \varphi_i\left(a(i)\right) \circ \varphi_i\left(b(i)\right)\right\}$. Since $\mathcal{F}$ is a proper filter, $V \neq \varnothing$. For each $i \in V$, there exists some $s_i \in X_i$ such that $\left(x(i), s_i\right) \in \varphi_i\left(a(i)\right)$ and $\left(s_i, y(i)\right) \in \varphi_i\left(b(i)\right)$. Now define an element of $\prod_{i\in I}X_i$ as follows:
$$z(i)=
\begin{cases} s_i & i \in V  \\ x(i)  & i \notin V
\end{cases}
$$
It then follows that $
V \subseteq \left\{i \in I \mid \left(x(i), z(i)\right) \in \varphi_i\left(a(i)\right)\right\} \cap \left\{i \in I \mid \left(z(i), y(i)\right) \in \varphi_i\left(b(i)\right)\right\}
$
Hence, since $\mathcal{F}$ is closed under taking supersets, we have 
$$
\left\{i \in I \mid \left(x(i), z(i)\right) \in \varphi_i\left(a(i)\right)\right\} \cap \left\{i \in I \mid \left(z(i), y(i)\right) \in \varphi_i\left(b(i)\right)\right\} \in \mathcal{F}
$$
which means 
$\left\{i \in I \mid \left(x(i), z(i)\right) \in \varphi_i\left(a(i)\right)\right\} \in \mathcal{F}$ and $\left\{i \in I \mid \left(z(i), y(i)\right) \in \varphi_i\left(b(i)\right)\right\} \in \mathcal{F}$. 
This shows $\left([x], [z]\right) \in \varphi\left([a]\right)$ and $\left([z], [y]\right) \in \varphi\left([b]\right)$, and therefore $\left([x], [y]\right) \in \varphi\left([a]\right) \circ \varphi\left([b]\right)$. 

Next, we show that $\varphi$ preserves the monoid identity:
\begin{eqnarray*}
\left([x], [y]\right) \in \varphi\left(1^{\mathbf{A}/\theta_{\mathcal{F}}}\right) & \textnormal{ iff } & \left([x], [y]\right) \in \varphi\left(\left[1^\mathbf{A}\right]\right)\\
& \textnormal{ iff } & \left\{i \in I \mid \left(x(i), y(i)\right) \in \varphi_i\left(1^\mathbf{A}(i)\right) \right\} \in \mathcal{F}\\
& \textnormal{ iff } & \left\{i \in I \mid \left(x(i), y(i)\right) \in \varphi_i\left(1^{\mathbf{A}_i}\right) \right\} \in \mathcal{F}\\
& \textnormal{ iff } & \left\{i \in I \mid x(i) \leqslant_i y(i) \right\} \in \mathcal{F}\\
& \textnormal{ iff } & [x] \leqslant_Y [y].
\end{eqnarray*}
Using Lemma~\ref{lem:alphaY_inverse}, we can also show that $\varphi$ preserves $\sim$:
\begin{eqnarray*}
 &&\left([x], [y]\right) \in \varphi\left({\sim}^{\mathbf{A}/\theta_\mathcal{F}}[a]\right) = \varphi\left(\left[{\sim}^{\mathbf{A}}a\right]\right)\\
& \textnormal{ iff } & \left\{i \in I \mid \left(x(i), y(i)\right) \in \varphi_i\left(\left({\sim}^\mathbf{A}a\right)(i)\right)\right\} \in \mathcal{F} \\
& \textnormal{ iff } & \left\{i \in I \mid \left(x(i), y(i)\right) \in \varphi_i\left({\sim}^{\mathbf{A}_i}a(i)\right)\right\} \in \mathcal{F} \\
& \textnormal{ iff } & \left\{i \in I \mid \left(x(i), y(i)\right) \in {\sim}\varphi_i\left(a(i)\right)\right\} \in \mathcal{F} \\
& \textnormal{ iff } & \left\{i \in I \mid \left(x(i), y(i)\right) \in \left(\varphi_i\left(a(i)\right)\right)^{c\smile}\circ \alpha_i\right\} \in \mathcal{F} \\
& \textnormal{ iff } & \left\{i \in I \mid \left(x(i), \alpha^{-1}_{i}\left(y(i)\right)\right) \in \left(\varphi_i\left(a(i)\right)\right)^{c\smile}\right\} \in \mathcal{F} \\
& \textnormal{ iff } & \left\{i \in I \mid \left(\alpha^{-1}_{i}\left(y(i)\right), x(i)\right) \in \left(\varphi_i\left(a(i)\right)\right)^{c}\right\} \in \mathcal{F}. 
\end{eqnarray*}
Again recalling that the complement 
$\left(\varphi_i\left(a(i)\right)\right)^{c}$ 
is taken in $E_i$, we get: 
\begin{eqnarray*}
&& \left\{i \in I \mid \left(\alpha^{-1}_{i}\left(y(i)\right), x(i)\right) \in \left(\varphi_i\left(a(i)\right)\right)^{c}\right\} \in \mathcal{F} \\
& \textnormal{ iff } & \left\{i \in I \mid \left(\alpha^{-1}_{i}\left(y(i)\right), x(i)\right) \in E_i \textnormal{ and } \left(\alpha^{-1}_{i}\left(y(i)\right), x(i)\right) \notin \varphi_i\left(a(i)\right)\right\} \in \mathcal{F} \\
& \textnormal{ iff } & \left\{i \in I \mid \left(\alpha^{-1}_{i}\left(y(i)\right), x(i)\right) \in E_i\right\} \cap \left\{ i \in I \mid  \left(\alpha^{-1}_{i}\left(y(i)\right), x(i)\right) \notin \varphi_i\left(a(i)\right)\right\} \in \mathcal{F} \\
& \textnormal{ iff } & \left\{i \in I \mid \left(\alpha^{-1}_{i}\left(y(i)\right), x(i)\right) \in E_i\right\} \in \mathcal{F} \textnormal{ and }  \left\{ i \in I \mid  \left(\alpha^{-1}_{i}\left(y(i)\right), x(i)\right) \notin \varphi_i\left(a(i)\right)\right\} \in \mathcal{F} \\
& \textnormal{ iff } & \left\{i \in I \mid \left(\alpha^{-1}_{i}\left(y(i)\right), x(i)\right) \in E_i\right\} \in \mathcal{F} \textnormal{ and }  \left\{ i \in I \mid  \left(\alpha^{-1}_{i}\left(y(i)\right), x(i)\right) \in \varphi_i\left(a(i)\right)\right\}^c \in \mathcal{F} \\
& \textnormal{ iff } & \left\{i \in I \mid \left(\alpha^{-1}_{i}\left(y(i)\right), x(i)\right) \in E_i\right\} \in \mathcal{F} \textnormal{ and }  \left\{ i \in I \mid  \left(\alpha^{-1}_{i}\left(y(i)\right), x(i)\right) \in \varphi_i\left(a(i)\right)\right\} \notin \mathcal{F} \\
& \textnormal{ iff } & \left\{i \in I \mid \left(\left(\alpha^{-1}_X\left(y\right)\right)(i), x(i)\right) \in E_i\right\} \in \mathcal{F} \textnormal{ and }  \left\{ i \in I \mid  \left(\left( \alpha^{-1}_X\left(y\right)\right)(i), x(i)\right) \in \varphi_i\left(a(i)\right)\right\} \notin \mathcal{F} \\
& \textnormal{ iff } & \left( [\alpha^{-1}_X(y)], [x]\right) \in E_Y \textnormal{ and } \left([\alpha^{-1}_X(y)], [x]\right) \notin \varphi\left([a]\right).
\end{eqnarray*}
Using the fact that the complement $\left(\varphi([a])\right)^c$ is taken in $E_Y$, we have  
\begin{eqnarray*}
\left( [\alpha^{-1}_X(y)], [x]\right) \in E_Y \textnormal{ and } \left([\alpha^{-1}_X(y)], [x]\right) \notin \varphi\left([a]\right)
& \textnormal{ iff } & \left([x], [\alpha^{-1}_X(y)]\right) \in \left(\varphi\left([a]\right)\right)^{c\smile}\\
& \textnormal{ iff } & \left([x], \alpha^{-1}_Y\left([y]\right)\right) \in \left(\varphi\left([a]\right)\right)^{c\smile}\\
& \textnormal{ iff } & \left([x], [y]\right) \in \left(\varphi\left([a]\right)\right)^{c\smile} \circ \alpha_Y = {\sim}\varphi\left([a]\right).
\end{eqnarray*}
The proof that $\varphi$ preserves $-$ is similar, and uses the definition of $\alpha_Y$. 
\end{proof}

Theorem~\ref{thm:Pu(RDInFL)=RDInFL} finds an immediate application in this final subsection. By combining it with  Definition~\ref{def:RSM}, we get the following result:  
\begin{corollary}\label{cor:}
$\mathbb{P_U}(\mathsf{RSM})=\mathsf{RSM}$.    
\end{corollary}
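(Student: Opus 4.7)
The inclusion $\mathsf{RSM} \subseteq \mathbb{P_U}(\mathsf{RSM})$ is immediate: any algebra $\mathbf{A}$ is isomorphic to the ultraproduct $\mathbf{A}^I/\mathcal{F}$ for any ultrafilter $\mathcal{F}$ on a set $I$, so representability of $\mathbf{A}$ trivially gives $\mathbf{A} \in \mathbb{P_U}(\mathsf{RSM})$. The work lies in the reverse inclusion.

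For $\mathbb{P_U}(\mathsf{RSM}) \subseteq \mathsf{RSM}$, my plan is to reduce to Theorem~\ref{thm:Pu(RDInFL)=RDInFL} by showing that the direct reduct construction commutes with ultraproducts. Let $\{\mathbf{A}_i \mid i \in I\}$ be a family of representable Sugihara monoids and $\mathcal{F}$ an ultrafilter on $I$. By Definition~\ref{def:RSM}, for each $i \in I$ there is a $\mathbf{B}_i \in \mathsf{RDInFL}$ such that $\mathbf{A}_i \cong (\mathbf{B}_i)_r$, where $(\,\cdot\,)_r$ denotes the direct reduct (Definition~\ref{def:direct}). Form the ultraproduct $\mathbf{B}:=\prod_{i\in I}\mathbf{B}_i/\theta_{\mathcal{F}}$. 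By Theorem~\ref{thm:Pu(RDInFL)=RDInFL}, $\mathbf{B} \in \mathsf{RDInFL}$, so its direct reduct $\mathbf{B}_r$ is a witness of representability for anything it is isomorphic to.

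The key step is then to verify that $\prod_{i\in I}(\mathbf{B}_i)_r/\theta_{\mathcal{F}} \cong \mathbf{B}_r$. The two algebras share the same underlying set and the operations $\wedge,\vee,\cdot,1,{\sim}$ are interpreted identically on both sides (they are present in the signature of $\mathbf{B}$ and of each $(\mathbf{B}_i)_r$, and ultraproducts of these signatures are computed coordinatewise). The only operation that needs checking is $\to$. In the direct reduct, $a\to b := {\sim}(-b\cdot a)$, so $\to$ is a term in the DInFL-signature. Since term operations are preserved by the natural projection $\prod_{i\in I}\mathbf{B}_i \to \mathbf{B}$, we have, for $[a],[b] \in B$,
\[
[a]\to^{\mathbf{B}_r}[b] \;=\; {\sim}^{\mathbf{B}}\!\bigl(-^{\mathbf{B}}[b]\cdot^{\mathbf{B}}[a]\bigr) \;=\; \bigl[\,i\mapsto {\sim}^{\mathbf{B}_i}\!(-^{\mathbf{B}_i}b(i)\cdot^{\mathbf{B}_i}a(i))\,\bigr] \;=\; \bigl[\,i\mapsto a(i)\to^{(\mathbf{B}_i)_r}b(i)\,\bigr],
\]
which coincides with the value of $\to$ computed in $\prod_{i\in I}(\mathbf{B}_i)_r/\theta_{\mathcal{F}}$. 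Hence the identity map on the underlying set is an isomorphism between the two algebras.

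Combining this with the chosen isomorphisms $\mathbf{A}_i \cong (\mathbf{B}_i)_r$ (which lift componentwise to an isomorphism of ultraproducts), we conclude
\[
\prod_{i\in I}\mathbf{A}_i/\theta_{\mathcal{F}} \;\cong\; \prod_{i\in I}(\mathbf{B}_i)_r/\theta_{\mathcal{F}} \;\cong\; \mathbf{B}_r,
\]
and since $\mathbf{B} \in \mathsf{RDInFL}$, this ultraproduct is in $\mathsf{RSM}$. No step presents a serious obstacle; the only point demanding care is checking that the definition of $\to$ is genuinely a term in the DInFL-signature (which it is, by Definition~\ref{def:direct}), so that ultraproducts preserve it.
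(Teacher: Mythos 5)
Your proposal is correct and follows essentially the same route as the paper: both reduce the statement to Theorem~\ref{thm:Pu(RDInFL)=RDInFL} and use the fact that $\to$ is term-definable from $\cdot$, ${\sim}$ and $-$ to see that taking direct reducts commutes with the ultraproduct (the paper merely packages your two isomorphisms into a single map $\overline{\varphi}$ and verifies operation preservation coordinatewise). One small slip on the trivial inclusion: an infinite algebra is generally \emph{not} isomorphic to its ultrapower over an arbitrary ultrafilter, so you should take a principal ultrafilter there (or invoke the paper's observation that $\mathsf{RSM}$ is closed under $\mathbb{I}$).
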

\begin{proof}
Consider $\{\, \mathbf{A}_i \mid i \in I \,\} \subseteq \mathsf{RSM}$. Each of the $\mathbf{A}_i$ is isomorphic to a direct reduct $\left(\mathbf{B}_i\right)_r$ 
of an algebra $\mathbf{B}_i \in \mathsf{RDInFL}$. That is, for each $i \in I$, there is a map $\varphi_i : A_i \to B_i$ which is an isomorphism from $\mathbf{A}_i$ to $\left(\mathbf{B}_i\right)_r$.

Let $\mathcal{F}$ be an ultrafilter on $I$ and recall that $\mathbf{A}$ and $\mathbf{B}$ are the direct products of the $\mathbf{A}_i$ and $\mathbf{B}_i$, respectively. 
(We denote by $A$ and $B$ the underlying sets of $\mathbf{A}$ and $\mathbf{B}$ respectively.)
By Theorem~\ref{thm:Pu(RDInFL)=RDInFL}, 
we have that $\mathbf{B}/\theta_{\mathcal{F}} \in \mathsf{RDInFL}$. We will show that 
$\mathbf{A}/\theta_\mathcal{F}$ is isomorphic to $\left(\mathbf{B}/\theta_\mathcal{F}\right)_r$.  

Let $ \varphi: A \to B$ be defined by $\varphi(a)(i)=\varphi_i(a(i))$. Clearly $\varphi$ is a bijective map. Next, consider $\overline{\varphi}: A/\theta_\mathcal{F} \to 
B/\theta_\mathcal{F}$, defined 
by $\overline{\varphi}([a])=[\varphi(a)]$. 
We show that $\overline{\varphi}$ is bijective. For $[b] \in B/\theta_{\mathcal{F}}$, 
since $b \in B$, there exists $a \in A$ with $\varphi(a)=b$. So $\overline{\varphi}\left( [a]\right)=\left[ \varphi(a) \right]=[b]$. If $a,c \in A$ with $[a]\neq [c]$, then $(a,c) \notin \theta_\mathcal{F}$. Thus
$\{\, i \in I \mid a(i) =c(i)\,\} \notin \mathcal{F}$, so 
$\{\, i \in I \mid a(i) \neq c(i)\,\} \in \mathcal{F}$. Since the $\varphi_i$ are injective, if $a(i)\neq c(i)$, then $\varphi_i(a(i))\neq \varphi_i (c(i))$ and so $\varphi(a)(i) \neq \varphi(c)(i)$. Hence 
$\{\, i \in I \mid a(i) \neq c(i) \,\} \subseteq \{\,i \in I \mid \varphi(a)(i) \neq \varphi(c)(i)\,\}$ so the latter set is in $\mathcal{F}$. Now 
$\{\,i\in I \mid \varphi(a)(i) = \varphi(c)(i)\,\} \notin \mathcal{F}$, so $(\varphi(a),\varphi(c)) \notin \theta_\mathcal{F}$ and therefore $[\varphi(a)]\neq [\varphi(c)]$, i.e. 
$\overline{\varphi}\left( [a] \right) \neq \overline{\varphi}\left( [c] \right)$.

Next, we will show that  $\overline{\varphi}$ preserves the fusion operation. For $[a], [c] \in A/\theta_{\mathcal{F}}$, we have 
$$\overline{\varphi}\left( [a] \cdot^{\mathbf{A}/\theta_{\mathcal{F}}}[c]\right) = \overline{\varphi}\left( \left[ a \cdot^{\mathbf{A}} c \right] \right) = \left[ \varphi \left(a \cdot^{\mathbf{A}} c\right) \right]. $$ 
For each $i\in I$, 
$\varphi_i : \mathbf{A}_i \to (\mathbf{B}_i)_r$ is an isomorphism. 
Also,  
$\cdot^{(\mathbf{B}_i)_r}=\cdot^{\mathbf{B}_i}$.  Hence for any $i \in I$: 
$$ \varphi(a \cdot^{\mathbf{A}} c)(i)= \varphi_i\left((a \cdot^{\mathbf{A}} c)(i)\right)=\varphi_i \left(a(i) \cdot^{\mathbf{A}_i} c(i) \right) 
=
\varphi_i(a(i))  \cdot^{(\mathbf{B}_i)_r} \varphi_i( c(i))  
=
\varphi_i(a(i))  \cdot^{\mathbf{B}_i} \varphi_i(c(i)). 
$$
Also observe that 
$\varphi_i(a(i))  \cdot^{\mathbf{B}_i} \varphi_i(c(i))  = 
\left( \varphi(a)\cdot^{\mathbf{B}} \varphi(c)\right)(i)
$. Using this in the last step below, we can now see that 
$\overline{\varphi}\left( [a]\right)
\cdot^{(\mathbf{B}/\theta_\mathcal{F})_r}
\overline{\varphi}\left([c]\right)=
\overline{\varphi}\left( [a]\right)
\cdot^{\mathbf{B}/\theta_\mathcal{F}}
\overline{\varphi}\left([c]\right)
=
[\varphi(a)]
\cdot^{\mathbf{B}/\theta_\mathcal{F}}
[\varphi(c)]=
\left[ \varphi(a) \cdot^{\mathbf{B}} \varphi(c) \right]=\left[ \varphi(a \cdot^{\mathbf{A}} c) \right]$ 
and hence 
$\overline{\varphi}\left( [a] \cdot^{\mathbf{A}/\theta_{\mathcal{F}}}[c]\right)=
\overline{\varphi}\left( [a]\right)
\cdot^{(\mathbf{B}/\theta_\mathcal{F})_r}
\overline{\varphi}\left([c]\right)$. 
Using the same approach, it is straightforward to show that $\overline{\varphi}$ preserves the operations $\wedge^{\mathbf{A}/\theta_{\mathcal{F}}}$,
$\vee^{\mathbf{A}/\theta_{\mathcal{F}}}$, $1^{\mathbf{A}/\theta_{\mathcal{F}}}$
and 
${\sim}^{\mathbf{A}/\theta_{\mathcal{F}}}$.

Now consider the operation 
$\to^{\mathbf{A}/\theta_{\mathcal{F}}}$ 
on $\mathbf{A}/\theta_{\mathcal{F}}$. 
For $[a]$, $[c] \in A/\theta_{\mathcal{F}}$, we have 

$$
\overline{\varphi} \left( [a]\to^{\mathbf{A}/\theta_{\mathcal{F}}} [c] \right) 
=\overline{\varphi}\left( [a \to^{\mathbf{A}} c] \right) =
\left[\varphi(a\to^{\mathbf{A}}c)
\right].
$$

For $i\in I$ and $b,d\in B_i$, recall that $b \to^{(\mathbf{B}_i)_r} d = {\sim}^{\mathbf{B}_i} \left( -^{\mathbf{B}_i} d \cdot^{\mathbf{B}_i} b \right)$. 
The third equality below uses the fact that   $\varphi_i : \mathbf{A}_i \to (\mathbf{B}_i)_r$ is an isomorphism. For any $i \in I$, we get: 
\begin{align*}
\varphi(a \to^{\mathbf{A}} c)(i)= \varphi_i\left((a \to^{\mathbf{A}} c)(i)\right)&=\varphi_i \left(a(i) \to^{\mathbf{A}_i} c(i) \right)\\ 
&=
\varphi_i(a(i))  \to^{(\mathbf{B}_i)_r} \varphi_i( c(i))  \\
&={\sim}^{\mathbf{B}_i} \left( 
-^{\mathbf{B}_i} \varphi_i(c(i)) \cdot^{\mathbf{B}_i} \varphi_i(a(i)) \right) \\
&= {\sim}^{\mathbf{B}} \left( -^\mathbf{B} \varphi(c) \cdot^{\mathbf{B}} \varphi(a) \right)(i). 
\end{align*}
Finally, we get: 
\begin{align*}
\overline{\varphi} \left( [a]\to^{\mathbf{A}/\theta_{\mathcal{F}}} [c] \right) 
=\overline{\varphi}\left( [a \to^{\mathbf{A}} c] \right) 
= \left[\varphi(a\to^{\mathbf{A}}c)
\right] 
&= \left[ {\sim}^{\mathbf{B}} \left( -^\mathbf{B} \varphi(c) \cdot^{\mathbf{B}} \varphi(a) \right) \right]\\
&= {\sim}^{\mathbf{B}/\theta_{\mathcal{F}}}\left[  \left( -^\mathbf{B} \varphi(c) \cdot^{\mathbf{B}} \varphi(a) \right) \right]\\
&= 
{\sim}^{\mathbf{B}/\theta_{\mathcal{F}}}\left(  \big[ -^\mathbf{B} \varphi(c)\big]  \cdot^{\mathbf{B}/\theta_\mathcal{F}} \big[ \varphi(a) \big]   \right)\\
&= 
{\sim}^{\mathbf{B}/\theta_{\mathcal{F}}}\left(   -^{\mathbf{B}/\theta_{\mathcal{F}}} \left[\varphi(c)\right]  \cdot^{\mathbf{B}/\theta_\mathcal{F}} \left[ \varphi(a) \right]   \right)\\
&= 
\left[ \varphi(a)\right ] \to^{(\mathbf{B}/\theta_\mathcal{F})_r}
\left[ \varphi(c) \right]\\
&= \overline{\varphi}\left([a]\right) \to^{(\mathbf{B}/\theta_{\mathcal{F}})_r} \overline{\varphi}\left([c]\right).
\end{align*}
\end{proof}

\subsection{Representing all Sugihara monoids}\label{subsec:RSM=SM}

In addition to $\mathsf{RSM}$ being closed under ultraproducts, it follows from the definition of $\mathsf{RDInFL}$ that $\mathsf{RSM}$ is closed under $\mathbb{I}$, $\mathbb{S}$ and $\mathbb{P}$. 
Next, we recall the following well-known theorem from universal algebra 
(cf. \cite[Theorem V-2.14]{BS2012}). 

\begin{theorem}\label{thm:alg-in-ultra}
Every algebra can be embedded in an ultraproduct of its finitely generated subalgebras. 
\end{theorem}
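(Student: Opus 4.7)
The plan is to construct an ultrafilter on a suitably chosen index set so that an embedding can recover every element of $A$ coordinatewise. Let $\mathbf{A}$ be an arbitrary algebra and let $I$ denote the collection of all non-empty finite subsets of $A$; for each $F \in I$ let $\mathbf{A}_F$ denote the subalgebra of $\mathbf{A}$ generated by $F$, so that the family $\{\mathbf{A}_F : F \in I\}$ enumerates the finitely generated subalgebras of $\mathbf{A}$. For each $a \in A$, set $\widehat{a} = \{F \in I : a \in F\}$. Since any finite intersection $\widehat{a_1}\cap\cdots\cap\widehat{a_n}$ contains the element $\{a_1,\dots,a_n\} \in I$, the family $\{\widehat{a} : a \in A\}$ has the finite intersection property and therefore extends to an ultrafilter $\mathcal{F}$ on $I$ by the ultrafilter lemma.

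Next, define $\varphi: A \to \prod_{F \in I} A_F$ by setting $\varphi(a)(F)=a$ whenever $a \in A_F$, and otherwise letting $\varphi(a)(F)$ be any fixed element of $A_F$. Composing with the canonical quotient onto the ultraproduct yields the map $\overline{\varphi}: A \to \prod_{F \in I}\mathbf{A}_F/\theta_\mathcal{F}$ given by $\overline{\varphi}(a) = [\varphi(a)]$, which is the candidate embedding.

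To check that $\overline{\varphi}$ is an injective homomorphism, fix an $n$-ary operation symbol $f$ and elements $a_1,\dots,a_n \in A$. The set $\widehat{a_1}\cap\cdots\cap\widehat{a_n}$ belongs to $\mathcal{F}$, and for every $F$ in this intersection all of $a_1,\dots,a_n$ lie in $A_F$, whence $f^{\mathbf{A}}(a_1,\dots,a_n) \in A_F$ and $\varphi(f^{\mathbf{A}}(a_1,\dots,a_n))(F)=f^{\mathbf{A}_F}(\varphi(a_1)(F),\dots,\varphi(a_n)(F))$. Hence the relevant equaliser contains a member of $\mathcal{F}$, so $\overline{\varphi}$ preserves $f$. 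For injectivity, if $a \neq b$ in $A$ then on $\widehat{a} \cap \widehat{b} \in \mathcal{F}$ we have $\varphi(a)(F)=a \neq b = \varphi(b)(F)$, so the equaliser $\llbracket \varphi(a) = \varphi(b)\rrbracket$ is disjoint from a member of $\mathcal{F}$ and thus excluded from $\mathcal{F}$, forcing $\overline{\varphi}(a)\neq \overline{\varphi}(b)$.

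No serious obstacle is anticipated. The only delicate point is that the arbitrary choices in the definition of $\varphi(a)(F)$ off $\widehat{a}$ must not influence the value in the ultraproduct; this is precisely why we insist that $\widehat{a} \in \mathcal{F}$ for every $a \in A$, since the homomorphism and injectivity arguments only examine coordinates lying in the intersection of finitely many such $\widehat{a}$'s, and that intersection is guaranteed to be in $\mathcal{F}$.
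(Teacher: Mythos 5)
Your proof is correct and is precisely the standard argument for this classical fact, which the paper does not prove itself but cites from Burris--Sankappanavar (Theorem V--2.14): index by finite subsets, use the sets $\widehat{a}$ to get the finite intersection property, extend to an ultrafilter, and check preservation and injectivity on large sets of coordinates. The one delicate point --- that the arbitrary choices of $\varphi(a)(F)$ off $\widehat{a}$ are washed out because every verification takes place on a finite intersection of the $\widehat{a}$'s, which lies in $\mathcal{F}$ --- is exactly the right thing to flag, and you handle it correctly.
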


The finitely generated subalgebras of $\mathbf{S}=
\langle \mathbb{Z},\wedge,\vee,\cdot,\to,1, {\sim}\rangle 
$ are the algebras $\mathbf{S}_{2n+1}$ for $n \in \omega$. For the infinite Sugihara monoid $\mathbf{S}^*=
\langle \mathbb{Z}{\setminus}\{0\},\wedge,\vee,\cdot,\to,1, {\sim}\rangle$, its 
finitely generated subalgebras are $\mathbf{S}_{2n+2}$ for $n \in \omega$. 
Lastly, recall that the quasivariety of Sugihara monoids is generated by 
$\{\mathbf{S},\mathbf{S}^*\}$ (cf.~\cite[Section~3]{OR2007}). 
We now have everything in place in order to prove our main result. 

\begin{theorem}\label{thm:allSMrep}
Every Sugihara monoid is representable. That is, $\mathsf{SM} \subseteq \mathsf{RSM}$.   
\end{theorem}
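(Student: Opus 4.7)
The plan is to combine the explicit relational representations of all finite Sugihara chains already built in the paper with the closure properties of $\mathsf{RSM}$. From Definition~\ref{def:RSM} and Definition~\ref{def:RDInFL}, the class $\mathsf{RSM}$ is closed under $\mathbb{I}$, $\mathbb{S}$ and $\mathbb{P}$, and by the corollary to Theorem~\ref{thm:Pu(RDInFL)=RDInFL} it is also closed under $\mathbb{P}_{\mathbb{U}}$. Since the quasivariety $\mathsf{SM}$ is generated by $\{\mathbf{S},\mathbf{S}^*\}$, Mal'cev's theorem gives $\mathsf{SM}=\mathbb{I}\mathbb{S}\mathbb{P}\mathbb{P}_{\mathbb{U}}(\{\mathbf{S},\mathbf{S}^*\})$, so it suffices to show $\mathbf{S},\mathbf{S}^*\in\mathsf{RSM}$.

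First I would verify that every finite Sugihara chain lies in $\mathsf{RSM}$. The two smallest chains $\mathbf{S}_2$ and $\mathbf{S}_3$ are handled in Examples~\ref{ex:S2} and~\ref{ex:S3}. For the even chains $\mathbf{S}_{2k+2}$ with $k\geqslant 1$, representability is Example~\ref{ex:Maddux2010}, which recasts Maddux's construction into the framework of Section~\ref{subsec:construct}. For the odd chains $\mathbf{S}_{2k+3}$ with $k\geqslant 1$, representability is precisely the content of Theorem~\ref{thm:main}. Thus every $\mathbf{S}_n$ with $n\geqslant 2$ is representable.

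Next, by Theorem~\ref{thm:alg-in-ultra}, the algebra $\mathbf{S}$ embeds into an ultraproduct of its finitely generated subalgebras, namely the odd chains $\{\mathbf{S}_{2n+1}\mid n\in\omega\}$; similarly $\mathbf{S}^*$ embeds into an ultraproduct of the even chains $\{\mathbf{S}_{2n+2}\mid n\in\omega\}$. Every factor in each ultraproduct lies in $\mathsf{RSM}$ by the previous step, so the ultraproducts themselves lie in $\mathsf{RSM}$ by the corollary to Theorem~\ref{thm:Pu(RDInFL)=RDInFL}, and therefore $\mathbf{S},\mathbf{S}^*\in\mathsf{RSM}$ by closure under $\mathbb{I}\mathbb{S}$. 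Combined with the quasivariety-generation fact from the first paragraph, this yields $\mathsf{SM}\subseteq\mathsf{RSM}$.

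There is no real obstacle left at this stage: the genuine technical work was already done in the explicit construction of Section~\ref{sec:construction} (odd chains), in Example~\ref{ex:Maddux2010} (even chains), and in the ultraproduct closure Theorem~\ref{thm:Pu(RDInFL)=RDInFL}, which in turn exploits that the data $(\langle X,\leqslant\rangle,E,\alpha)$ underlying a representation are first-order and hence amenable to \L o\'s's theorem.
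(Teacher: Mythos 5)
Your proposal is correct and follows essentially the same route as the paper: establish that every finite Sugihara chain lies in $\mathsf{RSM}$ via Examples~\ref{ex:S2},~\ref{ex:S3},~\ref{ex:Maddux2010} and Theorem~\ref{thm:main}, embed $\mathbf{S}$ and $\mathbf{S}^*$ into ultraproducts of their finitely generated subalgebras using Theorem~\ref{thm:alg-in-ultra}, and conclude via closure of $\mathsf{RSM}$ under $\mathbb{P_U}$, $\mathbb{I}$, $\mathbb{S}$ and $\mathbb{P}$. The only cosmetic difference is your appeal to Mal'cev's description $\mathbb{ISPP_U}$ of the generated quasivariety, which is unnecessary since, as in the paper, one only needs $\mathsf{SM}=\mathbb{ISP}(\mathbf{S},\mathbf{S}^*)$ once $\mathbf{S},\mathbf{S}^*\in\mathsf{RSM}$ is established.
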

\begin{proof}
We need to show that $\mathbb{ISP}(\mathbf{S},\mathbf{S}^*)=\mathsf{SM}$ is contained in $\mathsf{RSM}$. From
Theorem~\ref{thm:main}
and Examples~\ref{ex:S2},~\ref{ex:S3} and~\ref{ex:Maddux2010}, we know that $\{\mathbf{S}_i \mid i \in \mathbb{Z}^+\} \subseteq \mathsf{RSM}$. (The trivial algebra $\mathbf{S}_1$ is of course representable as it is a subalgebra of $\mathbf{S}_3$.)

Let $I= \{\,2k+1 \mid k \in \omega\,\}$ and  
$\mathbf{A}=\prod \{\,\mathbf{S}_i \mid i \in I\,\}$.  By Theorem~\ref{thm:alg-in-ultra}, there exists an ultrafilter $\mathcal{F}$ on $I$ and an embedding $e_1 : \mathbf{S} \hookrightarrow \mathbf{A}/\theta_{\mathcal{F}}$.  
Since $\mathbf{A}/\theta_{\mathcal{F}} \in \mathbb{P_U}(\mathsf{RSM})$ we get 
$\mathbf{S} \in \mathbb{ISP_U}(\mathsf{RSM})\subseteq \mathsf{RSM}$.

Next, let $J=\{\,2n+2\mid n \in \omega\,\}$ and $\mathbf{B}=\prod \{\,\mathbf{S}_j \mid j \in J \,\}$.  By Theorem~\ref{thm:alg-in-ultra}, there is an ultrafilter $\mathcal{G}$ on $J$ and an embedding 
$e_2: \mathbf{S}^* \hookrightarrow \mathbf{B}/\theta_{\mathcal{G}}$. Again, since $\mathbf{B}/\theta_{\mathcal{G}} \in\mathbb{P_U}(\mathsf{RSM})$ we get $\mathbf{S}^* \in \mathbb{ISP_U}(\mathsf{RSM})\subseteq \mathsf{RSM}$. 

Since $\mathsf{SM}=\mathbb{ISP}(\mathbf{S},\mathbf{S}^*)$ and 
$\{\mathbf{S},\mathbf{S}^*\} \subseteq \mathsf{RSM}$ we get 
$\mathsf{SM}=\mathbb{ISP}(\mathbf{S},\mathbf{S}^*) \subseteq \mathbb{ISP}(\mathsf{RSM})=\mathsf{RSM}$.
\end{proof}

\subsection*{Acknowledgements}
%%%%%%%%%%%%%%%%%%%%%%%%%%%%%%%%%%%%%%%%%%%%%%%%%%%%%%%%%
The authors are extremely grateful to the three anonymous referees for their 
very insightful and knowledgeable feedback on the earlier versions of this paper. Their comments enabled us to see that additional results could be possible, and this led to the development
of  Section~\ref{sec:RSM}. 
The first author would like to thank  Chapman University for its hospitality during a research visit in September/October 2023. We would also like to thank Jamie Wannenburg for useful discussions on this topic.

%%%%%%%%%%%%%%%%%%%%%%%%%%%%%%%%%%%%

\end{document}